\documentclass[11pt,a4paper]{amsart}
\usepackage{amssymb}
\usepackage{amsmath}
\usepackage{enumerate}
\usepackage{mathtools}
\usepackage{xcolor}
\usepackage{textcomp}
\usepackage{amsthm}
\usepackage{array}
\usepackage{bbm}
\usepackage[margin=2.35cm]{geometry}
\usepackage{float}
\usepackage{hyperref}
\usepackage{rotating}

\newcolumntype{L}{>{$}l<{$}}
\newcommand{\larry}{\langle}
\newcommand{\roger}{\rangle}
\renewcommand{\t}{\mathfrak{t}}
\newcommand{\F}{\mathbb{F}}
\newcommand{\g}{\mathfrak{g}}
\newcommand{\s}{\mathfrak{s}}

\newcommand{\T}{\mathcal{T}}
\newcommand{\D}{\mathbf{D}}

 \theoremstyle{plain}
\theoremstyle{definition}

\newtheorem{definition}{Definition}[section]
\newtheorem{hyp}{Hypotheses}[section]

\theoremstyle{remark}
\newtheorem{remark}[definition]{Remark}

\newtheorem{corollary}[definition]{Corollary}

\newtheorem{lemma}[definition]{Lemma}

\newtheorem{theorem}[definition]{Theorem}
\newtheorem{proposition}[definition]{Proposition}

\newcounter{claim}[definition]

\theoremstyle{remark}
\newtheorem{example}[definition]{Example}

\theoremstyle{remark}

\theoremstyle{definition}

\DeclareMathOperator{\Aut}{Aut}

\DeclareMathOperator{\PSL}{PSL}

\DeclareMathOperator{\SL}{SL}

\DeclareMathOperator{\Sym}{Sym}

\DeclareMathOperator{\V}{\mathbb{V}}

\renewcommand{\epsilon}{\varepsilon}
\renewcommand{\hat}{\widehat}

\title[Digraph codes]{Constructing linear codes\\ from digraphs and groups}

\author{Coen del Valle}\address{School of Mathematics and Statistics,
The Open University, UK}
\email{Coen.del-Valle@open.ac.uk}
\author{Cheryl E. Praeger}\address{Center for the Mathematics of Symmetry and Computation, University of Western
Australia, Australia}\email{Cheryl.Praeger@uwa.edu.au}
\date{\today}
\thanks{This work was initiated during a visit by CdV to the University of Western Australia supported by a Cheryl~E.~Praeger Visiting Research Fellowship. CdV also acknowledges that his work is supported by the Engineering and Physical Sciences Research Council (EPSRC)
[grant number EP/Z534742/1].  CP would like to thank the Isaac Newton Institute for Mathematical Sciences for the support and hospitality when some work on this paper was undertaken during the programme  \emph{Algebraic groups, geometry, invariants and related topics (GGI)} (supported by EPSRC grant no. EP/Z000580/1). 
CP also acknowledges that her work was partially supported by a grant from the Simons Foundation.}

\begin{document}
\begin{abstract}
    In 2012, Kaufman and Lubotzky constructed the first family of symmetric LDPC good codes. Their construction used \emph{Cayley codes}, as originally defined by Kaufman and Wigderson (2016). In this paper we present two generalisations to the Cayley code construction, which we call \emph{graph codes} and \emph{digraph codes}. We investigate both the algebraic, and combinatorial properties of these constructions and show that they possess the same desirable attributes as Cayley codes, but with added freedom. We analyse the relationship between the expansion properties of the ingredient (di)graphs and the parameters of the constructed codes; our analysis offers an improvement to the results of Kaufman and Lubotzky. As an application, we construct an infinite family of good digraph codes, and we propose a series of open problems.
\end{abstract}
\maketitle

{\begin{center}
    {\tiny\emph{Dedicated to Alex Lubotzky on the occasion of his 70th birthday with our admiration\\ and thanks for his beautiful contributions to groups, graphs, and codes.}}
\end{center}}

\section{Introduction}
An $[n,k,d]_q$-code is a dimension-$k$ subspace $C\leq \V:=\mathrm{GF}(q)^\Omega$ where $|\Omega|= n$ and each non-zero element of $C$ has (Hamming) weight at least $d$. A family of codes is called \emph{good} if both the rate $r(C):=k/n$ and relative distance $\delta(C):=d/n$ are bounded away from 0, and is \emph{low-density parity-check (LDPC)} if $C$ can be defined from constraints of bounded weight. The action of $\Sym(\Omega)$ on $\Omega$ naturally induces an action on $\mathbb{V}$; when the subgroup of $\Sym(\Omega)$ preserving $C$ is transitive (on $\Omega$) we call $C$ \emph{symmetric}. (These definitions will be restated rigorously in Section~\ref{sec:prelim}.)

 Given a group $G$, a subset $S\subseteq G\setminus\{1\}$ which is inverse-closed (that is, $S=\{\s^{-1} \mid \s \in S\}$), and a linear code $B$ `compatible' with $S$, Kaufman and Wigderson~\cite{KauWig} introduced the \emph{Cayley code} $\mathbf{C}(G,S,B)$. In 2012, Kaufman and Lubotzky~\cite{KaufLub} exploited some of the many nice properties of these codes to construct the first family of symmetric LDPC good codes, and a recent paper~\cite{AruPraRad} examined in more generality the properties of the class of Cayley codes. The desirable properties of Cayley graphs which make this construction possible (and so powerful) are also possessed by the wider class of \emph{vertex-transitive} (di)graphs. As such, we introduce two new constructions of linear codes based on this observation.

 The first construction is the \emph{graph code} (Definition~\ref{d:gcode}). Fix a finite field $\F$, a (finite, undirected) graph $\Gamma=(V,E)$ with a group $G$ acting transitively on its vertices, and a vertex $\alpha\in V$. Let $\mathcal{T}$ be a transversal (that is, a set of right coset representatives) for the point stabiliser $G_\alpha$ in $G$, and let $B\leq \F^{\Gamma(\alpha)}$ be a linear  code defined on the neighbourhood $\Gamma(\alpha)$ of $\alpha$ in $\Gamma$. In Section~\ref{sec:basicprops}, we define the graph code $\mathbf{G}_\Gamma(G,\T,\alpha,B)$, which specialises to the Cayley code in the case that the action of $G$ on $V$ is regular (see Proposition~\ref{prop:GtoC}).

 Our second construction is the \emph{digraph code} (Definition~\ref{def:digraph}), which is the natural directed analogue of the graph code. The digraph code construction takes as input a (finite) directed graph $\Gamma=(V,A)$ with a group $G$ acting transitively on $V$, as well as a vertex $\alpha\in V$, transversal $\T$ for $G_\alpha$ in $G$, and linear codes $B_1\leq \F^{\Gamma^+(\alpha)}$ and $B_2\leq\F^{\Gamma^-(\alpha)}$ defined on the out- and in-neighbourhoods of $\alpha$, respectively.

 Our first main result shows that, if we choose the input codes to be `compatible' with the action of $G$, then the output code will also be compatible with $G$---a stronger statement appears as Theorem~\ref{thm:AutomorphismsU}. To conserve space we shall state the result for the graph code, and flag Theorem~\ref{thm:AutomorphismsD}, which is the analogous result for digraph codes. Given a group $G$ acting on a set $\Omega$ and a subset $\Lambda\subseteq \Omega$ left invariant by $G$, we use $G^\Lambda$ to denote the permutation group induced by $G$ on $\Lambda$. Similarly, a group $G$ of automorphisms of a graph $(V,E)$ induces a group of permutations of $E$, denoted by $G^E\leq\mathrm{Sym}(E)$.

\begin{theorem}\label{thm:AutomorphismsUmain}
    Fix a finite field $\F$, a (finite, undirected) graph $\Gamma=(V,E)$ with a group $G$ of automorphisms acting transitively on $V$, and $\alpha\in V$. Let $\mathcal{T}$ be a transversal for $G_\alpha$ in $G$ with $1\in\T$, and let $B\leq \F^{\Gamma(\alpha)}$, a linear code. Set $C=\mathbf{G}_\Gamma(G,\T,\alpha,B)$. If $G_\alpha^{\Gamma(\alpha)}\leq \Aut(B)$, then $G^{E}\leq \Aut(C)$.
\end{theorem}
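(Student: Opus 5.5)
We first need the definition of the graph code, which the paper has not yet given, so we reconstruct what it must be by analogy with Cayley codes. For each $v\in V$ let $t_v\in\T$ be the unique element with $\alpha^{t_v}=v$. Then $t_v$ maps $\Gamma(\alpha)$ bijectively onto $\Gamma(v)$. The graph code should be
\[
C=\{\,c\in\F^E \mid \text{for every } v\in V,\ \text{the word } \beta\mapsto c(\{v,\beta^{t_v}\}) \text{ on } \Gamma(\alpha) \text{ lies in } B\,\}.
\]
In words, the restriction of $c$ to the edges at $v$, pulled back to $\Gamma(\alpha)$ via $t_v$, is a codeword of $B$. We take this local description as the working definition; the earlier propositions presumably give it in this form.

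The plan is to show $c^g\in C$ for every $c\in C$ and $g\in G$, where $c^g(e)=c(e^{g^{-1}})$. Fix $v\in V$ and put $u=v^{g^{-1}}$. For $\beta\in\Gamma(\alpha)$ we have
\[
c^g(\{v,\beta^{t_v}\})=c(\{u,\beta^{t_vg^{-1}}\}).
\]
Now $t_vg^{-1}$ maps $\alpha$ to $u$, and so does $t_u$. Hence $h:=t_vg^{-1}t_u^{-1}\in G_\alpha$, and $t_vg^{-1}=ht_u$. Therefore
\[
c^g(\{v,\beta^{t_v}\})=c(\{u,(\beta^{h})^{t_u}\})=w_u(\beta^h),
\]
where $w_u\in B$ is the local word of $c$ at $u$. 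So the local word of $c^g$ at $v$ is $w_u$ composed with the permutation $h^{\Gamma(\alpha)}$. This is the image of $w_u$ under $h^{\Gamma(\alpha)}$ (or its inverse, depending on convention), and it lies in $B$ because $G_\alpha^{\Gamma(\alpha)}\le\Aut(B)$ and $\Aut(B)$ is a group, so it contains inverses. Since $v$ was arbitrary, $c^g\in C$, and hence $G^E\le\Aut(C)$.

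The main obstacle is bookkeeping rather than any idea. We must match the paper's exact definition, in particular whether it uses right actions and whether the local map goes from $\Gamma(\alpha)$ to $\Gamma(v)$ or the reverse. We must also check that the action of $\Sym(E)$ on $\F^E$ is as claimed. Whichever form the definition takes, the key identity $t_vg^{-1}=ht_u$ with $h\in G_\alpha$ carries the argument. The hypothesis $1\in\T$ appears only to make the local word at $\alpha$ itself literally equal to $c$ restricted to $\Gamma(\alpha)$, so it plays no essential role.

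If the definition is instead via generalized parity checks from $B^\perp$, we would run the same computation on the check vectors. We would show that $G$ permutes the set of local constraints, using that $B^\perp$ is also $G_\alpha^{\Gamma(\alpha)}$-invariant. This last fact holds because permutations preserve the standard inner product.
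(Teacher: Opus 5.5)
Your proof is correct and is essentially the paper's argument: the paper writes $\mathfrak{t}\mathfrak{g}=\mathfrak{g}'\mathfrak{t}'$ with $\mathfrak{g}'\in G_\alpha$ and $\mathfrak{t}'\in\mathcal{T}$ (your $t_vg^{-1}=ht_u$), and then notes that the local word of $c^{\mathfrak{g}}$ at $\alpha^{\mathfrak{t}}$ is the $\mathfrak{g}'$-image of $\phi_{\mathfrak{t}'}\circ c\in B$, which lies in $B$. Your reconstructed definition coincides with Definition~\ref{d:gcode} (take $v=\alpha^{\mathfrak{t}}$), and your use of $e\mapsto c(e^{g^{-1}})$ instead of the paper's $e\mapsto (e^{\mathfrak{g}})f$ is harmless, because you check closure for every element of the group $G$.
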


 There are two key consequences of Theorem~\ref{thm:AutomorphismsUmain}. First, if we choose $B$ such that $G_\alpha^{\Gamma(\alpha)}\leq\Aut(B)$ then the graph code is invariant of the choice of $\T$ (Corollary~\ref{cor:Tindep}). Secondly, Theorem~\ref{thm:AutomorphismsUmain} allows us to construct symmetric LDPC codes, and this will be our second main result. Again, the coming theorem is a weak form of Corollary~\ref{cor:symmetry}, which also includes an analogous result for digraph codes.

 \begin{theorem}\label{thm:ldpc}
      Fix a finite field $\F$, a (finite, undirected) graph $\Gamma=(V,E)$ with a group $G$ of automorphisms acting transitively on $V$, and $\alpha\in V$. Let $\mathcal{T}$ be a transversal for $G_\alpha$ in $G$ with $1\in\T$, and let $B\leq \F^{\Gamma(\alpha)}$, a linear code. Set $C=\mathbf{G}_\Gamma(G,\T,\alpha,B)$. If $G_\alpha^{\Gamma(\alpha)}$ is a transitive subgroup of $\Aut(B)$ then $C$ is symmetric, and if $B^\perp$ is generated by a single orbit of $G_\alpha^{\Gamma(\alpha)}$, then $C^\perp$ is generated by the orbit $c^G$ for some $c\in C$ of weight at most $|\Gamma(\alpha)|$
 \end{theorem}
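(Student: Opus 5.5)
We have to guess the definition of the graph code. Presumably the coordinates are indexed by $E$, and for each $t\in\T$ the vertex $\alpha^t$ has neighbourhood $\Gamma(\alpha^t)=\Gamma(\alpha)^t$. A word $w\in\F^E$ lies in $C$ exactly when, for every $t\in\T$, the restriction of $w$ to the edges at $\alpha^t$, pulled back via $t$ to a word on $\Gamma(\alpha)$, lies in $B$. The plan is to deduce both parts from Theorem~\ref{thm:AutomorphismsUmain} together with an explicit description of the local parity checks.

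\emph{Symmetry.} Since $G_\alpha^{\Gamma(\alpha)}\le\Aut(B)$, Theorem~\ref{thm:AutomorphismsUmain} gives $G^E\le\Aut(C)$. So it suffices to show that $G$ is transitive on $E$. Let $\{\beta,\gamma\}\in E$. By vertex-transitivity some $g\in G$ maps $\beta$ to $\alpha$, so $\{\beta,\gamma\}^g=\{\alpha,\gamma^g\}$ with $\gamma^g\in\Gamma(\alpha)$. Because $G_\alpha$ is transitive on $\Gamma(\alpha)$, every such edge can then be moved to a fixed edge $\{\alpha,\delta\}$. Hence $G^E$ is transitive and $C$ is symmetric. (If the coordinates are instead indexed by arcs, the same argument applies to arcs.)

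\emph{Dual generation.} Write $B^\perp=\langle b^{G_\alpha}\rangle$ for some $b\in B^\perp\le\F^{\Gamma(\alpha)}$, and let $c\in\F^E$ be the word supported on the edges $\{\alpha,\delta\}$ with $\delta\in\Gamma(\alpha)$, taking the value $b_\delta$ there. Then $\mathrm{wt}(c)\le|\Gamma(\alpha)|$, and $c\in C^\perp$ because it is precisely a local check at $\alpha$ (using $1\in\T$). The statement writes $c\in C$; I read this as $c\in C^\perp$.

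First I show the inclusion $C^\perp\supseteq\langle c^G\rangle$. Since $G\le\Aut(C)$ acts by permutations, $G$ preserves $C^\perp$, so $c^G\subseteq C^\perp$.

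Next I show the reverse inclusion. By the definition, $C$ is the common kernel of the local checks. These are the words obtained by transporting each $b'\in B^\perp$ to the star at $\alpha^t$ via $t$, for $t\in\T$. So $C^\perp$ is spanned by these transported words. Each $b'$ is a combination of elements $b^h$ with $h\in G_\alpha$, so the transported $b'$ is a combination of the transports of $b^h$ via $t$. I then check that the transport of $b^h$ via $t$ equals $c^{ht}$. This holds because $ht$ maps the star at $\alpha$ to the star at $\alpha^t$, with $h$ acting on $\Gamma(\alpha)$ as in $G_\alpha^{\Gamma(\alpha)}$. Therefore $C^\perp\subseteq\langle c^G\rangle$.

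The main obstacle is this last compatibility check. One must verify carefully that the identification of the neighbourhood of $\alpha^t$ with $\Gamma(\alpha)$ used in the definition agrees with the action of $ht$ on edges, and that the action conventions (right actions, edge versus arc indexing) line up. The same bookkeeping is needed in the proof of Theorem~\ref{thm:AutomorphismsUmain}, so I would reuse that lemma-level identity. A secondary check is the transitivity hypothesis: for symmetry, transitivity of $G_\alpha$ on $\Gamma(\alpha)$ is what is needed, and this is exactly the assumption that $G_\alpha^{\Gamma(\alpha)}$ is a transitive subgroup of $\Aut(B)$.
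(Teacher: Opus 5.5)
Your proposal is correct and takes essentially the same route as the paper's proof of Corollary~\ref{cor:symmetry}(U). Symmetry comes from Theorem~\ref{thm:AutomorphismsUmain} together with edge-transitivity, and for the dual you build the same star-supported word $c$ from $b$, identify each local check at $\alpha^\t$ as a translate of $c$, and use the $G$-invariance of $C^\perp$ for the other inclusion. Your reading of ``$c\in C$'' as $c\in C^\perp$ is right, and the convention worry is harmless because, whichever way the action is composed, the transported check lies in $c^G$.
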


Our third main result allows us to restrict our attention to connected graphs.

\begin{theorem}\label{thm:conndecompmain}
      Fix a finite field $\F$. Let $\Gamma=(V,E)$ be a (finite, undirected) graph with a group $G$ of automorphisms acting transitively on $V$. Fix $\alpha\in V$ and let $\Sigma$ be the connected component of $\Gamma$ which contains $\alpha$. Let $\mathcal{T}$ be a transversal for $G_\alpha$ in the setwise stabiliser $G_{\{\Sigma\}}$ with $1\in\mathcal{T}$ and let $\mathcal{S}$ be a transversal for $G_{\{\Sigma\}}$ in $G$ with $1\in\mathcal{S}$. Then for any linear code $B\leq \F^{\Gamma(\alpha)}$ we have a direct sum decomposition \begin{equation}\label{eq:disconnectU}
      \mathbf{G}_\Gamma(G,\mathcal{TS},\alpha,B)=\bigoplus_{\s\in \mathcal{S}} \mathbf{G}_{\Sigma^\s}(G_{\{\Sigma^\s\}},\mathcal{T}^\s,\alpha^\s,B^\s),
      \end{equation}
    where  all direct factors are isomorphic.
\end{theorem}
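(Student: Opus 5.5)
The plan is to prove the decomposition directly from Definition~\ref{d:gcode}, reading a graph code as the set of $x\in\F^E$ whose local view at each vertex $\alpha^{t}$ ($t$ in the transversal), pulled back to $\Gamma(\alpha)$ via $t$, lies in $B$. I am assuming this form of the definition; the argument is sensitive to its exact conventions. The idea is that $\F^E$ splits as a direct sum over connected components, every local constraint only sees edges of one component, and so the constraints decouple component by component.

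\emph{Step 1 (group bookkeeping).} Since $G$ acts by automorphisms, it permutes the connected components of $\Gamma$, and it does so transitively because it is transitive on $V$. Hence the components are exactly the $\Sigma^{\s}$ for $\s\in\mathcal{S}$, and these are pairwise distinct.

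Next I check that $G_{\{\Sigma\}}$ is transitive on the vertex set of $\Sigma$. If $\beta\in\Sigma$ and $\alpha^g=\beta$, then $\Sigma^g$ is the component containing $\beta$, namely $\Sigma$, so $g\in G_{\{\Sigma\}}$. It follows that $G_\alpha\le G_{\{\Sigma\}}$ and that $\mathcal{TS}$ is a transversal for $G_\alpha$ in $G$ containing $1$, so the left-hand side of \eqref{eq:disconnectU} is defined. The vertices of $\Sigma^{\s}$ are exactly the $\alpha^{t\s}$ with $t\in\mathcal{T}$.

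Conjugating by $\s$ gives three further facts:
\begin{itemize}
\item $G_{\{\Sigma^\s\}}=(G_{\{\Sigma\}})^{\s}$ and $G_{\alpha^\s}=(G_\alpha)^{\s}$;
\item $\mathcal{T}^{\s}=\{\s^{-1}t\s \mid t\in\mathcal{T}\}$ is a transversal for $G_{\alpha^\s}$ in $G_{\{\Sigma^\s\}}$ containing $1$;
\item $B^{\s}\le \F^{\Gamma(\alpha^\s)}$ is the transport of $B$ along the bijection $\gamma\mapsto\gamma^{\s}$ from $\Gamma(\alpha)$ to $\Gamma(\alpha^\s)$.
\end{itemize}
Consequently every summand on the right of \eqref{eq:disconnectU} is a well-defined graph code on $\Sigma^\s$.

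\emph{Step 2 (decoupling).} Write $E=\bigsqcup_{\s}E(\Sigma^\s)$, so that $\F^E=\bigoplus_{\s}\F^{E(\Sigma^\s)}$. The constraint indexed by $t\s\in\mathcal{TS}$ involves only the edges $\{\alpha^{t\s},\gamma^{t\s}\}$ with $\gamma\in\Gamma(\alpha)$, and all of these lie in $E(\Sigma^\s)$.

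I then compare this with the constraint in $\mathbf{G}_{\Sigma^\s}(G_{\{\Sigma^\s\}},\mathcal{T}^\s,\alpha^\s,B^\s)$ indexed by $\s^{-1}t\s$. That element sends $\alpha^\s$ to $\alpha^{t\s}$, and it sends the neighbour $\gamma^\s$ to $\gamma^{t\s}$. So once $B^\s$ is identified with $B$ via $\s$, the two constraints are literally the same condition on the same coordinates.

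Therefore $x\in\mathbf{G}_\Gamma(G,\mathcal{TS},\alpha,B)$ if and only if, for every $\s$, the restriction $x|_{E(\Sigma^\s)}$ lies in the $\s$-th summand. This gives the direct sum decomposition.

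\emph{Step 3 (isomorphic factors).} The automorphism $\s$ maps $\Sigma$ onto $\Sigma^\s$ and so induces a coordinate bijection $E(\Sigma)\to E(\Sigma^\s)$. By the same computation as in Step 2, this bijection carries the constraints of the $\s=1$ factor exactly onto those of the $\s$-th factor. Hence it maps the first code onto the second, and all factors are permutation-equivalent.

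The main obstacle is purely notational: getting the right-action conventions for $B^\s$, $\mathcal{T}^\s$ and the pull-back in the definition to match, so that the constraints in Step 2 coincide exactly rather than up to some twisting. I would fix these conventions first, by checking the identity $\gamma^{t\s}=(\gamma^\s)^{\s^{-1}t\s}$ against the precise form of Definition~\ref{d:gcode}.
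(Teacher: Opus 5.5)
Your proposal is correct and takes essentially the same route as the paper. The paper proves the directed analogue (Theorem~\ref{thm:decompD}) and says the undirected case is identical: it uses the identity $\gamma^{\t\s}=(\gamma^\s)^{\s^{-1}\t\s}$ to match the constraint indexed by $\t\s$ with the one indexed by $\t^\s$ on $\Sigma^\s$, uses that the edge sets of the components partition $E\Gamma$, and obtains the isomorphism of the factors from the isomorphism lemma (Lemma~\ref{lem:isoU}) applied with $\rho=\s$, which is exactly your direct verification in Step~3.
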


As usual, we have an analogous result for digraph codes, namely Theorem~\ref{thm:decompD}. One could alternatively want codes which admit a direct sum decomposition into symmetric direct summands; this can be achieved from the digraph and graph code constructions, as shown by Theorems~\ref{thm:orbitaldecompD} and~\ref{thm:orbitaldecompU}.

With a view to constructing infinite families of good codes, we provide lower bounds on both the rate and relative distances of graph codes and digraph codes. Our bound for graph codes offers a modest improvement to the standard result for Cayley codes, and hence we are able to slightly improve the result~\cite[Theorem 11]{KaufLub} (see Corollary~\ref{cor:KauLuBound}). The following is a compressed version of Theorems~\ref{thm:distanceboundU} and~\ref{thm:ratebound}; the analogous statements for digraph codes are Theorems~\ref{thm:distanceboundD} and~\ref{thm:ratebound}.

\begin{theorem}\label{thm:mainstats}
    Fix a finite field $\F$, a (finite, undirected) graph $\Gamma=(V,E)$ with a group $G$ of automorphisms acting transitively on $V$, and $\alpha\in V$. Let $\mathcal{T}$ be a transversal for $G_\alpha$ in $G$ with $1\in\T$, and let $B\leq \F^{\Gamma(\alpha)}$, a linear code. Set $C=\mathbf{G}_\Gamma(G,\T,\alpha,B)$. Define $v=|\Gamma(\alpha)|$, let $M$ be the adjacency matrix of $\Gamma$, and let $\lambda$ be the absolute value of the eigenvalue of $M$ of second greatest magnitude. Then $r(C)\geq 2r(B)-1$ and if $\lambda<v$ then $$\delta(C)\geq \delta(B)\cdot \left(\frac{\delta(B)-(\lambda/v)}{1-(\lambda/v)}\right).$$
\end{theorem}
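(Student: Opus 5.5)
The plan is to work directly from Definition~\ref{d:gcode}. I read it as follows: $C\le \F^{E}$ consists of those $c$ such that, for every $t\in\T$ and $\beta=\alpha^t$, the restriction of $c$ to the edges at $\beta$, transported back to $\Gamma(\alpha)$ via $t$, lies in $B$. Write $N=|V|$, $v=|\Gamma(\alpha)|$ and $n=|E|=Nv/2$. The two bounds are then a linear-algebra count and an expander-mixing argument. I will check both against the precise form of Definition~\ref{d:gcode}, which I am taking as given here.

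\emph{Rate.} Each vertex $\beta=\alpha^t$ contributes $v-\dim B=v(1-r(B))$ linear constraints on the coordinates of the edges at $\beta$. These come from a basis of $B^\perp$ transported by $t$. So $C$ is cut out of $\F^E$ by at most $Nv(1-r(B))=2n(1-r(B))$ linear equations. Hence
$$\dim C\ \ge\ n-2n(1-r(B))\ =\ n(2r(B)-1),$$
and therefore $r(C)\ge 2r(B)-1$.

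\emph{Distance.} Let $0\ne c\in C$, let $F\subseteq E$ be its support, and let $S\subseteq V$ be the set of vertices incident with an edge of $F$; put $s=|S|$.
\begin{itemize}
\item[(i)] For each $\beta\in S$ the local view of $c$ at $\beta$ is a nonzero word of (a copy of) $B$. So $\beta$ meets at least $\delta(B)v$ edges of $F$. Double counting gives $2|F|\ge \delta(B)v s$.
\item[(ii)] Every edge of $F$ has both ends in $S$. So $2|F|\le e(S,S)$, the number of ordered pairs of adjacent vertices in $S$.
\item[(iii)] The main step is to bound $e(S,S)$ spectrally, in the sharp form of the expander mixing lemma:
$$e(S,S)\ \le\ \frac{v s^2}{N}+\lambda s\Bigl(1-\frac{s}{N}\Bigr).$$
To prove this, write $\mathbf 1_S=\frac{s}{N}\mathbf 1+w$ with $w\perp\mathbf 1$. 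Since $\Gamma$ is $v$-regular, $\mathbf 1$ is the top eigenvector. Then $e(S,S)=\mathbf 1_S^{\top}M\mathbf 1_S=\frac{vs^2}{N}+w^{\top}Mw$, with $w^{\top}Mw\le\lambda\|w\|^2=\lambda\bigl(s-\frac{s^2}{N}\bigr)$.
\end{itemize}
The retained factor $(1-s/N)$ in (iii) is what improves on the Kaufman--Lubotzky bound, so I would keep it rather than use the cruder $\lambda s$.

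\emph{Combining.} From (i)--(iii) we get $\delta(B)vs\le \frac{vs^2}{N}+\lambda s-\frac{\lambda s^2}{N}$. Dividing by $sv$ gives $\delta(B)-\lambda/v\le \frac{s}{N}(1-\lambda/v)$. Since $\lambda<v$, this yields
$$\frac{s}{N}\ \ge\ \frac{\delta(B)-\lambda/v}{1-\lambda/v}.$$
Finally, by (i), $|F|/n\ge \delta(B)vs/(Nv)=\delta(B)\,s/N$, which is the claimed bound on $\delta(C)$. If $\delta(B)\le\lambda/v$ the bound is trivial.

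The only delicate points are the following.
\begin{itemize}
\item The local view in (i) must really be a nonzero codeword of weight $\ge\delta(B)v$. This needs the transport by $t\in\T$ to be a bijection $\Gamma(\alpha)\to\Gamma(\beta)$, which holds since $t$ is an automorphism.
\item The mixing lemma in (iii) uses $\lambda$ as the second largest absolute eigenvalue. This is valid because $M$ is symmetric and $\mathbf 1$ spans the eigenspace we project away. If $\Gamma$ is disconnected, $v$ has multiplicity $>1$, so $\lambda=v$ and the hypothesis $\lambda<v$ excludes this case.
\end{itemize}
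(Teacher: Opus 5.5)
Your proof is correct and is essentially the paper's argument: the rate bound is the same count of at most $|\T|(v-\dim B)=2n(1-r(B))$ linear constraints. For the distance you use the same ingredients: each vertex touched by the support carries a nonzero local word of $B$ and hence at least $\delta(B)v$ support edges, then double counting, then the sharp Alon--Chung mixing lemma. You arrange these as a direct chain of inequalities, whereas the paper argues by contradiction with an auxiliary $\epsilon<\epsilon_0$ and an averaging step.

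One small correction to your side remark. The factor $(1-s/N)$ already produces the denominator in the standard bound $\epsilon_0^2$, where $\epsilon_0=\frac{\delta(B)-\lambda/v}{1-\lambda/v}$. The improvement comes from your final step, which multiplies $s/N\ge\epsilon_0$ by $\delta(B)\ge\epsilon_0$ rather than by $\epsilon_0$.
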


See Section~\ref{sec:stats} for a more explicit description of the eigenvalue $\lambda$ used in Theorem~\ref{thm:mainstats} (note that $\lambda$ is always less than $v$ unless $\Gamma$ is bipartite).

The final section of this paper is dedicated to applying our constructions in specific scenarios; in Section~\ref{sec:bipartite} we show how our constructions generalise the direct (Kronecker) product construction in a natural way. And in Section~\ref{sec:goodcodes} we use Theorems~\ref{thm:distanceboundD} and~\ref{thm:ratebound} to construct a family of good digraph codes; we do not make an effort to insist upon any symmetry in these codes. With this and Theorem~\ref{thm:ldpc} in mind, we pose the following open problems which we would like to see resolved:

    \begin{enumerate}
        \item Does there exist an infinite family of symmetric LDPC good graph codes which are not constructed from Cayley graphs?
        \item Does there exist an infinite family of symmetric LDPC good `proper' digraph codes?
        \item Does there exist an infinite family of symmetric LDPC good digraph codes such that the input codes $B_1$ and $B_2$ are non-isomorphic?
    \end{enumerate}
By proper digraph we mean that there is some arc $(x,y)$ where $(y,x)$ is not an arc. Since we are mainly concerned with vertex-transitive digraphs---or even edge-transitive digraphs---our digraphs will often be the orbital digraph corresponding to a non-self paired orbital, and hence will be proper digraphs.

The structure of the paper is straightforward. In Section~\ref{sec:prelim} we explain the necessary background on group theory, graph theory, and coding theory. In Section~\ref{sec:basicprops} we introduce the graph code and digraph code constructions, and explore some of their basic properties and prove Theorems~\ref{thm:AutomorphismsUmain} and~\ref{thm:ldpc}. Next, in Section~\ref{sec:isocondec} we investigate how the output codes are impacted by graph theoretic properties such as graph isomorphism and connectivity, before proving bounds on rate and relative distance in Section~\ref{sec:stats}. Finally, we wrap up the paper by working through specific examples and constructions in Section~\ref{sec:examples}.

\section{Preliminaries}\label{sec:prelim}
As this paper is cross-disciplinary we shall begin by recording some key definitions and notation which we shall use regularly later on.
\subsection{Group theory definitions and notation}
Throughout, we will only be interested in finite groups. These groups will be denoted by capital Latin symbols, usually $G$, with their elements denoted using Fraktur symbols (e.g. $\g,\t$). Let $G$ be a finite group.

    An \emph{action} of $G$ on a set $\Omega$ is a homomorphism $\rho : G\to\mathrm{Sym}(\Omega)$. Given an element $\alpha\in \Omega$ we write $\alpha^\g$ to mean `the point of $\Omega$ which is the 
    image of $\alpha$ under the permutation $(\g)\rho$'. The \emph{orbit} of $\alpha$ is the set $\alpha^G=\{\alpha^\g : \g \in G\}$, and the action of $G$ is \emph{transitive} if $\alpha^G=\Omega$.

    The \emph{stabiliser} of a point $\alpha\in \Omega$ is the subgroup $G_\alpha:=\{\g\in G \mid \alpha^\g=\alpha\}$. For a subset $\Sigma\subseteq\Omega$, the \emph{pointwise stabiliser} of $\Sigma$ is $$G_{(\Sigma)}=\{\g\in G\mid \beta^\g=\beta \text{ for all $\beta\in\Sigma$}\}=\bigcap_{\beta\in\Sigma}G_\beta,$$ and the \emph{setwise stabiliser} of $\Sigma$ is $$G_{\{\Sigma\}}=\{\g\in G\mid \beta^\g\in\Sigma \text{ for all $\beta\in \Sigma$}\}.$$
    The setwise stabiliser $G_{\{\Sigma\}}$ acts naturally on $\Sigma$, and the kernel of this action is the pointwise stabiliser $G_{(\Sigma)}$; thus, there is a group of permutations of $\Sigma$ induced by $G$ denoted $G^{\Sigma}:=G_{\{\Sigma\}}/G_{(\Sigma)}\leq \Sym(\Sigma).$
    
    A transitive action of $G$ is \emph{regular} if $|G|=|\Omega|$, or equivalently, the stabiliser of any point is trivial.

Observe that an action of $G$ on a set $\Omega$ naturally induces an action of $G$ on the set of all fixed-length tuples of points of $\Omega$ and on the set of all fixed-size sets of points of $\Omega$. Moreover, if $G$ acts on $\Omega$ and $H\leq G$ then there is an induced action of $H$ on $\Omega$.

Let $H$ be a subgroup of $G$. A \emph{right transversal} for $H$ in $G$ is a set of representatives for the right cosets of $H$ in $G$. For convenience we will usually write just `transversal'---this should always be taken to mean a `right transversal'.
\subsection{Graph theory definitions and notation}
As for groups, we are only interested in finite graphs and digraphs (directed graphs). We shall typically denote both graphs and digraphs using $\Gamma$, with vertices denoted by small Greek symbols. We use $V\Gamma$ to denote the vertex set of $\Gamma$ and $E\Gamma$ for the edge set when $\Gamma$ is undirected, and $A\Gamma$ for the arc set when $\Gamma$ is directed. Throughout, we shall assume that our (di)graphs are simple, that is, our (di)graphs will have no loops, and there will be no repeated edges (arcs).

Given an undirected graph $\Gamma$, and $\alpha\in V\Gamma$ we use $\Gamma(\alpha)$ to denote the \emph{neighbourhood} of $\alpha$ in $\Gamma$, that is, the subset of $V\Gamma$ consisting of vertices which are in an edge with $\alpha$. The \emph{valency} of $\alpha$ is $|\Gamma(\alpha)|$. Similarly, if $\Gamma$ is directed we write $\Gamma^+(\alpha):=\{\beta\in V\Gamma\mid (\alpha,\beta)\in A\Gamma\}$ the \emph{out-neighbourhood} of $\alpha$, and $\Gamma^-(\alpha):=\{\beta\in V\Gamma\mid (\beta,\alpha)\in A\Gamma\}$ the \emph{in-neighbourhood} of $\alpha$. The out-valency and in-valency of $\alpha$ are $|\Gamma^+(\alpha)|$ and $|\Gamma^-(\alpha)|$, respectively.

Both directed and undirected graphs $\Gamma$ admit groups of automorphisms, denoted $\Aut(\Gamma)$, namely the subgroup of $\mathrm{Sym}(V\Gamma)$ consisting of all permutations leaving invariant $A\Gamma$ or $E\Gamma$, respectively. The group $\Aut(\Gamma)$ and its subgroups thus induce actions on $V\Gamma$ and $E\Gamma$ (or $A\Gamma$, in the directed case). Let $G\leq \Aut(\Gamma)$. We say that $G$ is \emph{vertex-transitive} if the action of $G$ on $V\Gamma$ is transitive, and \emph{edge-transitive} (resp. \emph{arc-transitive}) if the action of $G$ on $E\Gamma$ (resp. $A\Gamma$) is transitive. The (di)graph $\Gamma$ is \emph{vertex-}, \emph{edge-}, or \emph{arc-transitive} if $\Aut(\Gamma)$ is. 

Observe that when an undirected graph is vertex-transitive, then all vertices necessarily have the same valency. Similarly, should a directed graph be vertex-transitive then all out-valencies will be equal, as will all in-valencies, and moreover, since our digraphs are assumed to be finite, these two quantities must be the same. Therefore, in the vertex-transitive case we may refer to the \emph{valency} of $\Gamma$ to mean `the valency of any vertex of $\Gamma$', and similarly for in- and out-valencies.

\begin{definition}
    Let $G$ be a finite group and $S\subseteq G\setminus\{1\}$. If $S$ is inverse-closed (i.e. $S=\{\s^{-1} \mid \s \in S\}$) then we define the \emph{Cayley graph}, $\mathrm{Cay}(G,S)$ to be the graph $\Gamma$ with vertex set $V\Gamma=G$ and edge set $E\Gamma=\{\{\g,\s\g\} \mid \g\in G\}$. Should $S$ not be inverse-closed, then we can define the \emph{Cayley digraph} $\mathrm{Cay}(G,S)$ to be the directed graph with vertex set $G$ and arc set $A\Gamma=\{(\g,\s\g) \mid \g\in G\}$.
\end{definition}
It is not hard to see that Cayley graphs are vertex-transitive. In fact, the natural action of $G$ by right multiplication on $\mathrm{Cay}(G,S)$ is regular.
\subsection{Coding theory definitions and notation}\label{sec:codedefs}
Let $\F=\mathrm{GF}(q)$ be a finite field of order $q$. A linear code over $\F$ is a subspace $C$ of the vector space $\V:=\F^\Omega$ of dimension $n=|\Omega|$. We generally view $\V$ as the set of all functions $f:\Omega\to \F$ with pointwise addition and multiplication, that is,  
\[
\lambda f+\lambda'f':\alpha\mapsto \lambda(\alpha)f + \lambda'(\alpha)f'\quad \text{for $f,f'\in \mathbb{V}, \lambda,\lambda'\in \F$, and $\alpha\in\Omega$.}
\]
(We apply our functions on the \emph{right} rather than the oft-standard left convention used in coding theory; that is $(\alpha)f$ means `the image of $\alpha$ under $f$'.)

The dimension $n(C)$ of the ambient space $\V$ is the \emph{length} of the code, and the dimension $k(C)$ of $C$ (as a vector space) is the \emph{dimension} (or \emph{rank}) of the code $C$; the \emph{rate} of $C$ is $r(C)=k/n$. Elements of $C$ are called \emph{codewords}, and the \emph{(Hamming) weight} of $c\in C$ is the number $\mathrm{wt}(c)=|\{\alpha\in \Omega \mid (\alpha)c\ne 0\}|$ of non-zero entries of $c$. The \emph{minimum distance} of $C$ is $d(C)=\min_{c\in C\setminus\{0\}}\mathrm{wt}(c)$; the \emph{relative distance} (sometimes called the `normalised distance') of $C$ is $\delta(C)=r(C)/n$. 

Observe that $\Sym(\Omega)$ induces a natural action on $\V$ given by $f^\g:\alpha\mapsto(\alpha^\g)f$ for all $f\in V$, $\g\in \Sym(\Omega)$, and $\alpha\in \Omega$. The \emph{automorphism group} of $C$ is the subgroup $$\Aut(C)=\{\g\in G\mid c^\g\in C\text{ for all $c\in C$}\}\leq\Sym(\Omega)$$ of all permutations of $\Omega$ leaving $C$ invariant. The code $C$ is \emph{symmetric} if $\Aut(C)$ is a transitive subgroup of $\Sym(\Omega)$, and is \emph{symmetric with respect to $G$} if $G$ is a subgroup of $\Aut(C)$ which is transitive on $\Omega$. Two codes are \emph{isomorphic} (or \emph{equivalent}) if they are in the same $\Sym(\Omega)$-orbit.

Given $f,g\in \V$ we have an inner product between $f$ and $g$ given by 
$$
\sum_{\alpha\in \Omega}\left((\alpha)f\cdot(\alpha)g\right).
$$ 
With respect to this inner product, the orthogonal space to $C$ is the \emph{dual-code} $C^\perp$. The automorphism group $\Aut(C)$ of $C$ leaves $C^\perp$ invariant and conversely; the code $C$ is \emph{single-orbit symmetric with respect to $G$} if $C$ is symmetric with respect to $G$ and there exists some $f\in C^\perp$ such that the orbit $f^G$ spans $C^\perp$, that is $\langle f^\g : \g \in G\rangle=C^\perp.$

An $[n,k,d]_q$\emph{-code} is a linear code of length $n$, dimension $k$, and minimum distance $d$, defined over a field of order $q$. 

\begin{definition}\label{def:prodandsum}
 Given finite sets $\Omega$ and $\Sigma$,  $f\in\F^{\Omega\times\Sigma}$, and $\alpha\in\Omega$ we define $f_\alpha:\Sigma\to\F$ by $\beta\mapsto (\alpha,\beta)f$. Given $\beta\in \Sigma$ fixed, we define $f_\beta:\Omega\to\F$ analogously. For codes $C_1\leq \F^\Omega$ and $C_2\leq \F^\Sigma$, the \emph{direct product} of $C_1$ and $C_2$ is the code $$C_1\otimes C_2:=\{f\in\F^{\Omega\times\Sigma}\mid f_\alpha\in C_2 \text{ for all $\alpha\in \Omega$ and } f_\beta\in C_1\text{ for all $\beta\in\Sigma$}\}\leq \F^{\Omega\times\Sigma}.$$ If $\Omega$ and $\Sigma$ are disjoint, we can define the \emph{direct sum} of $C_1$ and $C_2$ to be the code $$C_1\oplus C_2:=\{f\in \F^{\Omega\cup\Sigma} \mid f|_\Omega\in C_1\text{ and }f|_\Sigma\in C_2\}.$$
\end{definition}

Let $\mathcal{C}=\{C_m\}_{m\in\mathbb{N}}$ be a family of codes. The family $\mathcal{C}$ is \emph{good} if there exists $\epsilon>0$ such that $r(C_m)$ and $\delta(C_m)$ are both at least $\epsilon$ for all $m$. If all the dual-codes $C_m^\perp$ are generated by codewords of bounded weight, then we say that $\mathcal{C}$ is \emph{low-density parity-check (LDPC)}. The family $\mathcal{C}$ is \emph{highly symmetric} if for each $m$, $C_m$ is symmetric and $C_m^\perp$ is spanned by $f_m^G$ for some $f_m$ of bounded weight. Clearly, any highly symmetric family is LDPC.

Although not the focus of the paper, we now define \emph{Cayley codes} as they are the construction we shall generalise and it will be useful to point to this definition when comparing our concepts and results.

\begin{definition}\label{def:CC}
 Let $\F$ be a finite field, let $G$ be a group, let $S\subseteq G\setminus\{1\}$ be inverse-closed and let $B\leq \F^S$ be a linear code.. Let $\Gamma=(V,E):=\mathrm{Cay}(G,S)$, and for $\g \in G$ define $N_\g:=\{ \{\g,\s\g\}\mid \s\in S\}$ the set of edges incident to $\g$, and $\chi_\g : S\to N_\g$ given by $\s\mapsto\{\g,\s\g\}.$ The \emph{Cayley code} $\mathbf{C}(G,S,B)$ is the code
\begin{equation*}\label{e:cc}
    \mathbf{C}(G,S,B) =\{ f\in\F^E \mid \chi_\g\circ f\in B\ \text{for all $\g\in G$} \}.
\end{equation*}

\end{definition}

\section{Graph codes, digraph codes, and their basic properties}\label{sec:basicprops}
We start by defining graph codes, a natural generalisation of Cayley codes. For the remainder of the paper we shall fix a finite field $\F$.
\begin{definition}\label{d:gcode}
{\rm
    Let $\Gamma=(V, E)$ be an undirected graph admitting a vertex-transitive subgroup $G\leq \Aut(\Gamma)$. Let $\alpha\in V$ and $S=\Gamma(\alpha)$, the set of vertices adjacent to $\alpha$. Let $\mathcal{T}$ be a transversal for $G_\alpha$ in $G$ such that $1\in\mathcal{T}$, and for each $\t\in \mathcal{T}$ let $E_\t = \{ \{\alpha^\t,\beta^\t\}\mid \beta\in S\}$, and define
    \begin{equation*}\label{e:phit}
        \phi_\t:S\to E_\t\quad \text{by}\quad \beta\to\{\alpha^\t,\beta^\t\}\ \text{for $\beta\in S$.}
    \end{equation*}   
    For $B\leq \F^S$ a linear code in $\F^S$, define the \emph{graph code} $\mathbf{G}_\Gamma(\mathcal{T}, \alpha, B)$ by 
    \begin{equation*}\label{e:gcode}
       \mathbf{G}_\Gamma(G,\mathcal{T}, \alpha, B)=\{ f\in\F^E\mid  \phi_\t\circ f\in B\ \text{for all $\t\in \mathcal{T}$}\}\subseteq \F^E.
    \end{equation*}  
    }
\end{definition}
At some point we may be discussing more than one graph code (defined on different graphs) at the same time, at which point it will become convenient to write the maps $\phi_\t$ as $\phi_\t^\Gamma$ to make it clear which map we mean. 

We begin by showing that the graph code is indeed a linear code, and that the Cayley code construction is a specialisation of graph codes.

\begin{proposition}\label{prop:GtoC}
    Using the notation from Definition~$\ref{d:gcode}$, 
    \begin{enumerate}
        \item[(1)] For each $\t\in\T$, $E_\t$ is the set of edges incident with $\alpha^\t$, and $\phi_\t$ is a well-defined bijection $S\to E_\t$. Moreover, $C:=\mathbf{G}_\Gamma(G,\T,\alpha,B)$ is a linear code.
        \item[(2)] Suppose that $\Gamma=\mathrm{Cay}(G,S)$. Then $\mathbf{G}_\Gamma(G,\T, \alpha, B)$ is the Cayley code $\mathbf{C}(G,S,B)$.  
    \end{enumerate}
\end{proposition}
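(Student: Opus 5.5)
Part (1) follows from the fact that each $\t\in\T$ is a graph automorphism. Since $\t\in G\leq\Aut(\Gamma)$ and $\beta\in S=\Gamma(\alpha)$, the pair $\{\alpha^\t,\beta^\t\}$ is an edge, so $\phi_\t$ really does map $S$ into $E$. Because $\t$ is a bijection on $V$, it maps $\Gamma(\alpha)$ bijectively onto $\Gamma(\alpha^\t)$, that is, $S^\t=\Gamma(\alpha^\t)$. It follows that $E_\t=\{\{\alpha^\t,\gamma\}\mid\gamma\in\Gamma(\alpha^\t)\}$ is exactly the set of edges incident with $\alpha^\t$. Injectivity of $\phi_\t$ also follows from injectivity of $\t$: if $\{\alpha^\t,\beta^\t\}=\{\alpha^\t,\beta'^\t\}$ then $\beta^\t=\beta'^\t$ (the graph has no loops, so $\beta^\t\neq\alpha^\t$), whence $\beta=\beta'$. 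Thus $\phi_\t:S\to E_\t$ is a well-defined bijection. For linearity, note that for each $\t$ the map $f\mapsto\phi_\t\circ f$ is the composition of restriction to $E_\t$ with relabelling along $\phi_\t$, and so is a linear map $\F^E\to\F^S$. Hence $C$ is the intersection over $\t\in\T$ of the preimages of the subspace $B$ under these linear maps, and is therefore a subspace of $\F^E$.

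For part (2), I take $\Gamma=\mathrm{Cay}(G,S)$ with $G$ acting on $V=G$ by right multiplication; this action is regular. I also read the statement with $\alpha=1$, so that $\Gamma(\alpha)=\{\s\cdot1\}=S$, matching the domain of $B$ in both definitions. Regularity gives $G_\alpha=1$, so the transversal $\T$ is forced to be all of $G$. For $\t=\g\in G$ we then compute $\phi_\g(\s)=\{1^\g,\s^\g\}=\{\g,\s\g\}=\chi_\g(\s)$, so $\phi_\g=\chi_\g$ and $E_\g=N_\g$. The defining conditions of Definition~\ref{d:gcode} and Definition~\ref{def:CC} therefore coincide term by term, and the two codes are equal.

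The one real subtlety is the choice of $\alpha$. For a general base vertex $\alpha=\g_0$ we have $\Gamma(\g_0)=S\g_0$, which is not literally $S$, so the identification with the Cayley code is only up to the canonical bijection $S\to S\g_0$, $\s\mapsto\s\g_0$. I would address this by either stipulating $\alpha=1$ or transporting $B$ along that bijection. Everything else is routine.
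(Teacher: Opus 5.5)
Your proof is correct and follows essentially the same route as the paper. Part (1) comes from each $\mathfrak{t}\in\mathcal{T}$ being an automorphism, together with closure of $C$ under linear combinations; your preimage-of-a-subspace phrasing is equivalent. Part (2) comes from regularity forcing $G_\alpha=1$ and $\mathcal{T}=G$, so that $\phi_\mathfrak{t}=\chi_\mathfrak{t}$. Your observation that this identification needs $\alpha=1$ (or transporting $B$ along $\mathfrak{s}\mapsto\mathfrak{s}\mathfrak{g}_0$) is a real point that the paper leaves implicit, and you handle it correctly.
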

\begin{proof}
    The assertion about $E_\t$ follows since $E_1$ is the set of edges incident with $\alpha$ (by the definition of $S$), and since $\t\in\Aut(\Gamma)$. That $\phi_\t$ is a well-defined bijection now follows immediately. For the final claim let $\lambda\in \F$ and take $f,f'\in C$ and take $\t\in \T$. Then $$\phi_\t\circ(f+\lambda f')=(\phi_\t\circ f)+\lambda(\phi_\t\circ f)\in B,$$ hence (1).

    Suppose that $\Gamma=\mathrm{Cay}(G,S)=(V,E)$. Then $G$ acts regularly on $V$. Therefore, $G_\alpha=1$ and so $\T=G$. Thus, $\phi_\t=\chi_\t$ for each $\t\in \T$ by definition (where $\chi_\t$ is as defined in Definition~\ref{def:CC}). Since all constraints are the same and both codes are subspaces of $\F^E$ the codes must be equal, as asserted.
\end{proof}
We can generalise the construction further to directed graphs. We include analogous statements to Proposition~\ref{prop:GtoC}(1) as part of the coming definition since they are clear from identical arguments.
\begin{definition}\label{def:digraph}
    Let $\Gamma=(V,A)$ be a directed graph admitting a vertex-transitive subgroup $G\leq\Aut(\Gamma)$. Let $\alpha\in V$ and % and let $S^+=\Gamma^+(\alpha)$ and $S^-=\Gamma^-(\alpha)$ be the sets of out- and in-neighbours of $\alpha$, respectively. L
    let $\mathcal{T}$ be a transversal for $G_\alpha$ in $G$ with $1\in\mathcal{T}$. For each $\t\in\mathcal{T}$ let $O_\t=\{(\alpha^\t,\beta^\t) : \beta\in \Gamma^+(\alpha)\}$, and define $$\varphi_\t:\Gamma^+(\alpha)\to O_\t\text{ by } \beta\mapsto(\alpha^\t,\beta^\t).$$
    Similarly define $I_\t=\{(\beta^\t,\alpha^\t) : \beta\in \Gamma^-(\alpha)\}$, and define 
    $$
    \psi_\t:\Gamma^-(\alpha)\to I_\t\text{ by } \beta\mapsto(\beta^\t,\alpha^\t).
    $$
    The $O_\t$ and the $I_\t$ (for $t\in\mathcal{T}$) each form a partition of $A\Gamma$, and the maps $\varphi_\t$ and $\psi_\t$ are well-defined bijections. For linear codes $B_1\leq \F^{\Gamma^+(\alpha)}$ and $B_2\leq\F^{\Gamma^-(\alpha)}$, define the \emph{digraph code} 
    $$
    \mathbf{D}_\Gamma(G,\mathcal{T},\alpha,B_1,B_2)=\left\{f\in\F^{A\Gamma}\mid \varphi_\t\circ 
    f\in B_1 \text{ and }\psi_\t\circ f\in B_2 \text{ for all } \t\in \mathcal{T}\right\}
    \leq \F^{A\Gamma}.
    $$
\end{definition}
As with graph codes, for digraph codes we may sometimes write $\varphi_\t$ and $\psi_\t$ as $\varphi_\t^\Gamma$ and $\psi_\t^\Gamma$ to distinguish between these maps when multiple graphs are present. 

We briefly remark that one could also define codes from digraphs by only placing an input code on the out-neighbours, or by only placing an input code on the in-neighbours; the digraph code defined as above would then be the intersection of these. Unfortunately, these alternative constructions do not seem to yield very interesting codes; there simply are not enough constraints in place to guarantee nice structure. One can easily verify that such codes would have rate at least that of the input code, but obtaining some sort of useful general bound on the relative distance seems hopeless---experimentally it appears that the relative distance is generally extremely small.

Many of the coming results will use the same set-up, so for convenience we state the hypotheses once and reference them throughout.
\begin{hyp}\label{hyp:GorD}
    Let $\Gamma$ be a graph (directed or undirected). If $\Gamma$ is undirected (U) then let $G,\T,\alpha,$ and $B$ be as in Definition~\ref{d:gcode} and set $C=\mathbf{G}_\Gamma(G,\T,\alpha,B)$. If $\Gamma$ is directed (D) then let $G,\T,\alpha,B_1$ and $B_2$ be as in Definition~\ref{def:digraph} and set $C=\mathbf{D}_\Gamma(G,\T,\alpha,B_1,B_2)$.
\end{hyp}
Some of the upcoming results will have statements for both the directed and undirected cases within a single result; the statements marked by (U) and (D) are meant to have the undirected and directed versions of Hypotheses~\ref{hyp:GorD}, respectively.

%Observe that we can naturally define an action of $G$ on each of $\F^{E\Gamma}$ and on $\F^S$. Indeed, if $f\in\F^E$ we define $f^x : E\to\F$ by $e\mapsto (e^x)f$; the action on $\F^S$ is analogous. 
We begin by showing that by choosing our input codes to be `compatible' with the action of $G_\alpha$ on its neighbourhood in $\Gamma$, the output will always be `compatible' with the action of $G$ on the edges of $\Gamma$. 
\begin{theorem}\label{thm:AutomorphismsU}
    Assume Hypotheses~\ref{hyp:GorD} with $\Gamma$ undirected. % If $\Gamma$ is undirected take $C\in\{\mathbf{D}_\Gamma(\mathcal{T},\alpha,B),\mathbf{G}_\Gamma(\mathcal{T},\alpha,B)\}$, otherwise set $C=\mathbf{D}_\Gamma(\mathcal{T},\alpha,B)$. 
    Then the following hold: \begin{itemize}
    \item[(1)] the group $G$ induces an action on $\F^{E\Gamma}$: for $f\in\F^{E\Gamma}$ and $\g\in G$ we define $f^\g : E\Gamma\to\F$ by $e\mapsto (e^\g)f$;
    \item[(2)] the group $G_\alpha$ induces an action on $\F^{\Gamma(\alpha)}$: for $f\in\F^{\Gamma(\alpha)}$ and $\g\in G$ we define $f^\g : \Gamma(\alpha)\to\F$ by $\beta\mapsto (\beta^\g)f$;
    \item[(3)] if $G_\alpha^{\Gamma(\alpha)}\leq \Aut(B)$, then $G^{E\Gamma}\leq \Aut(C)$; and\item[(4)] if for each $b\in B$ there exists $c\in C$ and $\t\in \mathcal{T}$ such that $\phi_\t\circ c=b$ then the converse to (3) holds.\end{itemize}
\end{theorem}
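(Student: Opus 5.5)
Parts (1) and (2) are routine. For (1), $G\le\Aut(\Gamma)$ permutes $E\Gamma$, and the rule $f^\g:e\mapsto(e^\g)f$ is the standard induced right action on functions: $(e)(f^{\g})^{\mathfrak h}=(e^{\mathfrak h})f^\g=(e^{\mathfrak h\g})f$. With the paper's right-action conventions I will check that this composes in the correct order (i.e.\ with the product written as $\g\mathfrak h$ in the convention used, so that $(f^\g)^{\mathfrak h}=f^{\g\mathfrak h}$), that $f^1=f$, and that each $\g$ acts linearly. For (2), $G_\alpha$ fixes $\alpha$ and preserves adjacency, so it leaves $\Gamma(\alpha)$ invariant; the same verification then applies. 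The action factors through $G_\alpha^{\Gamma(\alpha)}$, which makes the hypothesis of (3) meaningful.

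For (3), take $f\in C$ and $\g\in G$; I need $\phi_\t\circ f^\g\in B$ for every $\t\in\T$. For $\beta\in S$ we have $(\beta)(\phi_\t\circ f^\g)=(\{\alpha^{\t\g},\beta^{\t\g}\})f$. Since $\T$ is a right transversal, write $\t\g=\mathfrak h\t'$ with $\mathfrak h\in G_\alpha$ and $\t'\in\T$; then $\alpha^{\t\g}=\alpha^{\t'}$. The key identity is
\[
(\{\alpha^{\t'},\beta^{\mathfrak h\t'}\})f=(\beta^{\mathfrak h})(\phi_{\t'}\circ f),
\]
so $\phi_\t\circ f^\g=(\phi_{\t'}\circ f)^{\mathfrak h}$, using the action from (2). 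Because $\phi_{\t'}\circ f\in B$ and $\mathfrak h^{\Gamma(\alpha)}\in\Aut(B)$, the result lies in $B$. Hence $f^\g\in C$, and so $G^{E\Gamma}\le\Aut(C)$. The care needed here is only in the order of composition under right actions.

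For (4), assume the surjectivity hypothesis and that $G^{E\Gamma}\le\Aut(C)$. Take $\mathfrak h\in G_\alpha$ and $b\in B$; I must show $b^{\mathfrak h}\in B$. By hypothesis, choose $c\in C$ and $\t\in\T$ with $\phi_\t\circ c=b$. Set $\g=\t^{-1}\mathfrak h$, so that $c^\g\in C$; then compute $\phi_1\circ c^\g$ (recall $1\in\T$). For $\beta\in S$,
\[
(\beta)(\phi_1\circ c^\g)=(\{\alpha^{\t^{-1}\mathfrak h},\beta^{\t^{-1}\mathfrak h}\})c .
\]
The main obstacle is fixing the composition convention so that this evaluates to $(\{\alpha^\t,\beta^{\mathfrak h\t}\})c=(\beta^{\mathfrak h})b$. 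To arrange this, I will choose $\g$ as whichever of $\mathfrak h\t$ or $\t^{-1}\mathfrak h\t$ makes $\alpha^{\g}=\alpha^\t$ together with $\beta\mapsto\beta^{\mathfrak h\t}$, in the same way as in (3). Concretely, I take $\g=\mathfrak h\t$ and use $\phi_1\circ c^{\g}$, which has entries $(\{\alpha^{\mathfrak h\t},\beta^{\mathfrak h\t}\})c=(\{\alpha^\t,(\beta^{\mathfrak h})^\t\})c=(\beta^{\mathfrak h})b$. This shows $b^{\mathfrak h}=\phi_1\circ c^{\g}\in B$. Hence $G_\alpha^{\Gamma(\alpha)}\le\Aut(B)$.
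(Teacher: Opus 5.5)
Your proposal is correct and matches the paper's argument, which the paper writes out for the directed analogue, Theorem~\ref{thm:AutomorphismsD}: in (3) you factor $\t\g=\mathfrak h\t'$ with $\mathfrak h\in G_\alpha$ to get $\phi_\t\circ f^\g=(\phi_{\t'}\circ f)^{\mathfrak h}$, and in (4) you use $b^{\mathfrak h}=\phi_1\circ c^{\mathfrak h\t}$, exactly as the paper does. One small correction: under the paper's right-action convention your computation in (1) gives $(f^\g)^{\mathfrak h}=f^{\mathfrak h\g}$, so $f\mapsto f^\g$ is a left action rather than a right one; this is harmless, since (1) only asserts an action and (3)--(4) use each $\g$ individually, and you should simply delete the tentative choice $\g=\t^{-1}\mathfrak h$ in (4).
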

The proof of Theorem~\ref{thm:AutomorphismsU} is omitted as it is extremely similar to the directed analogue proved below.
\begin{theorem}\label{thm:AutomorphismsD}
    Assume Hypotheses~\ref{hyp:GorD} with $\Gamma$ directed. % If $\Gamma$ is undirected take $C\in\{\mathbf{D}_\Gamma(\mathcal{T},\alpha,B),\mathbf{G}_\Gamma(\mathcal{T},\alpha,B)\}$, otherwise set $C=\mathbf{D}_\Gamma(\mathcal{T},\alpha,B)$. 
    Then the following hold: \begin{itemize}
    \item[(1)] the group $G$ induces an action on $\F^{A\Gamma}$: for $f\in\F^{A\Gamma}$ and $\g\in G$ we define $f^\g : A\Gamma\to\F$ by $e\mapsto (e^\g)f$;
    \item[(2)] the group $G_\alpha$ induces an action on $\F^{\Gamma^+(\alpha)}$: for $f\in\F^{\Gamma^+(\alpha)}$ and $\g\in G$ we define $f^\g : \Gamma^+(\alpha)\to\F$ by $\beta\mapsto (\beta^\g)f$. We get an analogous $G_\alpha$-action on $\Gamma^-(\alpha)$;
    \item[(3)] if $G_\alpha^{\Gamma^+(\alpha)}\leq \Aut(B_1)$ and $G_\alpha^{\Gamma^-(\alpha)}\leq \Aut(B_2)$ then $G^{A\Gamma}\leq \Aut(C)$; and
    \item[(4)] if for each $b_1\in B_1$ and $b_2\in B_2$ there exist $c_1,c_2\in C$ and $\t_1,\t_2\in \mathcal{T}$ such that $\varphi_{\t_1}\circ c_1=b_1$ and  $\psi_{\t_2}\circ c_2=b_2$ then the converse to (3) holds.\end{itemize}
\end{theorem}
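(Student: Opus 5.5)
Parts (1) and (2) are routine. The right action convention $f^\g:e\mapsto (e^\g)f$ satisfies $(f^\g)^{\mathfrak{h}}=f^{\g\mathfrak{h}}$: evaluating at $e$ gives $(e^{\mathfrak{h}})f^\g=(e^{\mathfrak{h}\g})f$. Since the paper composes on the right, I will record the consistent reading $(e)f^{\g\mathfrak{h}}=(e^{(\g\mathfrak{h})})f$ and check compatibility, taking inverses if necessary. Linearity in $f$ is clear. Because $G\leq\Aut(\Gamma)$, $G$ permutes $A\Gamma$, so these maps are permutations of the coordinates. For (2), $G_\alpha$ fixes $\alpha$ and preserves arcs, so it leaves $\Gamma^+(\alpha)$ and $\Gamma^-(\alpha)$ invariant, and the same verification applies.

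For (3), fix $f\in C$, $\g\in G$ and $\t\in\T$; we must show $\varphi_\t\circ f^\g\in B_1$ and $\psi_\t\circ f^\g\in B_2$. The key identity is the following. Since $\T$ is a right transversal, $\t\g=\mathfrak{h}\t'$ for a unique $\t'\in\T$ and some $\mathfrak{h}\in G_\alpha$. For $\beta\in\Gamma^+(\alpha)$,
\[(\beta)(\varphi_\t\circ f^\g)=((\alpha^{\t\g},\beta^{\t\g}))f=((\alpha^{\t'},(\beta^{\mathfrak{h}})^{\t'}))f=(\beta^{\mathfrak{h}})(\varphi_{\t'}\circ f),\]
using $\alpha^{\mathfrak{h}}=\alpha$ and $\beta^{\mathfrak{h}}\in\Gamma^+(\alpha)$. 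Hence $\varphi_\t\circ f^\g=(\varphi_{\t'}\circ f)^{\mathfrak{h}}$ with respect to the $G_\alpha$-action from (2). As $\varphi_{\t'}\circ f\in B_1$ and $\mathfrak{h}$ induces an element of $G_\alpha^{\Gamma^+(\alpha)}\leq\Aut(B_1)$, the result lies in $B_1$. The same computation with $\psi$ handles $B_2$. So $f^\g\in C$, and $G^{A\Gamma}\leq\Aut(C)$.

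For (4), assume $G^{A\Gamma}\leq\Aut(C)$ together with the surjectivity hypothesis. Take $\mathfrak{h}\in G_\alpha$ and $b_1\in B_1$, and choose $c\in C$ and $\t_1\in\T$ with $\varphi_{\t_1}\circ c=b_1$. The aim is to exhibit $b_1^{\mathfrak{h}}$ as $\varphi_1\circ c'$ for some $c'\in C$, which lies in $B_1$ since $1\in\T$. First transport to $\alpha$: put $c_0=c^{\t_1}\in C$, so that by the identity above with $\t=1$, $\g=\t_1$, we have $\t'=\t_1$, $\mathfrak{h}=1$ and $\varphi_1\circ c_0=\varphi_{\t_1}\circ c=b_1$. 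Then apply $\mathfrak{h}$: $c_0^{\mathfrak{h}}\in C$, and since $1\cdot\mathfrak{h}=\mathfrak{h}\cdot 1$, the identity gives $\varphi_1\circ c_0^{\mathfrak{h}}=(\varphi_1\circ c_0)^{\mathfrak{h}}=b_1^{\mathfrak{h}}$. The argument for $B_2$ is symmetric.

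The main obstacle I expect is purely bookkeeping. I must ensure that the left/right composition conventions ($\varphi_\t\circ f$ means first apply $\varphi_\t$, then $f$), the right action notation, and the coset decomposition $\t\g=\mathfrak{h}\t'$ all line up, so that $\mathfrak{h}$ lands on the correct side and the action in (2) is genuinely a right action. If a mismatch appears, I would replace $\mathfrak{h}$ by $\mathfrak{h}^{-1}$, which is harmless because $\Aut(B_i)$ is a group.
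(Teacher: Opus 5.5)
Your proposal is correct and is essentially the paper's proof: for (3) you use the same decomposition $\t\g=\mathfrak{h}\t'$ with $\mathfrak{h}\in G_\alpha$ to obtain $\varphi_\t\circ f^\g=(\varphi_{\t'}\circ f)^{\mathfrak{h}}$, and for (4) your two steps $c\mapsto c^{\t_1}\mapsto (c^{\t_1})^{\mathfrak{h}}$ are just the paper's single application of the element $\mathfrak{h}\t_1$ split in two, after which $1\in\T$ gives $b_1^{\mathfrak{h}}\in B_1$. The only slip is in (1): under the stated convention your own evaluation gives $(f^\g)^{\mathfrak{h}}=f^{\mathfrak{h}\g}$, not $f^{\g\mathfrak{h}}$, but this does not matter, because (3) and (4) only use that each individual element of $G$ permutes coordinates and maps $C$ into $C$.
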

\begin{proof}
    Statements (1) and (2) are easy to check.
    Suppose that $G_\alpha^{\Gamma^+(\alpha)}\leq \Aut(B_1)$ and $G_\alpha^{\Gamma^-(\alpha)}\leq \Aut(B_2)$. Let $\g\in G$ and $f\in C$; we aim to show that $f^\g\in C$. Let $\t\in \mathcal{T}$; by the definition of $\T$ there exist unique $\g'\in G_\alpha$, $\t'\in \mathcal{T}$ such that $\t\g=\g'\t'$. Since $\t'\in \T$ and $f\in C$, by definition we have $\varphi_{\t'}\circ f\in B_1$. %Moreover, since $G_\alpha^S\leq\Aut(B)$ we have that $(\varphi_{t'}\circ f)^{x'}\in B$. 
     Therefore, for every $\beta\in \Gamma^+(\alpha)$ we have, using the actions defined in both (1) and (2), \begin{align*}
        (\beta)(\varphi_\t\circ f^\g)=((\alpha^{\t\g},\beta^{\t\g}))f&=((\alpha^{\g'\t'},\beta^{\g'\t'}))f\\&=((\alpha^{\t'},\beta^{\g'\t'}))f\\
        &=(\beta^{\g'})(\varphi_{\t'}\circ f)\\
        &=(\beta)(\varphi_{\t'}\circ f)^{\g'}
    \end{align*}
    Since $\varphi_{\t'}\circ f\in B_1$ and $G_\alpha^{\Gamma^+(\alpha)}\leq \Aut(B_1)$, we deduce that $(\varphi_{\t'}\circ f)^{\g'}\in B_1$, and so $\varphi_\t\circ f^\g\in B_1$. We can similarly show that $\psi_\t\circ f^\g\in B_2$---since $\t$ was arbitrary it follows that $f^\g\in C$, proving (3).
    
    On the other hand, suppose that $G^{A\Gamma}\leq\Aut(C)$ and that $b\in B_1$ with $c\in C$ and $\t\in \mathcal{T}$ such that $\varphi_\t\circ c=b$. Let $\g\in G_\alpha$. Then $b^\g=(\varphi_\t\circ c)^\g$, and for each $\beta\in \Gamma^+(\alpha)$ we have \begin{align*}(\beta)b^\g=
        (\beta)(\varphi_\t\circ c)^\g=(\beta^\g)(\varphi_\t\circ c)=((\alpha^\t,\beta^{\g\t}))c&=((\alpha^{\g\t},\beta^{\g\t}))c\\&=((\alpha,\beta))c^{\g\t}\\&=(\beta)(\varphi_1\circ c^{\g\t})%\\&=(\{\alpha^{t'x'},\beta^{t'x'}\})c\\&=(\{\alpha^{t'},\beta^{t'}\})c^{x'}\\&=(\beta)(\varphi_{t'}\circ c^{x'}).\\
    \end{align*}
    By assumption $\g\t$ induces an automorphism of $C$, and hence $c^{\g\t}\in C$, whence $b^\g=(\varphi_1\circ c^{\g\t})\in B_1$ by the definition of $C$. Therefore $\g$ induces an automorphism of $B_1$. We can similarly check that $\g$ induces an automorphism of $B_2$, proving (4).
    %$f^x\in C$, so $(\varphi_{t'}\circ f)^{x'}=\varphi_t\circ f^x\in B$
\end{proof}

\begin{remark} {\rm
The additional condition of statement (4) of Theorems~\ref{thm:AutomorphismsU} and~\ref{thm:AutomorphismsD} is necessary. Indeed, consider the Kneser graph $\Gamma=K_{7:2}$ (that is the graph with vertices labelled by 2-element subsets of $\{1,\dots,7\}$ with $\{x,y\}\in E\Gamma$ if and only if $x\cap y=\emptyset$) and the $(10,3,5)$-code $B$ with generator matrix $$\begin{bmatrix}
1& 0& 0& 1 &0 &1& 1& 0& 0& 1\\
0& 1& 0& 1& 1& 0& 1& 0& 1& 0\\
0& 0& 1& 1& 1& 1& 1& 1& 0& 0
\end{bmatrix},$$ defined over $\F=\mathrm{GF}(2)$.
We verify using the computational algebra system {\sc Magma}~\cite{magma} that the automorphism group of $B$ is a 2-group, whereas choosing $\alpha:=\{1,2\}$ we have that $G_\alpha^{\Gamma(\alpha)}=S_5$ (equipped with its degree 10 action), and so $G_\alpha^{\Gamma(\alpha)}\not\leq\Aut(B)$. On the other hand, we compute that $\mathbf{G}(G,\mathcal{T},\alpha,B)$ is the zero code (so the projections $\phi_\t$ do not map onto any non-zero vector), hence $\Aut(C)=S_{105}> G^{E\Gamma}$.

%We next remark that the additional condition required in statement (2) is desirable in general, regardless of the theorem. Indeed, given $b\in B$, should there not be a pair $(c,t)\in C\times T$ such that $\varphi_t\circ c$
}
\end{remark}
It may seem a bit dissatisfying that we need the additional assumption in statement (4) of Theorems~\ref{thm:AutomorphismsU} and~\ref{thm:AutomorphismsD}, but in fact---as the 
next result shows---this assumption is no hindrance.

%In what follows, should the statement of a result not be affected by whether $\Gamma$ is directed or undirected and whether the constructed code is a graph code or a digraph code we shall abuse notation and write ``take $C=\mathbf{C}_\Gamma(G,\mathcal{T},\alpha,B_1,B_2)$" to mean ``set $C$ to be the code $\mathbf{D}_\Gamma(G,\mathcal{T},\alpha,B_1,B_2)$ if $\Gamma$ is directed, and take $C$ to be $\mathbf{G}_\Gamma(G,\mathcal{T},\alpha,B_1)$ otherwise". Should the $\mathbf{C}_\Gamma$ notation be used more than once in a statement it should be assumed that it is always referring to the same type of construction throughout.

\begin{proposition}\label{prop:Bbar}
    Assume Hypotheses~\ref{hyp:GorD}. If $\Gamma$ is undirected define $\overline{B}:=\{b\in B \mid \exists (c,\t)\in C\times\mathcal{T} \text{ such that } \phi_\t\circ c=b\}$ and if $\Gamma$ is directed define $\overline{B_1}:=\{b\in B_1 \mid \exists (c,\t)\in C\times\mathcal{T} \text{ such that } \varphi_\t\circ c=b\}$ and $\overline{B_2}:=\{b\in B_2 \mid \exists (c,\t)\in C\times\mathcal{T} \text{ such that } \psi_\t\circ c=b\}$. Then the following hold:
    \begin{itemize}
        \item[(U)] if $G^{E\Gamma}\leq \Aut(C)$ then $\overline{B}=\{\phi_1\circ c \mid c\in C\}$, and $\overline{B}$ is a linear code. Thus, $C=\mathbf{G}_\Gamma(G,\mathcal{T},\alpha,\overline{B})$ and $C$ satisfies the conditions of Theorem~\ref{thm:AutomorphismsU}(4) with respect to $\overline{B}$.
        \item[(D)]  if $G^{A\Gamma}\leq \Aut(C)$ then $\overline{B_1}=\{\varphi_1\circ c \mid c\in C\}$, $\overline{B_2}=\{\psi_1\circ c \mid c\in C\}$, and both $\overline{B_1}$ and $\overline{B_2}$ are linear codes. Thus, $C=\mathbf{D}_\Gamma(G,\mathcal{T},\alpha,\overline{B_1},\overline{B_2})$ and $C$ satisfies the conditions of Theorem~\ref{thm:AutomorphismsD}(4) with respect to $\overline{B_1}$ and $\overline{B_2}$.
    \end{itemize}
\end{proposition}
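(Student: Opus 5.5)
I will treat the directed case (D), since (U) is the same argument with only the out-neighbourhood maps $\phi_\t$ in place of both $\varphi_\t$ and $\psi_\t$. The plan has four steps, the first being the main one.

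\emph{Step 1: the main equality.} I aim to prove $\overline{B_1}=\{\varphi_1\circ c\mid c\in C\}$. The inclusion $\supseteq$ is immediate, since $1\in\T$ and $\varphi_1\circ c\in B_1$ by the definition of $C$. For $\subseteq$, take $b=\varphi_\t\circ c$ with $c\in C$ and $\t\in\T$, and use the hypothesis $G^{A\Gamma}\le\Aut(C)$ to get $c':=c^{\t}\in C$. For $\beta\in\Gamma^+(\alpha)$, the action from Theorem~\ref{thm:AutomorphismsD}(1) gives
\[
(\beta)(\varphi_1\circ c^{\t})=((\alpha,\beta)^{\t})c=((\alpha^\t,\beta^\t))c=(\beta)(\varphi_\t\circ c).
\]
Hence $b=\varphi_1\circ c'$ with $c'\in C$. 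The same calculation with $\psi$ gives the description of $\overline{B_2}$.

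\emph{Step 2: linearity.} Once Step 1 is in hand, $\overline{B_1}$ is the image of the linear code $C$ (Proposition~\ref{prop:GtoC}(1) and its directed analogue) under the linear map $c\mapsto\varphi_1\circ c$. So $\overline{B_1}$ is a subspace of $B_1$, and likewise $\overline{B_2}$ is a subspace of $B_2$.

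\emph{Step 3: the code is unchanged.} Let $C':=\D_\Gamma(G,\T,\alpha,\overline{B_1},\overline{B_2})$. Since $\overline{B_i}\le B_i$, we have $C'\subseteq C$. Conversely, if $c\in C$ then for every $\t\in\T$ we have $\varphi_\t\circ c\in\overline{B_1}$ and $\psi_\t\circ c\in\overline{B_2}$ directly from the definitions of $\overline{B_1}$ and $\overline{B_2}$, because the pair $(c,\t)$ witnesses membership. Thus $C\subseteq C'$, and so $C=C'$.

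\emph{Step 4: the condition of Theorem~\ref{thm:AutomorphismsD}(4).} Every $b\in\overline{B_1}$ equals $\varphi_\t\circ c$ for some $c\in C=C'$ and $\t\in\T$, by definition, and similarly for $\overline{B_2}$. So the condition holds with respect to $\overline{B_1}$ and $\overline{B_2}$.

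The only nontrivial point is the $\subseteq$ direction in Step 1, where the automorphism hypothesis transports a local view at $\alpha^\t$ back to one at $\alpha$. Everything else is bookkeeping with the definitions.
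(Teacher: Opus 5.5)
Your proposal is correct and takes essentially the same route as the paper. The core step is the same: the identity $\varphi_\t\circ c=\varphi_1\circ c^{\t}$ together with $c^{\t}\in C$. Linearity comes from writing $\overline{B_1}$ as the image of $C$ under $c\mapsto\varphi_1\circ c$. The paper calls the last two claims (that $C$ is unchanged and that the condition of part (4) holds) immediate, and you simply unpack them from the definitions.
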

\begin{proof}
    We prove (U); the proof of (D) is similar. By the definitions of $\phi_\t$ (Definition~\ref{d:gcode}) and $c^\t$ (Theorem~\ref{thm:AutomorphismsU}(1)), $\phi_\t\circ c=\phi_1\circ c^\t$ for every $(c,\t)\in C\times\mathcal{T}$, so the first claim follows from the fact that $\t$ induces an automorphism of $C$. 

    For the second claim, let $b_1,b_2\in \overline{B}$. Then $b_1=\varphi_1\circ c_1$ and $b_2=\varphi_1\circ c_2$ for some $c_1,c_2\in C$. Thus, $b_1+\lambda b_2=\varphi_1\circ(c_1+\lambda c_2)\in \overline{B}$ for every $\lambda\in\F$, hence $\overline{B}$ is a linear code. The final claim is now immediate.
\end{proof}
\begin{corollary}
    Assume Hypotheses~\ref{hyp:GorD}. Then the following hold:
    \begin{itemize}
        \item[(U)] if $\Gamma$ is undirected then $G^{E\Gamma}\leq\Aut(C)$ if and only if $G_\alpha^{\Gamma(\alpha)}\leq \Aut(\overline{B});$ and
        \item[(D)] if $\Gamma$ is directed then $G^{A\Gamma}\leq\Aut(C)$ if and only if $G_\alpha^{\Gamma^+(\alpha)}\leq \Aut(\overline{B}_1)$ and $G_\alpha^{\Gamma^-(\alpha)}\leq \Aut(\overline{B}_2)$.
    \end{itemize}
\end{corollary}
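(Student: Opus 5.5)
Both directions follow by combining Proposition~\ref{prop:Bbar} with Theorems~\ref{thm:AutomorphismsU} and~\ref{thm:AutomorphismsD}. I describe case (U); case (D) is identical, run once for $\overline{B_1}$ with the maps $\varphi_\t$ and once for $\overline{B_2}$ with the maps $\psi_\t$.

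For the forward direction, assume $G^{E\Gamma}\leq\Aut(C)$. Proposition~\ref{prop:Bbar}(U) then gives three things: $\overline{B}$ is a linear code, $C=\mathbf{G}_\Gamma(G,\T,\alpha,\overline{B})$, and every $b\in\overline{B}$ has the form $\phi_1\circ c$ with $c\in C$. The last point says that the hypothesis of Theorem~\ref{thm:AutomorphismsU}(4) holds for the input code $\overline{B}$. Since $G^{E\Gamma}\leq\Aut(C)$, the converse statement in Theorem~\ref{thm:AutomorphismsU}(4) yields $G_\alpha^{\Gamma(\alpha)}\leq\Aut(\overline{B})$.

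For the reverse direction, assume $G_\alpha^{\Gamma(\alpha)}\leq\Aut(\overline{B})$. We cannot quote Proposition~\ref{prop:Bbar} here, because it assumes that $G^{E\Gamma}\leq\Aut(C)$. The work is therefore to show directly that $C=\mathbf{G}_\Gamma(G,\T,\alpha,\overline{B}')$, where $\overline{B}'$ is the span of $\overline{B}$. (We must pass to the span because, without the automorphism hypothesis, $\overline{B}$ might not be closed under addition.)

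\begin{itemize}
\item[(i)] $C\subseteq\mathbf{G}_\Gamma(G,\T,\alpha,\overline{B}')$: for every $c\in C$ and $\t\in\T$, the restriction $\phi_\t\circ c$ lies in $\overline{B}$ by definition.
\item[(ii)] $\mathbf{G}_\Gamma(G,\T,\alpha,\overline{B}')\subseteq C$: since $\overline{B}\subseteq B$ and $B$ is linear, we have $\overline{B}'\subseteq B$, so every constraint defining $C$ is satisfied.
\end{itemize}

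Next, $\Aut(\overline{B})$ maps $\overline{B}$ into itself, so it preserves the span $\overline{B}'$. Hence $G_\alpha^{\Gamma(\alpha)}\leq\Aut(\overline{B}')$, and Theorem~\ref{thm:AutomorphismsU}(3) applied to $\mathbf{G}_\Gamma(G,\T,\alpha,\overline{B}')=C$ gives $G^{E\Gamma}\leq\Aut(C)$.

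The only delicate point is the reverse direction: $\overline{B}$ is not known to be linear a priori, and Proposition~\ref{prop:Bbar} cannot be used there. Taking the span and checking the two inclusions above disposes of both issues. The paper defines $\Aut$ only for linear codes, so if the reader prefers, one can read "$G_\alpha^{\Gamma(\alpha)}\leq\Aut(\overline{B})$" as invariance of the set $\overline{B}$; the argument is unchanged.
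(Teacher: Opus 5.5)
Your proof is correct and follows the route the paper intends: the corollary is stated without proof as an immediate consequence of Proposition~\ref{prop:Bbar} together with parts (3) and (4) of Theorems~\ref{thm:AutomorphismsU} and~\ref{thm:AutomorphismsD}. In the reverse direction you check $C=\mathbf{G}_\Gamma(G,\T,\alpha,\overline{B}')$ directly for the span $\overline{B}'$ instead of quoting Proposition~\ref{prop:Bbar}, whose equality $C=\mathbf{G}_\Gamma(G,\T,\alpha,\overline{B})$ is only asserted under the hypothesis $G^{E\Gamma}\leq\Aut(C)$; this correctly fills in a point (including the possible non-linearity of $\overline{B}$) that the paper leaves implicit.
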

Proposition~\ref{prop:Bbar} tells us that we lose no information by assuming that our choice of input code satisfies $B=\overline{B}$.

When studying the algebraic properties of these codes we will generally make the assumption $G_\alpha^{\Gamma(\alpha)}\leq \Aut(B)$; an easy corollary of Theorem~\ref{thm:AutomorphismsU} is that in this set-up the code $\mathbf{G}_\Gamma(G,\mathcal{T},\alpha,B)$ is independent of the choice of transversal $\mathcal{T}$.
\begin{corollary}\label{cor:Tindep}
    Assume Hypotheses~\ref{hyp:GorD} and let $\mathcal{T}'$ be a transversal of $G_\alpha$ in $G$ with $1\in\T'$. Let $C'$ be the code obtained by replacing $\mathcal{T}$ by $\mathcal{T}'$ in the definition of $C$. If $\Gamma$ is an undirected graph then assume that $G_\alpha^{\Gamma(\alpha)}\leq \Aut(B)$, and if $\Gamma$ is a digraph assume that $G_\alpha^{\Gamma^+(\alpha)}\leq \Aut(B_1)$ and $G_\alpha^{\Gamma^-(\alpha)}\leq \Aut(B_2)$. Then $C'=C$.
\end{corollary}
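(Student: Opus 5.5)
The plan is to prove the undirected case first and then note that the directed case goes through verbatim with the pair $\varphi_\t,\psi_\t$ in place of $\phi_\t$. By symmetry of the roles of $\T$ and $\T'$, it suffices to show $C\subseteq C'$.

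Fix $f\in C$ and $\t'\in\T'$; we must show $\phi_{\t'}\circ f\in B$. Since $\T$ is also a transversal for $G_\alpha$, there are unique $\g\in G_\alpha$ and $\t\in\T$ with $\t'=\g\t$. The map $\phi_{\t'}$ is defined by the same formula $\beta\mapsto\{\alpha^{\t'},\beta^{\t'}\}$ for any group element $\t'$, not only for elements of $\T$. For $\beta\in\Gamma(\alpha)$ we compute
\[
(\beta)(\phi_{\t'}\circ f)=(\{\alpha^{\g\t},\beta^{\g\t}\})f=(\{\alpha^{\t},(\beta^{\g})^{\t}\})f=(\beta^\g)(\phi_\t\circ f)=(\beta)(\phi_\t\circ f)^{\g},
\]
using $\alpha^\g=\alpha$ and the action on $\F^{\Gamma(\alpha)}$ from Theorem~\ref{thm:AutomorphismsU}(2). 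Since $f\in C$ we have $\phi_\t\circ f\in B$. Because $\g\in G_\alpha$ and $G_\alpha^{\Gamma(\alpha)}\leq\Aut(B)$, it follows that $(\phi_\t\circ f)^\g\in B$, and hence $\phi_{\t'}\circ f\in B$. As $\t'$ was arbitrary, $f\in C'$.

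An alternative route is to invoke Theorem~\ref{thm:AutomorphismsU}(3), which gives $f^{\t'}\in C$, and to observe that $\phi_{\t'}\circ f=\phi_1\circ f^{\t'}$ (as in the proof of Proposition~\ref{prop:Bbar}), which lies in $B$ because $1\in\T$. Either route is essentially immediate.

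The only point needing care is the bookkeeping with right actions and the decomposition $\t'=\g\t$ with $\g$ on the left, so that $\g$ fixes $\alpha$ before $\t$ is applied. In the directed case, the same computation for arcs $(\alpha^{\t'},\beta^{\t'})$ and $(\beta^{\t'},\alpha^{\t'})$ uses the two hypotheses on $B_1$ and $B_2$ respectively.
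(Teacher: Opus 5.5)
Your proof is correct, and your ``alternative route'' is exactly the paper's argument: the paper takes $c\in C$ and $\t\in\T'$, gets $c^\t\in C$ from Theorem~\ref{thm:AutomorphismsU}(3), concludes $\phi_\t\circ c=\phi_1\circ c^\t\in B$ because $1\in\T$, and then swaps the roles of $\T$ and $\T'$. Your primary computation via $\t'=\g\t$ with $\g\in G_\alpha$ is just the special case of the calculation proving Theorem~\ref{thm:AutomorphismsD}(3) that is needed here, written out inline, so it is essentially the same approach.
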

\begin{proof}
    We prove the result for $C$ a graph code, the digraph case is identical. Suppose that $c\in C$ and let $\t\in \mathcal{T}'$. Then $c^{\t}\in C$ by Theorem~\ref{thm:AutomorphismsU}(3). In particular, $\phi_\t\circ c=\phi_1\circ c^\t\in B$, so since $\t$ was arbitrary we deduce that $c\in C'$, whence $C\subseteq C'$. The other direction is identical, hence the result.
\end{proof}
The assumption that $G_\alpha^{\Gamma(\alpha)}\leq \Aut(B)$ is necessary, as illustrated by the following example
\begin{example}
    Let $\Gamma$ be the 3-prism with vertices labelled as in Figure~\ref{fig:3prism}. Set $G=\Aut(\Gamma)$, which is verified to be the dihedral group of order 12 generated by $(2\,3)(5\,6)$ and $(1\,5\,3\,4\,2\,6)$. Then $G_1^{\Gamma(1)}=\larry (2\,3)\roger$. Now, take $\F=\mathrm{GF}(4)$ and let $x$ be a primitive element of $\F$, and consider the $[3,1,2]_4$-code $B=\langle (1,0,x)\rangle\leq \F^{\Gamma(1)}$ (here we are fixing ascending order for the neighbourhood of $1$, and by $(1,0,x)$ we mean the function $\Gamma(1)\to \F$ mapping $2\mapsto 1$, $3\mapsto0$, and $4\mapsto x$). To keep our calculations compact we will abuse notation by writing $(i,j,k)b$ to refer to the tuple $((i)b,(j)b,(k)b)$ whenever $b\in B$.
    \begin{figure}[h]
        \centering
        \includegraphics[width=0.3\linewidth]{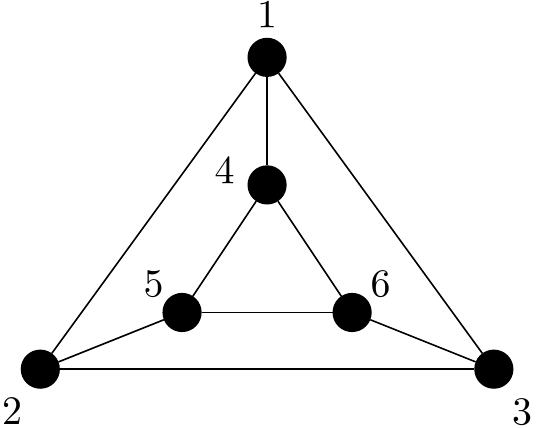}
        \caption{The 3-prism.}
        \label{fig:3prism}
    \end{figure}
    
    Notice that $\Aut(B)$ is trivial, and so $G_1^{\Gamma(1)}\not\leq \Aut(B)$. We now choose two transversals which we shall see yield distinct graph codes. Start by defining the partial transversal 
    $$
    \mathcal{T}_0:=\{1,(1\,2)(4\,5),(1\,3)(4\,6),(1\,5)(2\,4)(3\,6),(1\,6)(2\,5)(3\,4)\}.
    $$ 
    Both transversals $\mathcal{T}$ we construct will contain $\mathcal{T}_0$, so we begin by investigating the constraints imposed by $\mathcal{T}_0$. Suppose that $f\in\F^{E\Gamma}$ is such that $\phi_{\t}\circ f\in B$ for all $\t\in \T$. Then there exist $a_1,a_2\in \F$ such that \begin{equation}\label{eq:a1}
        (a_1,0,a_1x)=(2,3,4)(\phi_{1}\circ f)=((\{1,2\})f,(\{1,3\})f,(\{1,4\})f)\end{equation}
    \begin{equation}\label{eq:a2}
        (a_2,0,a_2x)=(2,3,4)(\phi_{(1\,2)(4\,5)}\circ f)=((\{1,2\})f,(\{2,3\})f,(\{2,5\})f)\end{equation}
    It follows that $(\{1,3\})f=(\{2,3\})f=0$ and
    \begin{equation}\label{eq:exampleconstraints1}
        a_1=a_2=(\{1,2\})f=(\{1,4\})f\cdot x^{-1}=(\{2,5\})f\cdot x^{-1}
    \end{equation}
    Next, there exists $a_3\in\F$ such that
    \begin{equation}%\label{eq:a3}
        (a_3,0,a_3x)=(2,3,4)(\phi_{(1\,3)(4\,6)}\circ f)=((\{2,3\})f,(\{1,3\})f,(\{3,6\})f).
        \end{equation}
    But $(\{2,3\})f=0$ by~\eqref{eq:a2}, and so $a_3=0$ whence $(\{3,6\})f=0$ as well. Next, for some $a_4\in\F$, \begin{equation}\label{eq:a4}
        (a_4,0,a_4x)=(2,3,4)(\phi_{(1\,5)(2\,4)(3\,6)}\circ f)=((\{4,5\})f,(\{5,6\})f,(\{2,5\})f),\end{equation}
    and combining this with~\eqref{eq:exampleconstraints1} we see that $a_4=a_1.$ Finally, for some $a_5\in\F$,
    \begin{equation}\label{eq:a5}
        (a_5,0,a_5x)=(2,3,4)(\phi_{(1\,6)(2\,5)(3\,4)}\circ f)=((\{5,6\})f,(\{4,6\})f,(\{3,6\})f),\end{equation}
    but $(\{5,6\})f=0$ by \eqref{eq:a4}, and so $a_5=0$. This completes the constraints imposed by $\mathcal{T}_0$.

    Now, set $\mathcal{T}_1=\mathcal{T}_0\cup\{(1\,4)(2\,6)(3\,5)\}$ and $\mathcal{T}_2=\mathcal{T}_0\cup\{(1\,4)(2\,5)(3\,6)\}$. If $\phi_{(1\,4)(2\,6)(3\,5)}\circ f\in B$, then for some $a\in\F$, \begin{equation}
        (a,0,ax)=(2,3,4)(\phi_{(1\,4)(2\,6)(3\,5)}\circ f)=((\{4,6\})f,(\{4,5\})f,(\{1,4\})f).
    \end{equation} 
        But then $a=0$ by~\eqref{eq:a5} and so $(\{1,4\})f=0$ which implies that $a_1=0$, and so $\mathbf{G}_\Gamma(G,\mathcal{T}_1,1,B)=0.$

    On the other hand, if $\phi_{(1\,4)(2\,5)(3\,6)}\circ f\in B$, then for some $a'\in\F$, \begin{equation}
        (a',0,a'x)=(2,3,4)(\phi_{(1\,4)(2\,5)(3\,6)}\circ f)=((\{4,5\})f,(\{4,6\})f,(\{1,4\})f).\end{equation}
    This now yields that $a'=a_1$, and there are no further restrictions. Therefore, under lexicographic ordering we have that $\mathbf{G}_\Gamma(G,\mathcal{T}_2,1,B)=\larry (1,0,x,0,x,0,1,0,0)\rangle> \mathbf{G}_\Gamma(G,\mathcal{T}_1,1,B)=0.$
    % \begin{figure}[h]
    %     \centering
    %     \includegraphics[width=0.3\linewidth]{3-prism.png}
    %     \caption{The 3-prism.}
    %     \label{fig:3prism}
    % \end{figure}
\end{example}

%A code $C\leq \F^X$ is called \emph{symmetric} if there is some $H\leq\Aut(C)$ which is transitive on $X$. Such a code is \emph{single-orbit symmetric} if there is some $c\in C^\perp$ such that $C^\perp=\langle c^{H}\rangle$. A family of codes is \emph{highly symmetric} if it is single-orbit symmetric such that $c$ can be chosen of bounded weight---such a family of codes is necessarily LDPC.

Another easy consequence of Theorems~\ref{thm:AutomorphismsU} and~\ref{thm:AutomorphismsD} is that we can often verify that (di)graph codes are symmetric or even highly symmetric just from their ingredient (di)graph. The following corollary is one of the few times that the result for the digraph code is strictly weaker than the corresponding result for the graph code.

\begin{corollary}\label{cor:symmetry}
    Assume Hypotheses~\ref{hyp:GorD}. Then the  following hold: \begin{itemize}
        \item[(U)] If $G_\alpha^{\Gamma(\alpha)}$ is a transitive subgroup of $\Aut(B)$ then $C$ is symmetric. If, in addition, $B$ is single-orbit symmetric with respect to $G_\alpha^{\Gamma(\alpha)}$, then $C$ is highly symmetric with $C^\perp$ generated by $c^G$ for some codeword $c$ of weight at most $|\Gamma(\alpha)|$.
        \item[(D)] Suppose that $G_\alpha^{\Gamma^+(\alpha)}\leq\Aut(B_1)$ and $G_\alpha^{\Gamma^-(\alpha)}\leq \Aut(B_2)$. If $G_\alpha^{\Gamma^+(\alpha)}$ is transitive then $C$ is symmetric. If, in addition, $B_1$ and $B_2$ are single-orbit symmetric with respect to $G_\alpha^{\Gamma^+(\alpha)}$ and $G_\alpha^{\Gamma^-(\alpha)}$, respectively, then $C^\perp=\langle c_1^g,c_2^g : g \in G\rangle$ for some codewords $c_1$ and $c_2$ of weight at most $|\Gamma^+(\alpha)|=|\Gamma^-(\alpha)|.$
    \end{itemize}
\end{corollary}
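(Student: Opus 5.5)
Both parts rest on Theorems~\ref{thm:AutomorphismsU}(3) and~\ref{thm:AutomorphismsD}(3). We treat (U) in detail; (D) is parallel, with arcs in place of edges. For symmetry in (U), the hypothesis gives $G_\alpha^{\Gamma(\alpha)}\leq\Aut(B)$, so Theorem~\ref{thm:AutomorphismsU}(3) yields $G^{E\Gamma}\leq\Aut(C)$. It then remains to show that $G$ is transitive on $E\Gamma$. Take any edge $\{\gamma,\delta\}$. By vertex-transitivity there is $\g\in G$ with $\gamma^{\g}=\alpha$, so $\{\gamma,\delta\}^\g=\{\alpha,\beta\}$ for some $\beta\in\Gamma(\alpha)$. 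Transitivity of $G_\alpha$ on $\Gamma(\alpha)$ then moves $\{\alpha,\beta\}$ to any fixed edge $\{\alpha,\beta_0\}$. Hence $\Aut(C)$ is transitive on $E\Gamma$. In (D) the same argument, using transitivity of $G_\alpha$ on $\Gamma^+(\alpha)$, shows that $G$ is transitive on $A\Gamma$, because every arc has the form $(\gamma,\delta)$ with $\delta\in\Gamma^+(\gamma)$.

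For the dual, we first describe $C^\perp$ in terms of $B^\perp$. For each $\t\in\T$ and $b'\in B^\perp$, let $\widehat{b'}_\t\in\F^{E\Gamma}$ be the lift of $b'$ supported on $E_\t$, namely $(e)\widehat{b'}_\t=(e\phi_\t^{-1})b'$ for $e\in E_\t$ and $0$ otherwise. Then $\langle f,\widehat{b'}_\t\rangle=\langle \phi_\t\circ f,b'\rangle$. So $f\in C$ if and only if $f$ is orthogonal to every $\widehat{b'}_\t$. Taking orthogonal complements (we are in finite dimension, and $C$ is by definition the annihilator of this set) gives
\[
C^\perp=\langle \widehat{b'}_\t \mid \t\in\T,\ b'\in B^\perp\rangle .
\]
Each lift has weight at most $\mathrm{wt}(b')\le|\Gamma(\alpha)|$. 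This step is the main point to check carefully, but it is just the standard fact that $(W^\perp)^\perp=W$.

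Next we relate the lifts to the $G$-action. A direct computation with the action of Theorem~\ref{thm:AutomorphismsU}(1) gives $\widehat{b'}_\t=(\widehat{b'}_1)^{\t^{-1}}$, and for $\g\in G_\alpha$ we get $\widehat{(b'^{\g})}_1=(\widehat{b'}_1)^{\g^{\pm1}}$. The sign of the exponent needs checking, but it is irrelevant since we take whole orbits. Now suppose $B$ is single-orbit symmetric with respect to $G_\alpha^{\Gamma(\alpha)}$, so $B^\perp=\langle b_0^{G_\alpha}\rangle$ for some $b_0$. Put $c=\widehat{(b_0)}_1$. Every $b'\in B^\perp$ is a combination of the $b_0^{\g}$ with $\g\in G_\alpha$, so every $\widehat{b'}_\t$ is a combination of elements $c^{\g\t^{-1}}$ and hence lies in $\langle c^G\rangle$. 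Conversely, $c\in C^\perp$, and $C^\perp$ is $G$-invariant because $G\leq\Aut(C)$ and the action preserves the inner product. Therefore $C^\perp=\langle c^G\rangle$ with $\mathrm{wt}(c)\le|\Gamma(\alpha)|$. Combined with the first paragraph, this shows $C$ is highly symmetric. One point of terminology: although $c$ lies in $C^\perp$, we follow the statement in calling it a codeword.

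For (D), $C$ is cut out by two families of constraints, so $C^\perp$ is spanned by the lifts of $B_1^\perp$ along the $O_\t$ together with the lifts of $B_2^\perp$ along the $I_\t$. Running the same argument separately on each family, with $b_1,b_2$ generating $B_1^\perp,B_2^\perp$ as $G_\alpha$-orbits, produces $c_1,c_2$ with $C^\perp=\langle c_1^\g,c_2^\g:\g\in G\rangle$. Both have weight at most the common valency, which is $|\Gamma^+(\alpha)|=|\Gamma^-(\alpha)|$ by vertex-transitivity. We do not get a single orbit here, which explains why (D) is weaker than (U).
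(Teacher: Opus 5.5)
Your proof is correct and follows essentially the same route as the paper. Symmetry comes from part (3) of the automorphism theorem together with edge/arc-transitivity. For the dual, you lift the $G_\alpha$-orbit generator of $B^\perp$ (respectively of $B_1^\perp$ and $B_2^\perp$) to a vector of weight at most the valency supported at $\alpha$, rewrite each local constraint at $\alpha^{\mathfrak{t}}$ as orthogonality to a $G$-translate of that lift, and use $G$-invariance of $C^\perp$ for the reverse inclusion. Your only departure is organisational: you first identify $C^\perp$ as the span of lifts of all of $B^\perp$ along every $E_{\mathfrak{t}}$, which makes explicit the edge-transitivity and double-annihilator steps the paper leaves implicit, and only then specialise to the orbit generator.
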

\begin{proof}
    That the codes are symmetric follows immediately from part (3) of Theorems~\ref{thm:AutomorphismsU} and~\ref{thm:AutomorphismsD} using the fact that if $G$ is transitive on vertices and $G_\alpha$ is transitive on (out-)neighbours, then $G$ is transitive on edges (resp. arcs).

    Assume now that $\Gamma$ is undirected, and suppose that $B$ is single-orbit symmetric. Thus there exists $b\in B^\perp$ such that $B^\perp=\langle b^{G_\alpha}\rangle$. Define \[c :E\Gamma\to\F\text{ by }\{\alpha,\beta\}\mapsto (\beta)b \text{ and } e\mapsto 0\text{  whenever }\alpha\not\in e.\] 
    Now, since $B^\perp=\larry b^{G_\alpha}\roger$, we see from the definition of the dual code that for any $\t\in\T$, $\phi_\t\circ f\in B$ if and only if, for all $\g\in G_\alpha$,
    \begin{equation}\label{e:dual}
    \sum_{\beta\in\Gamma(\alpha)}\left((\beta)b^\g\cdot
        (\beta)(\phi_\t\circ f)\right)=0.
    \end{equation}
    Therefore, $f\in C$ if and only if \eqref{e:dual} holds 
% $$\sum_{\beta\in\Gamma(\alpha)}\left((\beta)b^\g\cdot(\beta)(\phi_\t\circ f)\right)=0$$ 
    for all $\g\in G_\alpha$ and $\t\in\mathcal{T}$. However, since $(\beta)b^\g=(\{\alpha,\beta\})c^\g$ for all $\beta\in \Gamma(\alpha)$, we deduce that
    \begin{align*}
        \sum_{\beta\in\Gamma(\alpha)}\left((\beta)b^\g\cdot(\beta)(\phi_\t\circ f)\right)=\sum_{\beta\in\Gamma(\alpha)}\left((\beta)b^\g\cdot(\beta)(\phi_1\circ f^\t)\right)&=\sum_{\beta\in\Gamma(\alpha)}\left((\{\alpha,\beta\})c^\g\cdot(\{\alpha,\beta\}) f^\t\right)\\&=\sum_{e\in E\Gamma}(e)c^\g\cdot(e) f^{\t}\\&=\sum_{e\in E\Gamma}(e)c^{\g\t^{-1}}\cdot(e)f.
    \end{align*} Hence $f\in C$ if and only if $f$ is orthogonal to each element of $\{c^{\g\t^{-1}} : \g\in G_\alpha,\t\in\T\}\subseteq c^G.$ Additionally, $C^\perp$ is invariant under $G^{E\Gamma}\leq\Aut(C)$, whence $\langle c^G\rangle= C^\perp$, proving (U).

    Finally, suppose $\Gamma$ is directed, and that $B_1$ and $B_2$ are single-orbit symmetric. Take $b_1\in B_1^\perp$ and $b_2\in B_2^\perp$ such that $B_1^\perp=\langle b_1^{G_\alpha}\rangle$ and $B_2^\perp=\langle b_2^{G_\alpha}\rangle$. Define $c_1 :A\Gamma\to\F$ by $$(\beta,\gamma)\mapsto\begin{cases}
        (\gamma)b_1&\text{ if $\beta=\alpha$}\\
        0&\text{ otherwise,}
    \end{cases}$$ and $c_2 :A\Gamma\to\F$ by $$(\beta,\gamma)\mapsto\begin{cases}
        (\beta)b_2&\text{ if $\gamma=\alpha$}\\
        0&\text{ otherwise.}
    \end{cases}$$
    Since $B_1$ and $B_2$ are single-orbit symmetric with respect to $G_\alpha$, we see that $f\in C$ if and only if $$\sum_{\beta\in\Gamma^+(\alpha)}(\beta)b_1^\g\cdot(\beta)(\varphi_\t\circ f)=0=\sum_{\beta\in\Gamma^-(\alpha)}(\beta)b_2^\g\cdot(\beta)(\psi_\t\circ f)$$ for all $\g\in G_\alpha$ and $\t\in\T$. By a  similar argument to the undirected case, the leftmost expression is equal to 0 exactly when $\sum_{a\in A\Gamma}(a)c_1^{\g\t^{-1}}\cdot(a)f=0$, and the rightmost is equal to 0 if and only if $\sum_{a\in A\Gamma}(a)c_2^{\g\t^{-1}}\cdot(a)f=0$, whence (D) holds.
\end{proof}
\begin{rk}
    The main application of each statement of Corollary~\ref{cor:symmetry} is that if we are careful with our choices of input, we can use our constructions to build families of symmetric LDPC codes, and in the graph code situation we can even get highly-symmetric families.
\end{rk}\medskip
%From now on, provided that $G$ acts transitively on vertices and that $G_\alpha^{\Gamma(\alpha)}\leq\Aut(B)$ we omit $\mathcal{T}$ and $\alpha$ when defining $\mathbf{C}_\Gamma$.
%An immediate consequence of Proposition~\ref{prop:DtoG} is the following proposition---it hints that if looking for a ``good" family of codes constructed from some family of undirected graphs, one is probably better off using the graph codes rather than the digraph codes.

%\begin{proposition}
   % Let $\Gamma=(V,E)$ be an undirected graph, let $\alpha\in V$, let $B\leq \F^{\Gamma(\alpha)}$, and let $\mathcal{T}$ be a transversal for $G_\alpha$ in $G$. Set $C=\mathbf{G}(\mathcal{T},\alpha,B)$ and %$D:=\mathbf{D}(\mathcal{T},\alpha,B)$. %The following hold:
    %\begin{itemize}
        %\item[(1)] $r(D)\geq (1/2)r(C)$; %and
        %\item[(2)] $\delta(D)\leq %\delta(C)$.
 %   \end{itemize}
%\end{proposition}
\section{Isomorphism, connectivity, and decompositions}\label{sec:isocondec}
In this section we investigate how the structure of the input (di)graph impacts the (di)graph code. We start by showing that the graph and digraph code constructions are invariant over isomorphic (di)graphs.

\begin{lemma}\label{lem:isoU}
    Assume Hypotheses~\ref{hyp:GorD} with $\Gamma$ undirected. Also let $\Gamma'=(V',E')$ be a graph isomorphic to $\Gamma$ and $\rho: \Gamma\to\Gamma'$ an isomorphism. Then
    \begin{enumerate}
        \item[(a)] The group $G'=\rho^{-1} G\rho$ is a vertex-transitive subgroup of $\Aut(\Gamma')$.
        \item[(b)] Setting $\alpha':=\alpha^\rho$ and $\T':=\T^\rho$, 
            \begin{enumerate}
                \item[(i)] the map $\tau_0:f\to \rho^{-1}|_{\Gamma(\alpha')}\circ f$ defines a vector space isomorphism $\F^{\Gamma(\alpha)}\to \F^{\Gamma(\alpha')}$ and $B':=(B)\tau_0$ is a linear code in $\F^{\Gamma(\alpha')}$;
                \item[(ii)] for $\t'=\rho^{-1}\t\rho\in\T'$ with $\t\in\T$, the image $(E_\t)^\rho$ is the set $E'_{\t'}$ of edges of $\Gamma'$ incident with $(\alpha')^{\t'}=(\alpha^\t)^\rho$; and $\phi^{\Gamma'}_{\t'}= \rho^{-1}|_{\Gamma(\alpha')}\circ \phi_\t^\Gamma\circ\rho$.
                \item[(iii)] the map $\tau:f\to \rho^{-1}\circ f$ defines a vector space isomorphism $\F^E\to \F^{E'}$ and the image $ C':=( C)\tau$ is equal to the graph code $\mathbf{G}_{\Gamma'}(G',\T', \alpha', B')$. 
            \end{enumerate}  
    \end{enumerate}
\end{lemma}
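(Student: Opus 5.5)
This is a direct verification: everything is obtained by transporting structure along $\rho$, using the paper's right-action conventions. Here $\alpha^{\g\rho}$ means apply $\g$ and then $\rho$, and functions act on the right, so $\rho^{-1}\circ f$ means first apply $\rho^{-1}$ and then $f$.

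For (a), first note that conjugation $\g\mapsto\rho^{-1}\g\rho$ is an isomorphism $\Aut(\Gamma)\to\Aut(\Gamma')$. Indeed, if $\{\beta',\gamma'\}\in E'$, then applying $\rho^{-1}$ gives an edge of $\Gamma$, applying $\g$ keeps it an edge, and applying $\rho$ returns to $E'$. So $G'\leq\Aut(\Gamma')$. Transitivity follows because $V'=V^\rho=(\alpha^G)^\rho=(\alpha^\rho)^{\rho^{-1}G\rho}$. The same computation gives the stabiliser identity $G'_{\alpha'}=\rho^{-1}G_\alpha\rho$. Since conjugation preserves right cosets, $(G_\alpha\t)^\rho=G'_{\alpha'}\t'$. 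Hence $\T'=\{\rho^{-1}\t\rho:\t\in\T\}$ is a right transversal for $G'_{\alpha'}$ in $G'$ containing $1$. This reading of $\T^\rho$ as conjugation is what the hypotheses of Definition~\ref{d:gcode} require, so the triple $(G',\T',\alpha')$ is admissible.

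For (b)(i), $\rho$ maps $\Gamma(\alpha)$ bijectively onto $\Gamma'(\alpha')$, because it maps edges at $\alpha$ to edges at $\alpha^\rho$. Therefore precomposition with $\rho^{-1}|_{\Gamma'(\alpha')}$ is a linear bijection $\F^{\Gamma(\alpha)}\to\F^{\Gamma'(\alpha')}$, and $B'$ is the image of a subspace, hence a subspace.

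For (ii), take $\beta'=\beta^\rho\in\Gamma'(\alpha')$. Then
\[(\beta')\phi^{\Gamma'}_{\t'}=\{\alpha'^{\t'},\beta'^{\t'}\}=\{\alpha^{\t\rho},\beta^{\t\rho}\}=\bigl((\beta)\phi^\Gamma_\t\bigr)^\rho.\]
This gives the displayed factorisation $\phi^{\Gamma'}_{\t'}=\rho^{-1}|\circ\phi_\t\circ\rho$, where $\rho$ acts on edges. Since $\phi_\t$ maps onto $E_\t$ by Proposition~\ref{prop:GtoC}(1), $E'_{\t'}=(E_\t)^\rho$, the set of edges at $\alpha^{\t\rho}$.

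For (iii), $\tau$ is linear and invertible, with inverse $f'\mapsto\rho\circ f'$. The key identity comes from (ii): for $f\in\F^E$,
\[\phi^{\Gamma'}_{\t'}\circ(\rho^{-1}\circ f)=\rho^{-1}|\circ\phi_\t\circ f=(\phi_\t\circ f)\tau_0 .\]
Because $\tau_0$ is injective, $\phi^{\Gamma'}_{\t'}\circ(f)\tau\in B'$ holds if and only if $\phi_\t\circ f\in B$. As $\t$ runs over $\T$, $\t'$ runs over $\T'$, so $f\in C$ if and only if $(f)\tau\in\mathbf{G}_{\Gamma'}(G',\T',\alpha',B')$. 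Surjectivity of $\tau$ then gives $C\tau=C'$.

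The main obstacle is notational bookkeeping rather than mathematics. One must consistently interpret $\T^\rho$ as conjugation, as $\t'=\rho^{-1}\t\rho$ indicates. One must also keep track of the left/right composition order in $\rho^{-1}\circ f$ and of the fact that $\rho$ acts on vertices and edges simultaneously. I would settle these conventions first, then check the key identity in (iii) pointwise on a single neighbour $\beta'$.
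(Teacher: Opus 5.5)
Your proposal is correct and follows essentially the same route as the paper, which writes out the argument only for the directed analogue, Lemma~\ref{lem:isoD}. Your pointwise computation in (ii) and the cancellation $\phi^{\Gamma'}_{\t'}\circ(\rho^{-1}\circ f)=\rho^{-1}|\circ\phi_\t\circ f$, combined with injectivity of $\tau_0$ in (iii), are exactly the paper's steps; your explicit check that $\T'$ is a right transversal of $G'_{\alpha'}$ in $G'$ containing $1$ is a point the paper leaves implicit.
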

We again have an analogue for the digraph case; we omit the proof in the graph case.
\begin{lemma}\label{lem:isoD}
   Assume Hypotheses~\ref{hyp:GorD} with $\Gamma$ directed. Also let $\Gamma'=(V',A')$ be a digraph isomorphic to $\Gamma$ and $\rho: \Gamma\to\Gamma'$ an isomorphism. Then
    \begin{enumerate}
        \item[(a)] The group $G'=\rho^{-1} G\rho$ is a vertex-transitive subgroup of $\Aut(\Gamma')$.
        \item[(b)] Setting $\alpha':=\alpha^\rho$ and $\T':=\T^\rho$, 
            \begin{enumerate}
                \item[(i)] the maps $\tau^+:f\mapsto \rho^{-1}|_{\Gamma^+(\alpha')}\circ f$ and $\tau^-:f\mapsto \rho^{-1}|_{\Gamma^-(\alpha')}\circ f$ define vector space isomorphisms $\F^{\Gamma^+(\alpha)}\to \F^{\Gamma^+(\alpha')}$ and $\F^{\Gamma^-(\alpha)}\to \F^{\Gamma^-(\alpha')}$, respectively. Moreover, $B_1':=(B_1)\tau^+$ and $B_2':=(B_2)\tau^-$ are linear codes in $\F^{\Gamma^+(\alpha')}$ and $\F^{\Gamma^-(\alpha')}$, respectively;
                \item[(ii)] for $\t'=\rho^{-1}\t\rho\in\T'$ with $\t\in\T$, the image $(O_\t)^\rho$ is the set $O'_{\t'}$ of arcs with initial vertex $(\alpha')^{\t'}=(\alpha^\t)^\rho$ and the image $(I_\t)^\rho$ is the set $I'_{\t'}$ of arcs of $\Gamma'$  with terminal vertex $(\alpha')^{\t'}$; and $\varphi^{\Gamma'}_{\t'}= \rho^{-1}|_{\Gamma^+(\alpha')}\circ \varphi_\t^\Gamma\circ\rho$ and $\psi^{\Gamma'}_{\t'}= \rho^{-1}|_{\Gamma^-(\alpha')}\circ \psi_\t^\Gamma\circ\rho$.
                \item[(iii)] the map $\tau:f\mapsto \rho^{-1}\circ f$ defines a vector space isomorphism $\F^A\to \F^{A'}$ and the image $ C':=( C)\tau$ is equal to the digraph code $\mathbf{D}_{\Gamma'}(G',\T', \alpha', B_1',B_2')$. 
            \end{enumerate}  
    \end{enumerate}
\end{lemma}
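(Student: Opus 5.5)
The proof is a direct verification, with each part reducing to the fact that $\rho$ carries arcs to arcs. I will use the paper's right-action conventions, so $\rho$ acts on $V$ on the right and hence also on arcs coordinatewise, and $(x)(\rho^{-1}\circ f)$ means $(x^{\rho^{-1}})f$.

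For (a), note that $\rho^{-1}\g\rho$ maps arcs of $\Gamma'$ to arcs of $\Gamma'$. It does this by pulling back along $\rho^{-1}$, applying $\g\in\Aut(\Gamma)$, and pushing forward along $\rho$. So $G'$ is a subgroup of $\Aut(\Gamma')$. It is transitive because $(\alpha^{\g})^{\rho}=(\alpha^\rho)^{\rho^{-1}\g\rho}$ and $G$ is transitive on $V$. The same conjugation shows that $G'_{\alpha'}=\rho^{-1}G_\alpha\rho$. Consequently $\T'=\rho^{-1}\T\rho$ is a right transversal for $G'_{\alpha'}$ in $G'$ containing $1$. Here I read $\T^\rho$ as the conjugate set, which is what (ii) indicates.

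For (b)(i), observe that $\rho$ restricts to bijections $\Gamma^+(\alpha)\to\Gamma^+(\alpha')$ and $\Gamma^-(\alpha)\to\Gamma^-(\alpha')$, since $(\alpha,\beta)\in A$ if and only if $(\alpha',\beta^\rho)\in A'$. Precomposition with the inverse bijection is linear and invertible. Hence $\tau^\pm$ are isomorphisms, and the images $B_1'$, $B_2'$ are subspaces.

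For (b)(ii), compute $(\alpha')^{\t'}=\alpha^{\rho\rho^{-1}\t\rho}=(\alpha^\t)^\rho$. Since $\rho$ is a digraph isomorphism, it sends the out-arcs of $\alpha^\t$ to the out-arcs of $(\alpha^\t)^\rho$, which gives $(O_\t)^\rho=O'_{\t'}$. The in-arcs are handled the same way. For the formula for $\varphi^{\Gamma'}_{\t'}$, take $\beta'\in\Gamma^+(\alpha')$ and write $\beta=\beta'^{\rho^{-1}}$. Then
\[
(\beta')\varphi^{\Gamma'}_{\t'}=(\alpha'^{\t'},\beta'^{\t'})=(\alpha^{\t\rho},\beta^{\t\rho})=\big((\alpha^\t,\beta^\t)\big)\rho ,
\]
and this is exactly the image of $\beta'$ under $\rho^{-1}|\circ\varphi_\t\circ\rho$. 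The formula for $\psi$ is identical.

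For (b)(iii), $\tau$ is a linear bijection for the same reason as in (i). The main step is to check, for $f\in\F^{A}$ and $\t\in\T$, that
\[
\varphi^{\Gamma'}_{\t'}\circ(\rho^{-1}\circ f)=\rho^{-1}|\circ\varphi_\t\circ\rho\circ\rho^{-1}\circ f=\big(\varphi_\t\circ f\big)\tau^+ ,
\]
and the analogous identity with $\psi$ and $\tau^-$. Because $\tau^+$ is a bijection, $\varphi^{\Gamma'}_{\t'}\circ(f)\tau\in B_1'$ holds if and only if $\varphi_\t\circ f\in B_1$, and likewise for $B_2$. As $\t'$ ranges over $\T'$ exactly when $\t$ ranges over $\T$, we get $(C)\tau=\mathbf{D}_{\Gamma'}(G',\T',\alpha',B_1',B_2')$.

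The only real obstacle is bookkeeping. One must keep the right-action composition conventions consistent, in particular that $\rho^{-1}\circ f$ means first apply $\rho^{-1}$ and then $f$. One must also read $\T^\rho$ as conjugation. Once these conventions are fixed, every step is a one-line check.
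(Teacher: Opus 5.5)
Your proposal is correct and follows essentially the same route as the paper. Each part is checked directly, and (b)(iii) comes from the identity $\varphi^{\Gamma'}_{\t'}\circ(\rho^{-1}\circ f)=\rho^{-1}|_{\Gamma^+(\alpha')}\circ\varphi_\t\circ f$ obtained from (b)(ii), together with its analogue for $\psi$. You also check explicitly that $\T'=\rho^{-1}\T\rho$ is a transversal for $G'_{\alpha'}$ in $G'$ containing $1$, which is needed for $\mathbf{D}_{\Gamma'}(G',\T',\alpha',B_1',B_2')$ to be defined; the paper leaves this detail implicit.
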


\begin{proof}
    Part (a) follows as $\rho$ is a bijection $V\to V'$ which maps $A$ to $A'$, and $G$ is transitive on $V$.

    Part (b)(i) is straightforward, as is the first part of (b)(ii). For the last part of (b)(ii), let $\beta':=\beta^\rho\in \Gamma^{+}(\alpha')$ with $\beta\in \Gamma^+(\alpha)$. Then $\rho^{-1}|_{\Gamma^+(\alpha')}\circ \varphi_\t^\Gamma$ maps $\beta'$ to $(\alpha^t,\beta^t)$, and hence $\rho^{-1}|_{\Gamma^+(\alpha')}\circ \varphi_\t^\Gamma\circ\rho$ maps $\beta'$ to the pair consisting of $(\alpha^\t)^\rho=(\alpha')^{\rho^{-1} \t\rho}=(\alpha')^{\t'}$ and $(\beta^\t)^\rho=(\beta')^{\t'}$. Thus the claim for $\varphi_{\t'}^{\Gamma'}$ follows; the claim for $\psi_{\t'}^{\Gamma'}$ is identical.

    The first assertion of (b)(iii) is straightforward. Consider the image $C':=(C)\tau$, which is a subspace, and hence a linear code in $\F^{A'}$. An element $f'\in\F^{A'}$ is of the form $f'=(f)\tau = \rho^{-1}\circ f$ for some unique $f\in\F^A$, and $f'$ lies in $C'$ if and only if $f\in C$. The latter holds if and only if, for each $\t\in \T$, $\varphi_\t^\Gamma\circ f\in B_1$ and $\psi_\t^\Gamma\circ f\in B_2$. Using part (b)(ii) we have 
    \[
     \varphi^{\Gamma'}_{\t'}\circ f' = (\rho^{-1}|_{\Gamma^+(\alpha')}\circ \varphi_\t^\Gamma\circ\rho)\circ (\rho^{-1}\circ f) = \rho^{-1}|_{\Gamma^+(\alpha')}\circ \varphi^\Gamma_\t \circ f.
    \]
    The condition $\varphi^\Gamma_\t \circ f\in B_1$ is equivalent to $\rho^{-1}|_{\Gamma^+(\alpha')}\circ (\varphi^\Gamma_\t \circ f)$ lying in $(B_1)\tau=B_1'$ by part (b)(i). Thus $\varphi_\t^\Gamma\circ f\in B_1$ if and only if $\varphi^{\Gamma'}_{\t'}\circ f'\in B_1'$ and similarly for $B_2$ with respect to the maps $\psi_\t$. Part (b)(iii) now follows.%It follows that $f'\in C'$, that is to say $f\in  C$, if and only if $f'\in C_{\Gamma'}(\T',S',B')$, so $ C'= C_{\Gamma'}(\T',S',B')$. 
\end{proof}

\begin{remark}\label{r:noniso}
{\rm The converse does not hold. That is, even with a fixed input code $B$ of length $v$, one can find non-isomorphic vertex-transitive graphs of valency $v$ which yield the same graph code. This is easy to see for some `trivial' inputs, for example, the zero code, the full-space code, the repetition code, or any direct sum thereof. However, in fact, the same can be said even when restricting to `sensible' input choices. Indeed, we found computationally that there are non-isomorphic vertex-transitive cubic graphs on 20 points which yield the same graph code when the construction is fed the same input $B$. The input code we used was the binary augmentation code of length 3 (that is the code consisting of all weight-2 vectors), and the graphs were the 10-M\"{o}bius ladder and the unique graph of girth four, diameter four and order 20 (both of these graphs can be found in the database of cubic vertex-transitive graphs on up to 1280 vertices, see~\cite{CubeGraphs}). We  verified that these graphs were the smallest examples of non-isomorphic cubic graphs yielding the same code with a non-trivial input code over $\mathrm{GF}(2)$.
}
\end{remark}

Next we deal with vertex-transitive groups $G$ acting on disconnected (di)graphs $\Gamma$.

\begin{theorem}\label{thm:decompD}
    Let $\Gamma=(V,A)$ be a directed graph admitting a vertex-transitive subgroup $G\leq\Aut(\Gamma)$. Fix $\alpha\in V$ and let $\Sigma$ be the connected component of $\Gamma$ containing $\alpha$. Let $\mathcal{T}$ be a transversal for $G_\alpha$ in $G_{\{\Sigma\}}$ with $1\in\mathcal{T}$ and let $\mathcal{S}$ be a transversal for $G_{\{\Sigma\}}$ in $G$ with $1\in\mathcal{S}$. Then the product $\mathcal{TS}$ is a transversal for $G_\alpha$ in $G$, and for each $\s\in \mathcal{S}$, $\mathcal{T}^{\s}$ is a transversal for $G_{\{\Sigma^\s\}}$ in $G$. Moreover, for any $B_1\leq \F^{\Gamma^+(\alpha)}$ and $B_2\leq \F^{\Gamma^-(\alpha)}$ we have a direct sum decomposition \begin{equation}\label{eq:disconnect}
    \mathbf{D}_\Gamma(G,\mathcal{TS},\alpha,B_1,B_2)=\bigoplus_{\s\in \mathcal{S}} \mathbf{D}_{\Sigma^\s}(G_{\{\Sigma^\s\}},\mathcal{T}^\s,\alpha^\s,B_1^\s,B_2^\s),
    \end{equation}
    where $B_1^\s$ denotes the code obtained from the obvious isomorphism $\F^{\Gamma^+(\alpha)}\leftrightarrow\F^{\Gamma^+(\alpha)^\s}$ (and analogously $B_2^\s$). Additionally, all direct summands on the righthand side of~\eqref{eq:disconnect} are isomorphic.
\end{theorem}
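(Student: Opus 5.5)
We verify the two transversal assertions first, then match the defining constraints of the code on $\Gamma$ with those of the codes on the components, and finally use Lemma~\ref{lem:isoD} to show the summands are isomorphic. Throughout, ``connected component'' means weak connectivity, so every arc of $\Gamma$ lies inside a single component.

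\emph{Transversals and components.} Each $\g\in G$ is an automorphism, so it permutes the components. Hence $G_\alpha\le G_{\{\Sigma\}}$, because $\Sigma$ is the component containing $\alpha$. Then
\[
G=\bigsqcup_{\s\in\mathcal S}G_{\{\Sigma\}}\s=\bigsqcup_{\s\in\mathcal S}\bigsqcup_{\t\in\mathcal T}G_\alpha\t\s,
\]
which shows $\mathcal{TS}$ is a transversal for $G_\alpha$ in $G$, and it contains $1$. I will read $\mathcal T^\s$ as the conjugate $\s^{-1}\mathcal T\s$. Since $G_{\alpha^\s}=\s^{-1}G_\alpha\s$ and $G_{\{\Sigma^\s\}}=\s^{-1}G_{\{\Sigma\}}\s$, conjugating by $\s$ shows that $\mathcal T^\s$ is a transversal for $G_{\alpha^\s}$ in $G_{\{\Sigma^\s\}}$ containing $1$. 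This is the correct form of the assertion; the statement's ``for $G_{\{\Sigma^\s\}}$ in $G$'' should be read this way. By vertex-transitivity every component has the form $\Sigma^\g$, and $\Sigma^\g=\Sigma^\s$ exactly when $\g\in G_{\{\Sigma\}}\s$. So the components are precisely the $\Sigma^\s$ for $\s\in\mathcal S$, and they are pairwise distinct. Consequently $A\Gamma=\bigsqcup_{\s}A(\Sigma^\s)$ and $\F^{A\Gamma}=\bigoplus_\s\F^{A(\Sigma^\s)}$.

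\emph{Matching constraints.} Fix $\t\in\mathcal T$ and $\s\in\mathcal S$, and write $\t\s=\s\cdot\t^\s$ with $\t^\s=\s^{-1}\t\s$. For $\beta\in\Gamma^+(\alpha)$ we get
\[
(\beta)\varphi^\Gamma_{\t\s}=(\alpha^{\t\s},\beta^{\t\s})=\bigl((\alpha^\s)^{\t^\s},(\beta^\s)^{\t^\s}\bigr)=(\beta^\s)\varphi^{\Sigma^\s}_{\t^\s}.
\]
Here $\Gamma^+(\alpha)^\s=\Sigma^{\s+}(\alpha^\s)$ because out-neighbours of $\alpha$ lie in $\Sigma$. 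It follows that $\varphi^\Gamma_{\t\s}\circ f\in B_1$ if and only if $\varphi^{\Sigma^\s}_{\t^\s}\circ f|_{A(\Sigma^\s)}\in B_1^\s$, where $B_1^\s$ is the transport of $B_1$ along $\beta\mapsto\beta^\s$. The same argument applies to $\psi$ and $B_2$. Each constraint indexed by $\t\s$ involves only the coordinates in $A(\Sigma^\s)$, and each $\Sigma^\s$ receives exactly the constraints of its own digraph code. Therefore $f$ lies in the left side of~\eqref{eq:disconnect} if and only if each restriction $f|_{A(\Sigma^\s)}$ lies in the corresponding summand, which gives the direct sum decomposition.

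\emph{Isomorphism of summands.} Apply Lemma~\ref{lem:isoD} with $\rho=\s|_\Sigma\colon\Sigma\to\Sigma^\s$, taking as vertex-transitive group the permutation group induced by $G_{\{\Sigma\}}$ on $\Sigma$. Then $\rho^{-1}G_{\{\Sigma\}}\rho$ induces $G_{\{\Sigma^\s\}}$ on $\Sigma^\s$, while $\mathcal T$, $\alpha$, $B_1$, $B_2$ are carried to $\mathcal T^\s$, $\alpha^\s$, $B_1^\s$, $B_2^\s$. So every summand is isomorphic to the one with $\s=1$.

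The main obstacle is this last step. $G_{\{\Sigma\}}$ need not act faithfully on $\Sigma$, so $\mathcal T$ should be replaced by its image in $G_{\{\Sigma\}}^\Sigma$, and this image may fail to be a transversal of a point stabiliser there. To handle it, I will note that the definition of the code only uses the maps $\varphi_\t$ and $\psi_\t$, that is, the induced permutations. I will then check directly that the computation in the proof of Lemma~\ref{lem:isoD} goes through verbatim for such a possibly non-faithful group. Alternatively, I can prove the isomorphism by hand using $\tau\colon f\mapsto \rho^{-1}\circ f$ together with the displayed identity $\varphi^{\Sigma^\s}_{\t^\s}=\s^{-1}|\circ\varphi^\Sigma_\t\circ\s$.
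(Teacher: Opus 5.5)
Your proposal is correct and follows the paper's proof. It makes the same observation that $\mathcal{TS}$ is a transversal, matches constraints through the same factorisation $\t\s=\s\t^\s$ over the partition of arcs into components, and appeals to Lemma~\ref{lem:isoD} via $\s$ in the same way; your reading of $\mathcal{T}^\s$ as a transversal for $G_{\alpha^\s}$ in $G_{\{\Sigma^\s\}}$ is the correct one.

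The obstacle you flag at the end is not real. Since $G_{(\Sigma)}\le G_\alpha$, the image of $\mathcal{T}$ in $G_{\{\Sigma\}}^\Sigma$ is automatically a transversal, containing $1$, for the stabiliser of $\alpha$ there, so Lemma~\ref{lem:isoD} applies directly to the induced group.
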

\begin{proof}
    The transversal claims are clear.
    
    Observe that since $G\leq \Aut(\Gamma)$ is vertex-transitive, the components of $\Gamma$ are precisely the digraphs $\Sigma^\s$, and moreover, the out-neighbours (resp. in-neighbours) of $\alpha^\s$ are $\Gamma^+(\alpha)^\s$ (resp. $\Gamma^-(\alpha)^\s$). Let $f\in \F^{A\Gamma}$ and observe that for all $\t\in \mathcal{T}$, $\s\in \mathcal{S}$ and $\beta\in \Gamma^+(\alpha)$ we have $$(\beta)(\varphi_{\t\s}\circ f)=(\alpha^{\t\s},\beta^{\t\s})f=(\alpha^{\s\t^\s},\beta^{\s\t^\s})f=(\beta^\s)(\varphi_{\t^\s}\circ f).$$ An identical argument can be made for in-neighbours. In particular, $f\in \mathbf{D}_\Gamma(G,\mathcal{TS},\alpha,B_1,B_2)$ if and only if $f\mid_{A\Sigma^\s}\in\mathbf{D}_{\Sigma^\s}(G_{\{\Sigma^\s\}},\mathcal{T}^\s,\alpha^\s,B_1,B_2)$ for all $\s\in\mathcal{S}$. The claimed equation~\eqref{eq:disconnect} follows from the fact that the arcs sets $A\Sigma^\s$ form a partition of $A\Gamma$.

    Since $\Gamma$ is vertex-transitive, its components are pairwise isomorphic, with the isomorphism $\Sigma\leftrightarrow \Sigma^\s$ witnessed by the action of $\s$. Therefore the final claim follows from Lemma~\ref{lem:isoD}.
   % Let $s\in S$. To prove the final claim it suffices to show that $$C_1:=\mathbf{SD}_{\Sigma(\alpha^s)}(\mathcal{T}^s,\alpha^s,B_1,B_2)\cong \mathbf{SD}_{\Sigma(\alpha)}(\mathcal{T},\alpha,B_1,B_2)=:C_2.$$ Observe that for all $t\in\mathcal{T}$ and $\beta\in \Gamma^+(\alpha)$ we have that $$(\beta^s)(\varphi_{t^s}\circ f^{s^{-1}})=(\alpha^{ts},\beta^{ts})(f^{s^{-1}})=(\beta)(\varphi_{t}\circ f),$$ and a similar equality for the in-neighbours. Therefore the map $C_1\to C_2$ given by $f\mapsto f^{s^{-1}}$ defines an isomorphism of codes.
\end{proof}
Of course, with a similar proof we get the undirected analogue Theorem~\ref{thm:conndecompmain}. 
%\begin{theorem}\label{thm:decompU}
 %   Let $\Gamma=(V,E)$ be an undirected graph admitting a vertex-transitive subgroup $G\leq\Aut(\Gamma)$. Fix $\alpha\in V\Gamma$, and let $\mathcal{T}$ be a transversal for $G_\alpha$ in $G_{\{\Sigma(\alpha)\}}$ with $1\in\mathcal{T}$ and let $\mathcal{S}$ be a transversal for $G_{\{\Sigma(\alpha)\}}$ in $G$ with $1\in\mathcal{S}$. Then for any $B\leq \F^{\Gamma(\alpha)}$ we have a decomposition \begin{equation}\label{eq:disconnectU}\mathbf{G}_\Gamma(G,\mathcal{TS},\alpha,B)=\bigoplus_{\s\in \mathcal{S}} \mathbf{G}_{\Sigma(\alpha^\s)}(G_{\{\Sigma(\alpha)\}}^s,\mathcal{T}^\s,\alpha^\s,B^\s),\end{equation}
  %  where  all direct factors in are isomorphic.
%\end{theorem}
Theorems~\ref{thm:conndecompmain} and~\ref{thm:decompD} indicate that---for these constructions---we can restrict our focus to connected graphs; this generalises~\cite[Theorem 1.2]{AruPraRad}.

Using connected components is not the only natural way to decompose the edge- and arc-sets. We can also use an \emph{orbital decomposition}. Suppose we have a (un)directed graph $\Gamma$ with $G\leq\Aut(\Gamma)$ acting transitively on $V\Gamma$, and fix $\alpha\in V\Gamma$. Then $G$ induces an action on the arcs (resp. edges) of $\Gamma$, and hence induces a partition of the arcs (resp. edges) into $G$-orbits. We start by listing a couple of easy and standard facts about this decomposition.

\begin{lemma}\label{lem:orbitalfacts}
    Let $\Gamma$ be a (un)directed graph $\Gamma$ with $G\leq\Aut(\Gamma)$ acting transitively on $V\Gamma$, and fix $\alpha\in V\Gamma$. The following hold.
    \begin{itemize}
        \item[(1)] If $\Gamma$ is directed then each $G$-orbit on $A\Gamma$ contains an element of the form $(\alpha,\beta)$ where $\beta\in \Gamma^+(\alpha)$, and each such element is contained in exactly one $G$-orbit. If $\Gamma$ is undirected then each $G$-orbit  on $E\Gamma$ contains an element of the form $\{\alpha,\beta\}$ where $\beta\in \Gamma(\alpha)$, and each such element is contained in exactly one $G$-orbit.
        
        \item[(2)] Let us list the orbits of $G$ on arcs (resp. edges) as $O_1,O_2,\dots, O_m$. For $1\leq i\leq m$ define the \emph{orbital subgraph} $\Gamma_i$ to have vertex set $V\Gamma$ and arc set (resp. edge set) $O_i$. Then the action of $G$ on $\Gamma_i$ is vertex-transitive and arc-transitive (resp. edge-transitive).
    \end{itemize}
\end{lemma}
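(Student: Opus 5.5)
Both parts of Lemma~\ref{lem:orbitalfacts} follow from vertex-transitivity of $G$ together with the fact that $G\leq\Aut(\Gamma)$ preserves adjacency. I treat the directed case in full; the undirected case is the same argument with edges in place of arcs.

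For (1), let $O$ be a $G$-orbit on $A\Gamma$ and choose an arc $(\gamma,\delta)\in O$. By transitivity of $G$ on $V\Gamma$ there is some $\g\in G$ with $\gamma^\g=\alpha$. Then $(\gamma,\delta)^\g=(\alpha,\delta^\g)$ lies in $O$. Since $\g$ is an automorphism, $(\alpha,\delta^\g)$ is an arc, so $\delta^\g\in\Gamma^+(\alpha)$. The second claim of (1) is immediate, because the $G$-orbits partition $A\Gamma$.

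For the undirected case, take an edge $\{\gamma,\delta\}$ in the orbit and choose $\g$ with $\gamma^\g=\alpha$. Then $\{\gamma,\delta\}^\g=\{\alpha,\delta^\g\}$ with $\delta^\g\in\Gamma(\alpha)$, and uniqueness of the orbit again comes from the orbit partition of $E\Gamma$.

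For (2), I first check that $G\leq\Aut(\Gamma_i)$. Each $\g\in G$ permutes $V\Gamma=V\Gamma_i$, and it maps $O_i$ to $O_i$ because $O_i$ is a $G$-orbit. So $\g$ preserves the arc set (resp. edge set) of $\Gamma_i$. Vertex-transitivity of $G$ on $\Gamma_i$ is inherited from its transitivity on $V\Gamma$. Arc-transitivity (resp. edge-transitivity) holds because $A\Gamma_i=O_i$ (resp. $E\Gamma_i=O_i$) is by definition a single $G$-orbit.

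There is no real obstacle here. The only point needing care is that every $\Gamma_i$ has the full vertex set $V\Gamma$. This is what makes vertex-transitivity immediate, even though $\Gamma_i$ may be disconnected. It may also contain vertices of valency zero with respect to $O_i$; in fact that cannot happen, since by (1) the vertex $\alpha$ is incident to some arc in $O_i$ and $G$ is transitive on vertices.
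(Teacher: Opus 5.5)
Your proof is correct. The paper states this lemma without proof as ``easy and standard facts'', and your argument is exactly the routine verification intended: use vertex-transitivity to move an arc's initial vertex to $\alpha$, note that the orbits partition the arc (or edge) set, and observe that each $O_i$ is a $G$-invariant set forming a single orbit.
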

\begin{theorem}\label{thm:orbitaldecompD}
    Assume Hypotheses~\ref{hyp:GorD} with $\Gamma$ directed. Consider the decomposition of $A\Gamma$ given in Lemma~\ref{lem:orbitalfacts} and, for each $i\leq m$, write $\Gamma_i^+(\alpha)$ and $\Gamma_i^-(\alpha)$ for the out-neighbours and in-neighbours, respectively, of $\alpha$ in $\Gamma_i$. Then $\F^{\Gamma^+(\alpha)}$ decomposes as $\bigoplus_{i\leq m}\F^{\Gamma_i^+(\alpha)}$ and a similar decomposition exists for $\F^{\Gamma^-(\alpha)}$. Suppose that $B_1$ and $B_2$ decompose as direct sums $\bigoplus_{i\leq m}B_{1i}$ and $\bigoplus_{i\leq m}B_{2i}$, respectively, with $B_{1i}\leq \F^{\Gamma_i^+(\alpha)}$ and $B_{2i}\leq \F^{\Gamma_i^-(\alpha)}$ for each $i\leq m$. Then 
    $$
\mathbf{D}_\Gamma(G,\T,\alpha,B_1,B_2)=\bigoplus_{1\leq i\leq m}\mathbf{D}_\Gamma(G,\T,\alpha,B_{1i},B_{2i}).
$$ 
Moreover, if $G_\alpha^{\Gamma_i^+(\alpha)}\leq\Aut(B_{1i})$ and $G_\alpha^{\Gamma_i^-(\alpha)}\leq\Aut(B_{2i})$ for $1\leq i\leq m$, then each direct summand is a symmetric linear code.
\end{theorem}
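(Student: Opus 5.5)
The plan is to reduce everything to the observation that the maps $\varphi_\t$ and $\psi_\t$ respect the orbital decomposition. I read $\mathbf{D}_\Gamma(G,\T,\alpha,B_{1i},B_{2i})$ in the statement as the digraph code $\mathbf{D}_{\Gamma_i}(G,\T,\alpha,B_{1i},B_{2i})\leq \F^{O_i}$ on the orbital subgraph $\Gamma_i$, viewed inside $\F^{A\Gamma}=\bigoplus_{i\leq m}\F^{O_i}$.

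\emph{Setting up the decomposition.} Since each $O_i$ is $G$-invariant, $G\leq \Aut(\Gamma_i)$ for every $i$. Also $V\Gamma_i=V\Gamma$, so $G$ is vertex-transitive on $\Gamma_i$ and $\T$ is still a transversal for $G_\alpha$ in $G$. Hence the codes on the right-hand side are well defined. Because the $O_i$ partition $A\Gamma$, the out-neighbourhood $\Gamma^+(\alpha)$ is the disjoint union of the $\Gamma_i^+(\alpha)$, and likewise for in-neighbourhoods. This gives the stated decompositions of $\F^{\Gamma^\pm(\alpha)}$. By Lemma~\ref{lem:orbitalfacts}(1), each $\Gamma_i^+(\alpha)$ is nonempty. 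Since $\Gamma_i$ is finite and vertex-transitive, its in- and out-valencies agree, so each $\Gamma_i^-(\alpha)$ is nonempty as well.

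\emph{Key step: the local maps respect the orbits.} Take $\t\in\T$ and $\beta\in\Gamma_i^+(\alpha)$. The arc $\beta\varphi_\t=(\alpha^\t,\beta^\t)$ is the image of $(\alpha,\beta)\in O_i$ under $\t\in G$, so it lies in $O_i$. Therefore $\varphi_\t$ restricted to $\Gamma_i^+(\alpha)$ is exactly $\varphi_\t^{\Gamma_i}$. Consequently, for $f\in\F^{A\Gamma}$, the restriction of $\varphi_\t\circ f$ to $\Gamma_i^+(\alpha)$ equals $\varphi^{\Gamma_i}_\t\circ (f|_{O_i})$, and the same holds for $\psi_\t$ with in-neighbourhoods. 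Since $B_1=\bigoplus_i B_{1i}$, we have $\varphi_\t\circ f\in B_1$ if and only if $\varphi^{\Gamma_i}_\t\circ(f|_{O_i})\in B_{1i}$ for every $i$, and similarly for $B_2$. Quantifying over all $\t\in\T$ shows that $f\in C$ if and only if $f|_{O_i}\in\mathbf{D}_{\Gamma_i}(G,\T,\alpha,B_{1i},B_{2i})$ for each $i$. Since the $O_i$ partition $A\Gamma$, this is precisely the claimed direct sum.

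\emph{Symmetry.} Under the extra hypotheses, Theorem~\ref{thm:AutomorphismsD}(3) applied to $\Gamma_i$ gives $G^{O_i}\leq\Aut(\mathbf{D}_{\Gamma_i}(G,\T,\alpha,B_{1i},B_{2i}))$. By Lemma~\ref{lem:orbitalfacts}(2), $G$ is transitive on $O_i$, so each summand is symmetric. Equivalently, $G_\alpha$ is transitive on $\Gamma_i^+(\alpha)$, and Corollary~\ref{cor:symmetry}(D) applies directly.

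I expect no real obstacle here. The only delicate points are the notational interpretation of the summands and checking that the transversal and the nonemptiness of the neighbourhoods carry over to each $\Gamma_i$; the key step is a one-line orbit argument.
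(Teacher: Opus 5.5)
Your proposal is correct and follows essentially the same route as the paper. The paper also observes that $\varphi_\t\circ f$ restricted to $\Gamma_i^+(\alpha)$ equals $\varphi^{\Gamma_i}_\t\circ f|_{O_i}$ (and likewise for $\psi_\t$), splits the constraints using the direct-sum structure of $B_1$ and $B_2$, and gets symmetry from Theorem~\ref{thm:AutomorphismsD}(3) and Lemma~\ref{lem:orbitalfacts}(2). Your reading of the summands as $\mathbf{D}_{\Gamma_i}(G,\T,\alpha,B_{1i},B_{2i})$ matches what the paper's proof implicitly uses, and your extra checks (that $G$ is vertex-transitive on each $\Gamma_i$, and the orbit argument placing $(\alpha^\t,\beta^\t)$ in $O_i$) simply make explicit what the paper leaves unsaid.
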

\begin{proof}
    Let $f\in \F^{A\Gamma}$ and let $\t\in\T$. Then $(\beta)\varphi^\Gamma_\t\circ f=(\beta)\varphi^{\Gamma_i}_\t\circ f|_{A\Gamma_i}$, where $\Gamma_i$ is the (unique) orbital subgraph with $\beta\in\Gamma_i^+(\alpha)$. In particular, since the $A\Gamma_i$ partition $A\Gamma$, we deduce that $\varphi_\t^{\Gamma}\circ f\in B_1$ if and only if $\varphi^{\Gamma_i}_\t\circ f|_{A\Gamma_i}\in B_{1i}$ for all $i\leq m$. Similarly, $\psi^\Gamma_\t\circ f\in B_2$ if and only if $\psi^{\Gamma_i}_\t\circ f|_{A\Gamma_i}\in B_{2i}$ for all $i\leq m$. The claimed decomposition now follows from the definition of the digraph code.

    The second claim follows from Theorem~\ref{thm:AutomorphismsD} and Lemma~\ref{lem:orbitalfacts}(2).
\end{proof}

As usual, an undirected version of Theorem~\ref{thm:orbitaldecompD} follows with an analogous proof.
\begin{theorem}\label{thm:orbitaldecompU}
    Assume Hypotheses~\ref{hyp:GorD} with $\Gamma$ undirected. Consider the decomposition of $E\Gamma$ given in Lemma~\ref{lem:orbitalfacts} and write $\Gamma_i(\alpha)$ for the neighbours of $\alpha$ in $\Gamma_i$ so that $\F^{\Gamma(\alpha)}$ decomposes as $\bigoplus_{i\leq m}\F^{\Gamma_i(\alpha)}$. Suppose that $B$ decomposes as a direct sum $\bigoplus_{i\leq m}B_{i}$ with $B_{i}\leq \F^{\Gamma_i(\alpha)}$ for each $i\leq m$. Then $$\mathbf{G}_\Gamma(G,\T,\alpha,B)=\bigoplus_{1\leq i\leq m}\mathbf{G}_\Gamma(G,\T,\alpha,B_{i}).$$ Moreover, if $G_\alpha^{\Gamma_i(\alpha)}\leq\Aut(B_{i})$ for $1\leq i\leq m$, then each direct summand is a symmetric linear code.
\end{theorem}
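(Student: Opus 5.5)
The plan is to imitate the proof of Theorem~\ref{thm:orbitaldecompD}, replacing arcs by edges and the pair of maps $\varphi_\t,\psi_\t$ by the single map $\phi_\t$.

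First, fix the meaning of the summands. The code $B_i$ lives on $\F^{\Gamma_i(\alpha)}$, so the summand $\mathbf{G}_\Gamma(G,\T,\alpha,B_i)$ should be read as the graph code $\mathbf{G}_{\Gamma_i}(G,\T,\alpha,B_i)\leq\F^{E\Gamma_i}$. We regard it inside $\F^{E\Gamma}$ by extending by zero off $E\Gamma_i$. By Lemma~\ref{lem:orbitalfacts}(2), $G$ is vertex-transitive on $\Gamma_i$, and $\T$ is still a transversal for $G_\alpha$ in $G$, so this graph code is well defined. Lemma~\ref{lem:orbitalfacts}(1) shows that the edge sets $E\Gamma_i$ partition $E\Gamma$. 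Each edge $\{\alpha,\beta\}$ lies in exactly one orbit, so the sets $\Gamma_i(\alpha)$ partition $\Gamma(\alpha)$. This gives $\F^{\Gamma(\alpha)}=\bigoplus_i\F^{\Gamma_i(\alpha)}$ and $\F^{E\Gamma}=\bigoplus_i\F^{E\Gamma_i}$.

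The key step is a compatibility identity. For $f\in\F^{E\Gamma}$, $\t\in\T$ and $\beta\in\Gamma_i(\alpha)$, I will check that
\[(\beta)(\phi^\Gamma_\t\circ f)=(\{\alpha^\t,\beta^\t\})f=(\beta)(\phi^{\Gamma_i}_\t\circ f|_{E\Gamma_i}).\]
This holds because $\{\alpha,\beta\}\in E\Gamma_i$ and $E\Gamma_i$ is $G$-invariant, so $\{\alpha^\t,\beta^\t\}\in E\Gamma_i$. Hence $\phi^\Gamma_\t\circ f$ restricted to $\Gamma_i(\alpha)$ equals $\phi^{\Gamma_i}_\t\circ f|_{E\Gamma_i}$. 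Since $B=\bigoplus_i B_i$ with respect to the partition of $\Gamma(\alpha)$, a vector lies in $B$ if and only if each of its restrictions to $\Gamma_i(\alpha)$ lies in $B_i$. Therefore $\phi^\Gamma_\t\circ f\in B$ if and only if $\phi^{\Gamma_i}_\t\circ f|_{E\Gamma_i}\in B_i$ for every $i$.

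Ranging over all $\t\in\T$, this says that $f\in\mathbf{G}_\Gamma(G,\T,\alpha,B)$ exactly when $f|_{E\Gamma_i}\in\mathbf{G}_{\Gamma_i}(G,\T,\alpha,B_i)$ for all $i$. Because the $E\Gamma_i$ partition $E\Gamma$, this is precisely the direct sum decomposition.

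For the symmetry claim, suppose $G_\alpha^{\Gamma_i(\alpha)}\leq\Aut(B_i)$. Theorem~\ref{thm:AutomorphismsU}(3), applied to $\Gamma_i$, gives $G^{E\Gamma_i}\leq\Aut(\mathbf{G}_{\Gamma_i}(G,\T,\alpha,B_i))$. By Lemma~\ref{lem:orbitalfacts}(2), $G$ is transitive on $E\Gamma_i$, so the $i$th summand is symmetric.

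I expect no serious obstacle. The only point requiring care is the first step: making the identification of the summands precise and checking that each $\phi_\t$ maps $\Gamma_i(\alpha)$ into $E\Gamma_i$, which uses $G$-invariance of the orbits.
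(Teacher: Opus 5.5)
Your proposal is correct and follows essentially the same route as the paper, which proves this statement as the undirected analogue of Theorem~\ref{thm:orbitaldecompD}. The paper's argument uses the same ingredients you do: the identity $(\beta)(\phi^\Gamma_\t\circ f)=(\beta)(\phi^{\Gamma_i}_\t\circ f|_{E\Gamma_i})$ for $\beta\in\Gamma_i(\alpha)$ together with the partition of $E\Gamma$ into orbits gives the decomposition, and Theorem~\ref{thm:AutomorphismsU}(3) with Lemma~\ref{lem:orbitalfacts}(2) gives symmetry of each summand. Your reading of the summands as the zero-extended codes $\mathbf{G}_{\Gamma_i}(G,\T,\alpha,B_i)$ is the interpretation implicit in the paper.
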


\section{Statistics of digraph codes}\label{sec:stats}
We begin this section by proving a generalisation of the Alon--Chung Expander Mixing Lemma from~\cite{AlonChung}, adapted for digraphs. We remark that this result is probably well-known, but we have failed to locate it in the literature. Given a real symmetric matrix $M$ we may denote its eigenvalues as $\lambda_1,\lambda_2,\dots,\lambda_n$ such that 
$$
|\lambda_1|\geq|\lambda_2|\geq\cdots\geq |\lambda_n|.
$$ 
Labelled in this way, we write $\lambda(M):=|\lambda_2|$, the \emph{absolute value of the eigenvalue of $M$ of second largest magnitude}.

\begin{proposition}\label{prop:mixinglem}
    Let $\Gamma=(V,A)$ be a connected directed graph with $|\Gamma^+(\alpha)|=|\Gamma^-(\alpha)|=v$ for all $\alpha\in V$. Let $M$ be the adjacency matrix of $\Gamma$, set $\hat{M}=M+M^{T}$ and let $\lambda=\lambda(\hat{M})$. For any $\epsilon$ such that $0<\epsilon<1$ and any $X\subseteq V$ of size $\epsilon n$, the subgraph of $\Gamma$ induced on $X$ has at most $vn\cdot(\epsilon^2 +\frac{\lambda}{2v}\left(\epsilon-\epsilon^2)\right)$ arcs.
\end{proposition}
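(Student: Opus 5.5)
The plan is to apply the standard spectral argument for the expander mixing lemma to the symmetrised matrix $\hat{M}=M+M^T$. Let $\mathbf{1}_X$ be the characteristic vector of $X$. Write $e(X)$ for the number of arcs of $\Gamma$ with both ends in $X$. Then
\[
\mathbf{1}_X^T M\mathbf{1}_X = e(X),\qquad \mathbf{1}_X^T\hat{M}\mathbf{1}_X = 2e(X).
\]
It therefore suffices to bound the quadratic form of the real symmetric matrix $\hat{M}$ on $\mathbf{1}_X$.

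First I record the spectral facts about $\hat{M}$. Every vertex has in- and out-valency $v$, so $\hat{M}$ has all row sums equal to $2v$. Hence $\mathbf{1}$ is an eigenvector with eigenvalue $2v$. Since $\hat{M}$ is nonnegative with row sums $2v$, every eigenvalue has absolute value at most $2v$, so $2v=\lambda_1$ in the ordering by magnitude. Connectivity of $\Gamma$ makes the underlying undirected multigraph of $\hat{M}$ connected, which I would use to justify that $2v$ is simple. More precisely, I only need every eigenvector orthogonal to $\mathbf{1}$ to have eigenvalue of absolute value at most $\lambda=|\lambda_2|$. This holds automatically once $\mathbf{1}$ is taken as the first vector of an orthonormal eigenbasis.

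This choice of basis is the one delicate point. If $-2v$ is also an eigenvalue (the bipartite case), then $|\lambda_2|=2v$ and the bound is trivially true. If $2v$ had multiplicity greater than one, the bound would also degrade gracefully. So in all cases I choose an orthonormal eigenbasis $u_1=\mathbf{1}/\sqrt n,u_2,\dots,u_n$ with the eigenvalues ordered by magnitude, so that $|\mu_i|\le\lambda$ for $i\ge2$.

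Next I decompose $\mathbf{1}_X=\epsilon\mathbf{1}+w$ with $w\perp\mathbf{1}$. This gives
\[
\|w\|^2=\|\mathbf{1}_X\|^2-\epsilon^2 n=\epsilon n-\epsilon^2 n=(\epsilon-\epsilon^2)n.
\]
The cross terms vanish because $\hat{M}$ is symmetric and preserves $\mathbf{1}^\perp$. Hence
\[
2e(X)=\epsilon^2\,\mathbf{1}^T\hat{M}\mathbf{1}+w^T\hat{M}w\le \epsilon^2\cdot 2vn+\lambda(\epsilon-\epsilon^2)n.
\]
Dividing by $2$ gives
\[
e(X)\le vn\left(\epsilon^2+\tfrac{\lambda}{2v}(\epsilon-\epsilon^2)\right),
\]
which is the claimed bound.

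I expect the main obstacle to be only the bookkeeping about the top eigenvalue: confirming $\lambda_1=2v$ and that the bound $|\mu_i|\le\lambda$ on $\mathbf{1}^\perp$ holds. The remaining steps are the routine Rayleigh-quotient estimate.
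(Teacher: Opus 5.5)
Your proposal is correct and follows essentially the same route as the paper: write $2e(X)=\mathbf{1}_X^T\hat{M}\mathbf{1}_X$, split $\mathbf{1}_X=\epsilon\mathbf{1}+w$ with $w\perp\mathbf{1}$, note that the cross terms vanish because $\hat{M}\mathbf{1}=2v\mathbf{1}$, and bound $w^T\hat{M}w$ by $\lambda\|w\|^2=\lambda(\epsilon-\epsilon^2)n$. Your observation that taking $\mathbf{1}$ as the first vector of an orthonormal eigenbasis gives $|\mu_i|\le|\lambda_2|$ for all $i\ge 2$, whatever the multiplicities, is a small improvement: it shows that the connectivity hypothesis, which the paper uses only to make the $2v$-eigenspace one-dimensional, is not actually needed for the inequality.
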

\begin{proof}
    Let $\mathbbm{1}_V$ denote the all-$1$ row  vector of $\mathbb{R}^V$, and note that $\mathbbm{1}_V\hat{M}=\mathbbm{1}_VM+\mathbbm{1}_VM^\top=2v\mathbbm{1}_V$, and that $2v$ is the largest eigenvalue of $\hat{M}$. Moreover, since $\Gamma$ is connected the $2v$-eigenspace has dimension one; that is, the $2v$-eigenspace is $\larry \mathbbm{1}_V\roger$. Additionally, since $\hat{M}$ is a real symmetric matrix it has a basis of eigenvectors, and every two of its distinct eigenspaces are orthogonal~\cite[Lemma 8.4.1]{AlgGraTh}. Denote by $a(X)$ the number of arcs in the induced sub-digraph on $X$, and for each subset  $S\subseteq V$ let $\mathbbm{1}_S$ denote the characteristic (row) vector for $S$. Then $a(X)=\mathbbm{1}_X M\mathbbm{1}_X^\top=\mathbbm{1}_X M^\top\mathbbm{1}_X^\top,$ whence \begin{equation}\label{e:arccount}
        2a(X)=\mathbbm{1}_X\hat{M}\mathbbm{1}_X^\top.
    \end{equation} 
    For the subset $X$ in the statement, define $u\in \mathbb{R}^V$ such that $u_i=\begin{cases}
        1-\epsilon&\text{ if $i\in X$}\\
        -\epsilon&\text{ if $i\not\in X$}.
    \end{cases}$.
    Then \begin{equation}\label{e:orthog}
        \sum_{i\in V}u_i=\sum_{i\in X}u_i+\sum_{i\in V\setminus X}u_i=\epsilon n(1-\epsilon)-(n-\epsilon n)\epsilon=0,
    \end{equation} and so $u$ is orthogonal to the $2v$-eigenspace of $\hat{M}$. Thus, $u$ can be expressed as a linear combination of eigenvectors of $\hat{M}$, each of which is orthogonal to $\mathbbm{1}_V$. Consequently, \begin{equation}\label{e:2ndeval}
        |u \hat{M}u^\top|\leq \lambda |uu^\top|=\lambda |\epsilon n(1-\epsilon)^2+(n-\epsilon n)\epsilon^2|=\lambda\epsilon n(1-\epsilon).
    \end{equation}
    Next, observe that $\mathbbm{1}_X=\epsilon \mathbbm{1}_V+u$, and so putting this together with~\eqref{e:arccount}, and the fact that $\mathbbm{1}_V \hat{M}\mathbbm{1}_V^\top= 2v \mathbbm{1}_V \mathbbm{1}_V^\top=2vn$, we get \begin{align}2a(X)=(\epsilon \mathbbm{1}_V+u)\hat{M}(\epsilon \mathbbm{1}_V+u)^\top&=\epsilon^2 \mathbbm{1}_V\hat{M}\mathbbm{1}_V^\top+\epsilon u\hat{M}\mathbbm{1}_V^\top+\epsilon \mathbbm{1}_V\hat{M}u^\top +u\hat{M}u^\top\nonumber\\&\leq \epsilon^2\cdot 2vn+\epsilon u(\hat{M}\mathbbm{1}_V^\top)+\epsilon (\mathbbm{1}_V\hat{M})u^\top + \lambda\epsilon n(1-\epsilon)&\text{ by~\eqref{e:2ndeval},}\nonumber\\
    &=\epsilon^2\cdot 2vn+\epsilon\cdot 2v (u\mathbbm{1}_V^\top)+\epsilon\cdot 2v (\mathbbm{1}_Vu^\top) + \lambda\epsilon n(1-\epsilon)\nonumber\\&=\epsilon^2\cdot 2vn+ \lambda\epsilon n(1-\epsilon)& \text{by~\eqref{e:orthog}}.\nonumber\end{align}
    The result now follows.
\end{proof}

We are now ready to prove a bound on the relative distance $\delta(C)$ of a digraph code $C$.

\begin{theorem}\label{thm:distanceboundD}
    Assume Hypotheses~\ref{hyp:GorD} with $\Gamma$ directed. Set $v=|\Gamma^{+}(\alpha)|=|\Gamma^{-}(\alpha)|$, let $M$ be the adjacency matrix of $\Gamma$. Suppose that $i\in\{1,2\}$ is such that $\delta(B_i) = \min\{ \delta(B_1), \delta(B_2)\}$, and that $\lambda=\lambda(M+M^\top)<2v$. Then $$\delta(C)\geq \frac{\delta(B_i)}{2}\cdot \left(\frac{\delta(B_i)-(\lambda/v)}{2-(\lambda/v)}\right).$$
\end{theorem}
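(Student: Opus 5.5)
Take a nonzero codeword $f\in C$ and let $F\subseteq A\Gamma$ be its support, $|F|=\mathrm{wt}(f)$. The plan is to show $|F|\geq |A\Gamma|\cdot\frac{\delta}{2}\cdot\frac{\delta-\lambda/v}{2-\lambda/v}$, where $\delta=\delta(B_i)$ and $|A\Gamma|=vn$ with $n=|V|$. We use the standard Tanner-code argument combined with Proposition~\ref{prop:mixinglem}. First, by Theorems~\ref{thm:conndecompmain} and~\ref{thm:decompD}, we may reduce to the case that $\Gamma$ is connected. The relative distance of a direct sum of isomorphic codes equals that of a single summand, and the components have the same spectrum up to multiplicities. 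The only subtlety is that the second eigenvalue of a disconnected graph is $2v$, so the hypothesis $\lambda<2v$ already forces connectivity. Hence we may simply assume $\Gamma$ is connected, as Proposition~\ref{prop:mixinglem} requires.

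Let $X\subseteq V$ be the set of vertices incident with some arc of $F$, as tail or head, and write $|X|=\epsilon n$. For $\gamma=\alpha^{\t}\in X$, either some out-arc of $\gamma$ lies in $F$, so that $\varphi_\t\circ f$ is a nonzero word of $B_1$, or some in-arc does, so that $\psi_\t\circ f$ is a nonzero word of $B_2$. In either case at least $\delta v$ arcs of $F$ are incident with $\gamma$, using $\delta=\min\{\delta(B_1),\delta(B_2)\}$. Each arc of $F$ has both endpoints in $X$ and is counted at most twice (once at its tail, once at its head). This gives two facts. First, $2|F|\geq \delta v\,\epsilon n$. Second, every arc of $F$ lies in the sub-digraph induced on $X$, so $|F|\leq a(X)$.

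Proposition~\ref{prop:mixinglem} gives $a(X)\leq vn(\epsilon^2+\frac{\lambda}{2v}(\epsilon-\epsilon^2))$. Combining with the first fact yields $\frac{\delta\epsilon}{2}\leq \epsilon^2+\frac{\lambda}{2v}(\epsilon-\epsilon^2)$. Dividing by $\epsilon>0$ and rearranging gives $\epsilon(1-\frac{\lambda}{2v})\geq \frac{\delta}{2}-\frac{\lambda}{2v}$, that is,
\[
\epsilon\geq \frac{\delta-\lambda/v}{2-\lambda/v}.
\]
Here we use $\lambda<2v$ so that the denominator is positive. The edge case $\epsilon=1$ is outside the proposition's range, but it trivially satisfies this inequality since the right side is less than $1$. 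Substituting back into $|F|\geq \delta v\epsilon n/2$ gives $\mathrm{wt}(f)/(vn)\geq \frac{\delta}{2}\cdot\frac{\delta-\lambda/v}{2-\lambda/v}$, which is the claimed bound. If $\delta\leq\lambda/v$ the bound is nonpositive and there is nothing to prove.

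The main obstacle is the double-counting step. We must make sure that the lower bound $\delta v$ per vertex is justified from only one of the two constraints, and accept the factor $1/2$ from arcs being counted at both endpoints. This is exactly where the directed bound loses a factor compared to Theorem~\ref{thm:mainstats}. A more refined count, splitting $X$ into tails and heads, might improve the constant, but the stated bound only needs the crude version.
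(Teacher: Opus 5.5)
Your argument is correct and is essentially the paper's proof, run directly rather than by contradiction. Both combine the lower bound of $\delta v$ support arcs at every vertex touched by the support (coming from the local codes $B_1,B_2$) with the upper bound of Proposition~\ref{prop:mixinglem} on the arcs inside that vertex set. Your observation that $\lambda<2v$ already forces connectivity also supplies a hypothesis of Proposition~\ref{prop:mixinglem} that the paper uses without comment.

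One aside you do not actually use is false. A direct sum of $k$ isomorphic codes has the same minimum distance as a summand but only $1/k$ of its relative distance, so the reduction via Theorem~\ref{thm:decompD} would not work. This is precisely why the connectivity coming from $\lambda<2v$ is needed.
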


\begin{proof}
    Note that $\lambda<2v$ so the right hand side has positive denominator. Assume that $\delta(B_i)> \lambda/v$ as otherwise there is nothing to prove. Suppose, for a contradiction, that there exists non-zero $c\in C$ of weight $$\mathrm{wt}(c)<vn\frac{\delta(B_i)}{2}\cdot \left(\frac{\delta(B_i)-(\lambda/v)}{2-(\lambda/v)}\right).$$ Set $\epsilon_0=\frac{\delta(B_i)-(\lambda/v)}{2-(\lambda/v)}> 0$ so that $\mathrm{wt}(c)<vn\frac{\delta(B_i)}{2}\epsilon_0$ and \begin{align}\begin{split}\label{eq:wtcalc}\left(\epsilon_0 +\frac{\lambda}{2v}(1-\epsilon_0)\right)&=\left(\frac{\delta(B_i)-(\lambda/v)}{2-(\lambda/v)}+\frac{\lambda}{2v}\left(\frac{2-\delta(B_i)}{2-(\lambda/v)}\right)\right)\\&=\left(\frac{\delta(B_i)-(\lambda/2v)\delta(B_i)}{2-(\lambda/v)}\right)\\
    &=\left(\frac{\delta(B_i)-(\lambda/v)}{2-(\lambda/v)}\right)\\ 
    &= \frac{\delta(B_i)}{2} 
    .\end{split}\end{align}
    Thus, $vn\cdot\left(\epsilon_0^2+\frac{\lambda}{2v}(\epsilon_0-\epsilon_0^2)\right)>\mathrm{wt}(c)$. Since $f(\epsilon):=\epsilon^2+\frac{\lambda}{2v}(\epsilon-\epsilon^2)$ is increasing on $(0,\epsilon_0)$ it follows that there exists $\epsilon$ such that $0<\epsilon<\epsilon_0$ and $\mathrm{wt}(c)=vn\cdot(\epsilon^2 +\frac{\lambda}{2v}\left(\epsilon-\epsilon^2)\right)$. Consider the set of non-zero entries of $c$. Each such entry corresponds to an arc of $\Gamma$, and these arcs cover a set $V_0\subseteq V$ of at least $\epsilon n$ vertices by Proposition~\ref{prop:mixinglem}.

    The number of incident arc-vertex pairs in the induced subgraph on $V_0$ is $2\mathrm{wt}(c)$ (counting first the arc), and by counting first the vertex, this is equal to 
    $$
    \sum_{\beta\in V_0}\left(|\Gamma^+(\beta)\cap V_0|+|\Gamma^-(\beta)\cap V_0|\right).
    $$%the sum over the vertices of $V_0$ of their number of (in- and out-) neighbours in $V_0$. 
    Thus the average of $|\Gamma^+(\beta)\cap V_0|+|\Gamma^-(\beta)\cap V_0|$ over $\beta\in V_0$ is $2\mathrm{wt}(c)/|V_0|\leq 2\mathrm{wt}(c)/\epsilon n$, and so there exists some vertex $\beta\in V_0$ such that $|\Gamma^+(\beta)\cap V_0|+|\Gamma^-(\beta)\cap V_0|\leq 2\mathrm{wt}(c)/\epsilon n$. Thus we have 
    \begin{align*}
    |\Gamma^+(\beta)\cap V_0|+|\Gamma^-(\beta)\cap V_0|\leq 2\mathrm{wt}(c)/\epsilon n&=2vn\cdot\left(\epsilon^2 +\frac{\lambda}{2v}(\epsilon-\epsilon^2)\right)/\epsilon n\\&=2v\cdot\left(\epsilon +\frac{\lambda}{2v}(1-\epsilon)\right)\\&<2v\cdot\left(\epsilon_0+\frac{\lambda}{2v}(1-\epsilon_0)\right)\\&= v\delta(B_i),
    \end{align*} 
    where the last equality uses~\eqref{eq:wtcalc}.
    % arcs corresponding to non-zero entries of the codeword $c$. 
    Writing $\beta=\alpha^\t$ for some $\t\in\T$, this means that both $\mathrm{wt}(\varphi_\t\circ c)<\delta(B_i)v$ and $\mathrm{wt}(\psi_\t\circ c)<\delta(B_i)v$ whence $\varphi_\t\circ c=\psi_\t\circ c=0$ by the definition of $\delta(B_i)$. This contradicts the definition of $V_0$, and the result follows.
\end{proof}

For various types of codes constructed from graphs (such as Cayley codes) there is an analogous lower-bound in the literature of the shape $\left((\delta(B)-\lambda/v)/(1-\lambda/v)\right)^2$. This type of bound was first given in~\cite{SipSpiel}, and has become the standard bound in the Cayley code set-up. With an identical proof to that for Theorem~\ref{thm:distanceboundD}, but using the standard Alon--Chung Mixing Lemma~\cite[Lemma 2.3]{AlonChung} we are able to slightly strengthen this bound for graph codes, and hence for Cayley codes.

\begin{theorem}\label{thm:distanceboundU}
    Assume Hypotheses~\ref{hyp:GorD} with $\Gamma$ undirected. Set $v=|\Gamma(\alpha)|$, let $M$ be the adjacency matrix of $\Gamma$, and suppose that $\lambda:=\lambda(M)< v$. Then $$\delta(C)\geq \delta(B)\cdot \left(\frac{\delta(B)-(\lambda/v)}{1-(\lambda/v)}\right).$$
\end{theorem}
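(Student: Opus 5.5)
The plan is to mirror the proof of Theorem~\ref{thm:distanceboundD}, replacing Proposition~\ref{prop:mixinglem} by the undirected Alon--Chung mixing lemma. In the form we need, that lemma says: if $\Gamma$ is $v$-regular on $n$ vertices with $\lambda=\lambda(M)$, then for $X\subseteq V$ with $|X|=\epsilon n$ the induced subgraph on $X$ has at most
\[
\frac{vn}{2}\Bigl(\epsilon^2+\frac{\lambda}{v}(\epsilon-\epsilon^2)\Bigr)
\]
edges. I would derive it exactly as in the proof of Proposition~\ref{prop:mixinglem}. We have $2e(X)=\mathbbm{1}_X M\mathbbm{1}_X^\top$, and we write $\mathbbm{1}_X=\epsilon\mathbbm{1}_V+u$ with $u\perp\mathbbm{1}_V$. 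The bound $|uMu^\top|\le\lambda\, uu^\top=\lambda\epsilon n(1-\epsilon)$ needs only that $u$ is orthogonal to $\mathbbm{1}_V$, a $v$-eigenvector. If $\Gamma$ is disconnected, then $v$ has multiplicity at least $2$, so $\lambda=v$, which is excluded by hypothesis. The code has length $|E|=vn/2$.

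Set $\delta=\delta(B)$ and assume $\delta>\lambda/v$, since otherwise there is nothing to prove. Put
\[
\epsilon_0=\frac{\delta-\lambda/v}{1-\lambda/v}.
\]
A short calculation, analogous to \eqref{eq:wtcalc}, gives $\epsilon_0+\frac{\lambda}{v}(1-\epsilon_0)=\delta$.

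Suppose for a contradiction that $0\neq c\in C$ has $\mathrm{wt}(c)<\frac{vn}{2}\,\delta\epsilon_0=\frac{vn}{2}g(\epsilon_0)$, where $g(\epsilon)=\epsilon^2+\frac{\lambda}{v}(\epsilon-\epsilon^2)$. Note $g(\epsilon_0)=\epsilon_0\delta$ by the identity above. Since $g$ is increasing on $[0,1]$ (as $\lambda\le v$), there is $\epsilon\in(0,\epsilon_0)$ with $\mathrm{wt}(c)=\frac{vn}{2}g(\epsilon)$.

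Let $V_0$ be the set of vertices covered by the support of $c$. All $\mathrm{wt}(c)$ support edges lie inside $V_0$, so by the mixing lemma $|V_0|\ge\epsilon n$; otherwise a set of size $\epsilon' n<\epsilon n$ would contain more than $\frac{vn}{2}g(\epsilon')$ edges. Next, double-count vertex--edge incidences among support edges inside $V_0$. Some $\beta\in V_0$ is incident with at most
\[
\frac{2\,\mathrm{wt}(c)}{\epsilon n}=v\Bigl(\epsilon+\frac{\lambda}{v}(1-\epsilon)\Bigr)<v\delta
\]
support edges. The inequality holds because $\epsilon+\frac{\lambda}{v}(1-\epsilon)$ is increasing in $\epsilon$ and equals $\delta$ at $\epsilon_0$.

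Write $\beta=\alpha^\t$ with $\t\in\T$; this is possible by transitivity and the definition of a transversal. Then $\phi_\t\circ c$ records exactly the values of $c$ on edges at $\beta$, by Proposition~\ref{prop:GtoC}(1). So $\phi_\t\circ c\in B$ has weight less than $d(B)=\delta v$, hence $\phi_\t\circ c=0$. This means no support edge meets $\beta$, contradicting $\beta\in V_0$.

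The main step to check carefully is the mixing-lemma application, specifically its normalisation. In the undirected case each edge contributes $2$ to $\mathbbm{1}_XM\mathbbm{1}_X^\top$, and the total edge count is $vn/2$. Tracking these factors is exactly what yields the improved factor $\delta(B)$, rather than $(\delta(B)-\lambda/v)/(1-\lambda/v)$, in front of the bound. The rest is a routine transcription of the proof of Theorem~\ref{thm:distanceboundD}.
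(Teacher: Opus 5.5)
Your proposal is correct and follows the paper's proof essentially step for step: the same $\epsilon_0$, the same intermediate-value choice of $\epsilon$ with $\mathrm{wt}(c)=\frac{vn}{2}\bigl(\epsilon^2+\frac{\lambda}{v}(\epsilon-\epsilon^2)\bigr)$, the Alon--Chung bound to get $|V_0|\ge\epsilon n$, and averaging over $V_0$ to find a vertex $\alpha^\t$ with $\mathrm{wt}(\phi_\t\circ c)<d(B)$. The differences are cosmetic: you re-derive the mixing lemma as in Proposition~\ref{prop:mixinglem} instead of citing it, and you count incidences with support edges rather than valencies in the induced subgraph on $V_0$, which is actually the more accurate way to state the paper's handshake step.
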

Despite being so similar to the proof of Theorem~\ref{thm:distanceboundD}, we have decided to make explicit the proof for Theorem~\ref{thm:distanceboundU}, since it provides an improvement on a frequently used bound.
\begin{proof}[Proof of Theorem~\ref{thm:distanceboundU}]
     Assume that $\delta(B)\geq \lambda/v$, as otherwise there is nothing to prove. Suppose, for a contradiction, that there exists non-zero $c\in C$ of weight $$\mathrm{wt}(c)<\frac{vn}{2}\delta(B)\cdot \left(\frac{\delta(B)-(\lambda/v)}{1-(\lambda/v)}\right).$$ Set $\epsilon_0=\frac{\delta(B)-(\lambda/v)}{1-(\lambda/v)}$ and observe that \begin{align*}\frac{vn}{2}\cdot\left(\epsilon_0^2 +\frac{\lambda}{v}(\epsilon_0-\epsilon_0^2)\right)&=\frac{vn}{2}\cdot\left( \frac{\delta(B)-(\lambda/v)}{1-(\lambda/v)}\right)\cdot\left(\frac{\delta(B)-(\lambda/v)}{1-(\lambda/v)}+\frac{\lambda}{v}\left(\frac{1-\delta(B)}{1-(\lambda/v)}\right)\right)\\&=\frac{vn}{2}\cdot \left(\frac{\delta(B)-(\lambda/v)}{1-(\lambda/v)}\right)\cdot\left(\frac{\delta(B)-(\lambda/v)\delta(B)}{1-(\lambda/v)}\right)\\&=\frac{vn}{2}\delta(B)\cdot \left(\frac{\delta(B)-(\lambda/v)}{1-(\lambda/v)}\right)\\&>\mathrm{wt}(c).\end{align*}
    It follows that there exists some $\epsilon$ such that $0<\epsilon<\epsilon_0$ and $\mathrm{wt}(c)=\frac{vn}{2}\cdot(\epsilon^2 +\frac{\lambda}{v}\left(\epsilon-\epsilon^2)\right)$. Each non-zero entry of $c$ corresponds to an edge of $\Gamma$, and these edges cover a set $V_0$ of at least $\epsilon n$ vertices by the Alon--Chung Mixing Lemma~\cite[Lemma 2.3]{AlonChung}.

    By the hand-shaking lemma, the sum over the vertices of $V_0$ of their valencies in the induced subgraph on $V_0$ is $2\mathrm{wt}(c)$. In particular, there is some vertex $\beta=\alpha^\t\in V_0$ of valency (in $V_0$) at most $2\mathrm{wt}(c)/|V_0|\leq 2\mathrm{wt}(c)/\epsilon n$, which is $$vn\cdot\left(\epsilon^2 +\frac{\lambda}{v}(\epsilon-\epsilon^2)\right)/\epsilon n=v\cdot\left(\epsilon +\frac{\lambda}{v}(1-\epsilon)\right)<v\cdot\left(\epsilon_0+\frac{\lambda}{v}(1-\epsilon_0)\right)=v\delta(B).$$ But then $\mathrm{wt}(\phi_\t\circ c)<\delta(B)v$ whence $\phi_\t\circ c=0$ by the definition of $\delta(B)$, a contradiction.
\end{proof}

Using this bound we provide a modest improvement to~\cite[Theorem 11]{KaufLub}. The following result is obtained from the same construction as Kaufman and Lubotzky (we restrict to the construction for the projective \emph{special} linear groups, so as to avoid bipartite graphs), but using Theorem~\ref{thm:distanceboundU} rather than the relative distance bound provided in~\cite[Theorem 2]{KaufLub}.
\begin{corollary}\label{cor:KauLuBound}
    Fix some $a\geq 8$ and $q=4093$. There exists an asymptotically good family of highly symmetric LDPC codes of rate at least $2/a$ and relative distance at least $$\frac{a-2}{2a\ln(q+1)}\cdot\left(\frac{\frac{a-2}{2a\ln(q+1)}-\frac{2\sqrt{q}}{q+1}}{1-\frac{2\sqrt{q}}{q+1}}\right)$$
\end{corollary}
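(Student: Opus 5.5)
The plan is to use the Kaufman--Lubotzky family and change only the distance estimate. Take the groups $G_m=\PSL_2(p^m)$ (or the sequence of $\PSL$-groups they use), each with a generating set $S$ of $q+1=4094$ elements for which $\mathrm{Cay}(G_m,S)$ is a non-bipartite $(q+1)$-regular Ramanujan graph. By the Ramanujan property, $\lambda(M)\le 2\sqrt{q}$, so $\lambda/v\le 2\sqrt q/(q+1)<1$. By Proposition~\ref{prop:GtoC}(2), the Cayley code $\mathbf{C}(G_m,S,B)$ coincides with the graph code $\mathbf{G}_\Gamma(G_m,G_m,1,B)$, so every result of the previous sections applies. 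The input code $B\le\F^{S}$ is the one from \cite[Theorem 11]{KaufLub}. It is invariant under a group acting transitively on $S$ whose elements normalise $S$; in our language this means $G_\alpha^{\Gamma(\alpha)}$ is trivial here, and the symmetry comes from the extra automorphisms used by Kaufman--Lubotzky. Its key properties are that $B^\perp$ is single-orbit and that
\[ r(B)\ge 1-\tfrac{1}{a}\cdot\text{(const)} \quad\text{and}\quad \delta(B)\ge \tfrac{a-2}{2a\ln(q+1)}. \]

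The argument proceeds in three steps.

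\emph{Step 1 (symmetry and LDPC).} High symmetry and the LDPC property are exactly as in \cite{KaufLub}. They do not depend on which distance bound is used, so I will quote them. Alternatively, where $G_\alpha$ is trivial, one can check them against Corollary~\ref{cor:symmetry}.

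\emph{Step 2 (rate).} The rate bound $r(C)\ge 2r(B)-1\ge 2/a$ comes from the constraint count: there are $n$ vertices, each imposing $v(1-r(B))$ constraints on the $vn/2$ edges. This is the bound of Theorem~\ref{thm:mainstats}, and I will check that the parameters of $B$ make $2r(B)-1\ge 2/a$, using that $a\ge 8$.

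\emph{Step 3 (distance).} Apply Theorem~\ref{thm:distanceboundU}. The function
\[ x\mapsto x\,\frac{x-\mu}{1-\mu} \]
is increasing in $x$ for $x\ge\mu/2$. It is also increasing in $\delta$ and decreasing in $\mu=\lambda/v$, since its derivative in $\mu$ is $x(x-1)/(1-\mu)^2\le 0$. So we may substitute the lower bound for $\delta(B)$ and the upper bound $\mu\le 2\sqrt q/(q+1)$ to obtain the displayed expression.

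It remains to confirm that this expression is positive, which gives asymptotic goodness. This needs $\frac{a-2}{2a\ln(q+1)}>\frac{2\sqrt q}{q+1}$. For $q=4093$ we have $\ln 4094\approx 8.317$, and $2\sqrt q/(q+1)\approx 0.03126$. For $a\ge 8$ the left side is at least $6/(16\cdot 8.317)\approx 0.0451$, so positivity holds. This numerical check is the reason for the choices $q=4093$ and $a\ge 8$.

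The main obstacle is the bookkeeping in importing the exact input code and parameter bounds from \cite{KaufLub}: I need to verify that their random/small code indeed gives $\delta(B)\ge\frac{a-2}{2a\ln(q+1)}$ and rate enough for $2/a$. I also need to confirm that restricting to $\PSL$ still yields an infinite non-bipartite Ramanujan family with the same generating-set size.
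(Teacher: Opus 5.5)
Your proposal is correct and matches the paper's argument: keep the Kaufman--Lubotzky family restricted to $\PSL_2$, so the Cayley graphs are non-bipartite Ramanujan and $\lambda/v\le 2\sqrt{q}/(q+1)<1$, inherit high symmetry, LDPC and rate $2/a$ from \cite{KaufLub}, and replace their distance estimate by Theorem~\ref{thm:distanceboundU}; your monotonicity and positivity checks only make explicit what the paper leaves implicit. One small correction to your aside: with $G=G_m$ acting regularly, Corollary~\ref{cor:symmetry} does not apply because $G_\alpha^{\Gamma(\alpha)}$ is trivial, but it does apply to $G_m\rtimes H$ (with $H$ the group of automorphisms of $G_m$ preserving $S$), whose vertex stabiliser $H$ is transitive on $S=\Gamma(1)$ and for which $G_m$ is a transversal, so the code is unchanged.
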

\begin{remark}
    {\rm Depending on the choice of $a$, this bound on the relative distance is between roughly two to three times the lower bound which was used by Kaufman and Lubotzky~\cite{KaufLub}.}
\end{remark}

We now move on to investigating the rates of graph codes and digraph codes; this is a much easier endeavour.
\begin{theorem}\label{thm:ratebound}
    Assume Hypotheses~\ref{hyp:GorD}. If $\Gamma$ is undirected then $r(C)\geq 2r(B)-1$ and if $\Gamma$ is directed then $r(C)\geq r(B_1)+r(B_2)-1.$
\end{theorem}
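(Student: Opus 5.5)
The approach is the standard dimension count via parity checks. Let $n=|V|$ and $v$ the valency (in the directed case $v=|\Gamma^+(\alpha)|=|\Gamma^-(\alpha)|$). Handshaking gives $|E\Gamma|=nv/2$ in the undirected case and $|A\Gamma|=nv$ in the directed case. Since $\T$ is a transversal for $G_\alpha$ in the transitive group $G$, the map $\t\mapsto\alpha^\t$ is a bijection $\T\to V$, so there are exactly $n$ local constraints of each type.

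Undirected case: for each $\t\in\T$, the condition $\phi_\t\circ f\in B$ is the requirement that $f$ lies in the kernel of the linear map $\F^{E\Gamma}\to\F^{\Gamma(\alpha)}/B$ given by $f\mapsto \phi_\t\circ f+B$. Equivalently, choosing a basis of $B^\perp$ (of size $v-k(B)$), each $b\in B^\perp$ gives a linear functional $f\mapsto\sum_{\beta}(\beta)b\cdot(\beta)(\phi_\t\circ f)$. This follows because $\phi_\t$ is a bijection (Proposition~\ref{prop:GtoC}(1)), so $\phi_\t\circ f$ ranges over $\F^{\Gamma(\alpha)}$ as $f$ varies over functions supported on $E_\t$. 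Thus $C$ is the common kernel of at most $n(v-k(B))$ linear functionals, and
\[
k(C)\ \ge\ \frac{nv}{2}-n\bigl(v-k(B)\bigr).
\]
Dividing by $nv/2$ gives $r(C)\ge 1-2(1-r(B))=2r(B)-1$.

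Directed case: the same count applies with $n$ constraints from $B_1$ and $n$ from $B_2$. Each $\varphi_\t$ and $\psi_\t$ is a bijection (Definition~\ref{def:digraph}), so the conditions contribute at most $n(v-k(B_1))+n(v-k(B_2))$ functionals. Hence
\[
k(C)\ \ge\ nv-n\bigl(v-k(B_1)\bigr)-n\bigl(v-k(B_2)\bigr),
\]
and dividing by $nv$ gives $r(C)\ge r(B_1)+r(B_2)-1$.

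There is no real obstacle here. The only points needing care are that each local condition cuts out a subspace of codimension at most $v-k(B)$ (using $k(B^\perp)=v-k(B)$ and that the constraints are linear), and the correct edge/arc count. The inequality may be strict because constraints can be dependent, and if the right-hand side is negative the bound is vacuous.
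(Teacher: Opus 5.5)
Your proof is correct and follows essentially the same route as the paper: both count $|\T|=n$ local conditions per input code, each cutting out a subspace of codimension at most $v-k(B)$ (resp.\ $v-k(B_1)$ and $v-k(B_2)$), and compare this with the length $nv/2$ (resp.\ $nv$). Your aside that $\phi_\t\circ f$ ranges over all of $\F^{\Gamma(\alpha)}$ is not needed, because the preimage of $B$ under any linear map has codimension at most $v-k(B)$, whether or not the map is surjective.
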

\begin{proof}
    Assume that $\Gamma$ is directed; the undirected case is similar. Suppose $\Gamma$ has out-valency $v$. Every codeword in $C$ obeys exactly $|\mathcal{T}|$ `constraints' corresponding to the $\varphi_\t$, and another $|\mathcal{T}|$ corresponding to the $\psi_\t$. The constraint corresponding to $\varphi_\t$ (resp. $\psi_\t$) imposes $v-vr(B_1)$ (resp. $v-vr(B_2)$) linear constraints of $B_1$ (resp. $B_2$). Thus, the total number of linear constraints is at most $v\cdot|\mathcal{T}|\cdot(2-(r(B_1)+r(B_2)))$, that is, $C$ has dimension at least $v|\mathcal{T}|-v|\mathcal{T}|(2-(r(B_1)+r(B_2)))$. Then, since codewords of $C$ have length $v|\mathcal{T}|$, we have $$r(C)\geq r(B_1)+r(B_2)-1,$$ as asserted.
\end{proof}

\section{Examples and constructions}\label{sec:examples}
In this section we examine in more detail some properties of our constructions when restricting to certain families of graphs and codes.

\subsection{Complete bipartite graph codes}\label{sec:bipartite}
It turns out that the graph code construction generalises other well-known constructions of codes. Recall the definition of the direct product of codes from Definition~\ref{def:prodandsum}

\begin{theorem}\label{thm:tensorproduct}
    Let $m\geq2$ be an integer and set $\Gamma=K_{m,m}^\to$ the complete bipartite graph on $2m$ vertices viewed as a directed graph with in- and out-valency $m$, and $V\Gamma=V_0\sqcup V_1:=\{1,2,\dots,m\}\sqcup\{m+1,m+2,\dots,2m\}$ with the bipartition indicated by the union. Let $G=\Aut(\Gamma)=S_m\wr S_2$, and let $B_1,B_2\leq \F^{V_1}$. Finally, let $$\T=\{1,(1\,2),(1\,3),\dots, (1\,m),\tau,(1\,2)\tau,(1\,3)\tau,\dots,(1\,m)\tau\},$$ where $\tau$ is the unique element of $G$ which swaps $\alpha$ and $m+\alpha$ for all $\alpha\in V_0$. Then $C:=\mathbf{D}_\Gamma(G,\mathcal{T},1,B_1,B_2)\cong(B_2\otimes B_1)\oplus (B_2\otimes B_1)$.
\end{theorem}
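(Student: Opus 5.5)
The plan is to compute the constraints defining $C$ explicitly. The arc set will split into two disjoint ``grids'', each isomorphic to $V_0\times V_1$, and I expect the constraints on each grid to be exactly those defining $B_2\otimes B_1$, with no constraint linking the two grids. Viewing $K_{m,m}$ as a digraph with in- and out-valency $m$, each edge $\{x,y\}$ contributes both arcs $(x,y)$ and $(y,x)$. So $A\Gamma=A_0\sqcup A_1$ with $A_0=V_0\times V_1$ and $A_1=V_1\times V_0$. Accordingly $\F^{A\Gamma}=\F^{A_0}\oplus\F^{A_1}$, and $\Gamma^+(1)=\Gamma^-(1)=V_1$.

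First I check that $\T$ is a transversal for $G_1$ in $G$ with $1\in\T$. The images of $1$ under its elements are $1,2,\dots,m$ (from $(1\,i)$) and $m+1,\dots,2m$ (from $(1\,i)\tau$), so they are distinct and exhaust $V$. Next I compute the maps $\varphi_\t$ and $\psi_\t$ of Definition~\ref{def:digraph}. For $\t=(1\,i)$, which fixes $V_1$ pointwise, we get $\varphi_\t(\beta)=(i,\beta)\in A_0$ and $\psi_\t(\beta)=(\beta,i)\in A_1$ for $\beta\in V_1$. For $\t=(1\,i)\tau$ the factor $(1\,i)$ still fixes $V_1$ and $\tau$ sends $\beta\mapsto\beta-m$, so $\varphi_\t(\beta)=(m+i,\beta-m)\in A_1$ and $\psi_\t(\beta)=(\beta-m,m+i)\in A_0$.

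The crucial bookkeeping point is that the bijection $\beta\mapsto\beta-m$ from $V_1$ to $V_0$ is the same for every $i$. Hence the transported codes on the $V_0$-coordinates do not depend on the row or column, and they are just $B_1$ and $B_2$ composed with the fixed bijection induced by $\tau$. This is the step I expect to need the most care: getting the left/right conventions for composing permutations right, so that $\beta^{(1\,i)\tau}=\beta-m$ uniformly.

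Now I read off the constraints on $f=(f_0,f_1)$ with $f_j=f|_{A_j}$. On $A_0=V_0\times V_1$, each row $\beta\mapsto(i,\beta)f$ must lie in $B_1$, and each column $x\mapsto(x,m+i)f$, reindexed by $x+m\in V_1$, must lie in $B_2$. By Definition~\ref{def:prodandsum} with $\Omega=V_0$ and $\Sigma=V_1$, this says exactly that $f_0$ lies in a code isomorphic to $B_2\otimes B_1$, where $B_2$ is transported to $V_0$ via $\tau$. On $A_1=V_1\times V_0$ the roles of the two factors swap. After transposing the coordinates to $V_0\times V_1$ via $(y,x)\mapsto(x,y)$, the constraints from the $\psi_{(1\,i)}$ put the columns indexed by $V_1$ in $B_2$, and the constraints from the $\varphi_{(1\,i)\tau}$ put the rows, reindexed through $\tau$, in $B_1$. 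So $f_1$ again ranges over a copy of $B_2\otimes B_1$.

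Since every constraint involves only $A_0$ or only $A_1$, $C$ is the direct sum of these two pieces. Applying the coordinate permutations above then gives $C\cong(B_2\otimes B_1)\oplus(B_2\otimes B_1)$.
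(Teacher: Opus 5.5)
Your proof is correct and follows essentially the same route as the paper: split $A\Gamma$ into $V_0\times V_1$ and $V_1\times V_0$, read off that $\varphi_{(1\,i)}$ and $\psi_{(1\,i)\tau}$ put the rows of the first grid in $B_1$ and its columns in the $\tau$-transport of $B_2$ (with $\psi_{(1\,i)}$ and $\varphi_{(1\,i)\tau}$ doing the same on the second grid), and note that no constraint straddles the two grids. One cosmetic slip: after your transposition of $V_1\times V_0$, the $\psi_{(1\,i)}$ constraints fix the $V_0$-coordinate (rows in your $A_0$ convention), so your row/column labels are swapped there. This is harmless, since $B_1\otimes B_2\cong B_2\otimes B_1$ by transposing coordinates, and untransposed you get $B_2\otimes B_1^\tau$ directly.
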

\begin{proof}
    Write $A^+=V_0\times V_1$ and $A^-=V_1\times V_0$ so that $A\Gamma=A^+\sqcup A^-$ (disjoint union). Let $c\in C$. For each $\alpha\in V_0$ there exists some $b\in B_1$ such that $\varphi_{(1\,\alpha)}\circ c=b\in B_1$, and so using the `$f_\alpha$' notation as in Definition~\ref{def:prodandsum} we obtain \begin{equation}\label{eq:productres1}
        (\beta)c_\alpha=(\alpha,\beta)c=(\beta)(\varphi_{(1\,\alpha)}\circ c)=(\beta)b \text{ for all }\beta\in V_1,
    \end{equation}
    that is, $(c|_{A^+})_\alpha\in B_1.$
    Let $\beta\in V_1$, then there exists $b'\in B_2$ such that
    
  %  Similarly, $\psi_{(1\,\beta)}\circ c\in B_2$ for each so for each $\beta\in V_0$ we have that there exists $b'\in B_2$ such that 
  \begin{equation}\label{eq:productres2}
        (\alpha)c_{\beta}=(\alpha,\beta)c=(\alpha+m)(\psi_{(1\,(\beta-m))\tau}\circ c)=(\alpha+m)b' \text{ for all } \alpha\in V_0, 
    \end{equation}
    Thus, $(c\mid_{A^+})_\beta\in B_2^\tau\cong B_2$. The restrictions in~\eqref{eq:productres1} and~\eqref{eq:productres2} are the only ones whose constraints affect the values of $c$ on $A^+$, whence $C|_{A^+}\cong B_2\otimes B_1.$ Similar arguments give that $C|_{A^-}\cong B_2\otimes B_1$, exhausting all constraints, whence $C=C|_{A^+}\oplus C|_{A^-}\cong (B_2\otimes B_1)\oplus(B_2\otimes B_1),$ as claimed. % map $\tau$ thus induces an isomorphism between $B_2$ and $\langle (c\mid_{A^+})_\beta :c\in C,\beta\in V_1\rangle$This exhausts all restrictions on values $(i,j)c$ with $1\leq i\leq m$ and $m+1\leq j\leq 2m$. The constraints imposed by the remaining $\varphi_\t$ and $\psi_\t$ apply only to the values $(i,j)c$ with $m+1\leq i\leq 2m$ and $1\leq j\leq m$, thus $C$ decomposes as a direct sum with first direct summand satisfying that each row is a codeword of $B_1$ by~\eqref{eq:productres1} and each column is a codeword of $B_2$ by~\eqref{eq:productres2}. That is, the first direct summand is $B_1\otimes B_2$. Working out the remaining constraints is similar, and one can easily verify that they imply that the second direct summand is $B_1\otimes B_2$, hence the result.
\end{proof}
We can also use the undirected graph code to witness the direct product of a code with itself; we omit the proof as it is similar to the previous proof.
\begin{theorem}\label{thm:tensorsquare}
    Let $m\geq2$ be an integer and set $\Gamma=K_{m,m}$ the undirected complete bipartite graph with vertices labelled as in Theorem~\ref{thm:tensorproduct}. Let $G=\Aut(\Gamma)=S_m\wr C_2$, and let $B\leq \F^{\Gamma(1)}$ be any linear code. Then $C:=\mathbf{G}_\Gamma(G,\T,1,B)=B\otimes B$ where $\T$ is as defined in Theorem~\ref{thm:tensorproduct}.
\end{theorem}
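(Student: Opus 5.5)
Since $K_{m,m}$ has no orientation, each edge $\{\alpha,\beta\}$ with $\alpha\in V_0$, $\beta\in V_1$ is identified with the pair $(\alpha,\beta)\in V_0\times V_1$. This gives an identification $\F^{E\Gamma}=\F^{V_0\times V_1}$, and the aim is to show that, under it, the constraints defining $C$ are exactly "every row lies in $B$" and "every column lies in (a relabelled copy of) $B$". Here $\Gamma(1)=V_1$, so $B\leq\F^{V_1}$. I would follow the proof of Theorem~\ref{thm:tensorproduct}, except that there is now only one edge set instead of the two arc sets $A^+$ and $A^-$. This is why we get a single factor $B\otimes B$ rather than a direct sum of two copies.

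\textbf{Rows.} First I list the transversal. The elements $1,(1\,2),\dots,(1\,m)$ send $1$ to each $\alpha\in V_0$. The elements $\tau,(1\,2)\tau,\dots,(1\,m)\tau$ send $1$ to each $\alpha+m\in V_1$. So $\T$ is indeed a transversal for $G_1$, with $1\in\T$.

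For $\t=(1\,\alpha)$ with $\alpha\in V_0$, the element $\t$ fixes $V_1$ pointwise. Hence $(\beta)(\phi_\t\circ f)=(\{\alpha,\beta\})f=(\beta)f_\alpha$ for every $\beta\in V_1$. These $m$ constraints therefore say exactly that $f_\alpha\in B$ for all $\alpha\in V_0$.

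\textbf{Columns.} For $\t=(1\,\gamma)\tau$ with $\gamma\in V_0$, we have $1^\t=\gamma+m$. For $\beta\in V_1$, write $\beta=\delta+m$; then $\beta^{(1\,\gamma)}=\beta$, so $\beta^\t=\delta$. Hence
\[(\beta)(\phi_\t\circ f)=(\{\delta,\gamma+m\})f=(\delta)f_{\gamma+m}.\]
These constraints say that $f_{\gamma+m}$, transported to $V_1$ via the bijection $\delta\mapsto\delta+m$, lies in $B$. Equivalently, every column lies in $B^{\tau}\leq\F^{V_0}$, which is the natural copy of $B$.

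Every edge is covered by the rows and by the columns, and there are no other constraints. Therefore $C=B^\tau\otimes B$ in the sense of Definition~\ref{def:prodandsum}, which is $B\otimes B$ once $V_0$ is identified with $V_1$ via $\tau$.

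\textbf{Main obstacle.} The only delicate step is the bookkeeping in the column constraints: checking that $(1\,\gamma)$ acts trivially on $V_1$, and keeping track of which side of the composite $(1\,\gamma)\tau$ acts first. This uses the right-action convention of the paper, and is the same computation as~\eqref{eq:productres2}.
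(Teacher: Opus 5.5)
Your proof is correct and takes the route the paper intends. The paper omits this proof, saying only that it is similar to that of Theorem~\ref{thm:tensorproduct}; your computation is exactly that argument restricted to the single edge set $V_0\times V_1$. The transversal elements $(1\,\alpha)$ force every row into $B$, and the elements $(1\,\gamma)\tau$ force every column into $B^\tau$, which gives $C=B^\tau\otimes B$, i.e.\ $B\otimes B$ after identifying $V_0$ with $V_1$ via $\tau$.
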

\begin{remark}{\rm
    Given that we can obtain a direct product of a code with itself as a graph code, one might ask whether the direct sum of a code with itself may also be constructed as a graph code. It turns out that this is not possible in general: if the input code has length $n$ then a hypothetical graph used to yield a direct sum would need to have valency $n$ and $2n$ edges. For such a graph with vertex set $V$, the number of arcs would be $n|V|=4n$, so $|V|=4$ yielding $n\leq 3$.}
\end{remark}

\subsection{Petersen graph codes}\label{ex:petersen}
The smallest vertex-transitive graph which is not a Cayley graph is the Petersen graph. Thus, using the Petersen graph we can work out the shortest length highly symmetric graph code (over $\F:=\mathrm{GF}(2)$) which can not be witnessed as a Cayley code. 

Let $\Gamma$ be the Petersen graph. Then $\Aut(\Gamma)$ is the symmetric group $S_5$ and its action on vertices is its natural action on (unordered) pairs of points from $\{1,\dots,5\}$. Let $\alpha\in V\Gamma$. Then $G_\alpha^{\Gamma(\alpha)}$ is the symmetric group $S_3$ on three vertices. Thus, keeping Theorem~\ref{thm:AutomorphismsU} in mind, in order to construct a symmetric code we shall take as input a code $B\leq\F^3$ which is invariant under all permutations of three points. Up to equivalence, this leaves four input choices: the zero code, the unique $[3,1,3]_2$-code (the \emph{repetition code} of length three), the full space code, and the unique $[3,2,2]_2$-code (the \emph{augmentation code} of length three). Since there exist Cayley graphs on ten vertices of valency three (for example, the 5-prism), it is immediately clear that the zero code and the full-space code would yield codes which can also be witnessed as Cayley codes. Similarly, it is not hard to check that, if a repetition code is used as the input code, then the graph code construction for any connected graph will always output a repetition code. Therefore, to construct a symmetric graph code which is not a Cayley code, for a graph of valency 3 (such as the Petersen graph), we need to take $B$ to be the augmentation code of length 3.

Since $G_\alpha^{\Gamma(\alpha)}=\Aut(B)$, the construction is independent of our choice of transversal by Corollary~\ref{cor:Tindep}. Thus we may fix any transversal $\T$ of $G_\alpha$ in $G$ with $1\in\T$ and define $C=\mathbf{G}(G,\T,\alpha,B)$. Now,  $B^\perp$ has dimension one and thus it is single-orbit symmetric (see Section~\ref{sec:codedefs}). Therefore, $C=\mathbf{G}(G,\T,\alpha,B)$ is highly symmetric by Corollary~\ref{cor:symmetry}. By computer we work out that the constructed code $C$ has generator matrix $$
\begin{bmatrix}
1 & 0 & 1 & 0 & 0 & 0 & 0 & 1 & 0 &0 & 0 & 0 & 1 & 1 & 1 \\
0 & 1 & 1 & 0 & 0 & 0 & 0 & 0 & 0 & 0 & 1 & 0 & 1 & 0 & 1 \\
0 & 0 & 0 & 1 & 0 & 1 & 0 & 0 & 0 & 0 & 0 & 0 & 1 & 1 & 1 \\
0 & 0 & 0 & 0 & 1 & 1 & 0 & 0 & 0 & 0 & 1 & 1 & 0 & 1 & 0 \\
0 & 0 & 0 & 0 & 0 & 0 & 1 & 1 & 0 & 1 & 0 & 0 & 0 & 1 & 1 \\
0 & 0 & 0 & 0 & 0 & 0 & 0 & 0 & 1 & 1 & 1 & 1 & 0 & 0 & 1
\end{bmatrix}
.$$
Thus, by our discussion above, this $[15,6,5]_2$-code is a shortest highly symmetric graph code which is not a Cayley code (verified computationally) and up to equivalence is the unique such code satisfying $\overline{B}=B$ for the input code $B$ (with $\overline{B}$ as in Proposition~\ref{prop:Bbar}).

\subsection{Good digraph codes}\label{sec:goodcodes}
We wrap up this section with a result to demonstrate the viability of the digraph code construction as a tool to build good families of codes. Call a digraph $\Gamma$ a \emph{proper digraph} if there is some $(\alpha,\beta)\in A\Gamma$ with $(\beta,\alpha)\not\in A\Gamma$. We call a digraph code a \emph{proper digraph code} if it is constructed from a proper digraph.

The following theorem was inspired by the construction in \cite{LuPhiSar}. It will be proved over a series of lemmas. 
\begin{theorem}\label{thm:properdig}
    Set $p=4093$ (which is a prime) and let $q\equiv1\pmod{4}$ be prime such that $p\ne q$ and $p$ is a quadratic residue modulo $q$. Then there exists a proper digraph code $C$ of length $(p+1)q(q^2-1)/4$ with rate $r(C)\geq 2/(p+1)\approx0.0005$ and relative distance $$\delta(C)\geq \frac{213}{p+1}\cdot\left(\frac{213-2\sqrt{p}}{p+1-2\sqrt{p}}\right)\approx0.001.$$
    Therefore, there is an infinite family of good proper digraph codes.
\end{theorem}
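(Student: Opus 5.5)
The plan is to build a proper digraph from the Lubotzky--Phillips--Sarnak Ramanujan graphs and then apply the rate and distance bounds from Section~\ref{sec:stats}. Let $X=X^{p,q}$ be the LPS graph. Since $p$ is a quadratic residue modulo $q$, it is the Cayley graph $\mathrm{Cay}(\PSL_2(q),S)$, with $S$ the $p+1$ elements coming from quaternions of norm $p$. It is $(p+1)$-regular and non-bipartite, has $q(q^2-1)/2$ vertices, and is Ramanujan: every nontrivial adjacency eigenvalue has absolute value at most $2\sqrt p$.

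I would now orient $X$ into a proper digraph on the same vertices. The target is in- and out-valency $v=(p+1)/2$, which gives length $v\cdot q(q^2-1)/2=(p+1)q(q^2-1)/4$, as required. For this, split the inverse-closed set $S$ as $S=S^+\sqcup (S^+)^{-1}$. This is possible because $S$ has no involutions when $q\equiv1\pmod 4$. Take $\Gamma=\mathrm{Cay}(\PSL_2(q),S^+)$ with regular group $G=\PSL_2(q)$ and $\T=G$. Then $M+M^\top$ is exactly the adjacency matrix of $X$, so $\lambda(M+M^\top)\le 2\sqrt p$. Properness is immediate, since no element of $S^+$ has its inverse in $S^+$. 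Connectivity, which Proposition~\ref{prop:mixinglem} needs, is inherited from $X$: $S$ generates $\PSL_2(q)$, and in a finite group $S^+$ generates the same subgroup as $S$.

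Next I take the input codes $B_1=B_2=B\le\F^{v}$, with $v=2047$. I need rate $r(B)\ge \tfrac12+\tfrac1{p+1}$, so that Theorem~\ref{thm:ratebound} gives $r(C)\ge 2r(B)-1\ge 2/(p+1)$. I also need $\delta(B)\ge 213/v$, that is, $d(B)\ge 213$.

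With these inputs, Theorem~\ref{thm:distanceboundD} applies with $\lambda/v\le 4\sqrt p/(p+1)<2$. Substituting $\delta(B)=2\cdot213/(p+1)$ gives
$$\frac{\delta(B)}{2}\cdot\frac{\delta(B)-\lambda/v}{2-\lambda/v}\;\ge\;\frac{213}{p+1}\cdot\frac{426-4\sqrt p}{2(p+1)-4\sqrt p}=\frac{213}{p+1}\cdot\frac{213-2\sqrt p}{p+1-2\sqrt p}.$$
This is the stated bound, positive since $213>2\sqrt{4093}\approx128$. Letting $q$ range over the infinitely many admissible primes (Dirichlet's theorem plus quadratic reciprocity) gives the infinite good family, because the bounds do not depend on $q$.

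The main obstacle is exhibiting a concrete code $B$ of length $2047$ with dimension at least $1024$ and minimum distance at least $213$. I would first try a binary BCH code of length $2^{11}-1$: its designed distance is $2t+1$ and its dimension is at least $2047-11t$. This needs $11t\le 1023$, so $t\le 93$, giving distance only $187$, which is too small. I would therefore move to a larger field. Over $\mathrm{GF}(2^{11})$ a Reed--Solomon code of length $2047$ and dimension $1024$ has distance $1024$, which is far more than needed, and the final statement should allow any field. I would settle on the simplest such code, or on a Gilbert--Varshamov argument over a suitable field, and I would check whether the constant $213$ comes from the authors' specific choice of code.
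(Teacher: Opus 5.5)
Your proof follows the paper's step for step. You split $S=\widetilde S\sqcup\widetilde S^{-1}$ using the absence of involutions. The paper proves this fact, whereas you only assert it; the actual reason is that the LPS matrix has trace $2a_0\not\equiv 0\pmod q$. You then take the Cayley digraph with $\T=G$, observe that $M+M^\top$ is the adjacency matrix of $\mathrm{Cay}(G,S)$ so that $\lambda\le 2\sqrt p$, and apply Theorems~\ref{thm:ratebound} and~\ref{thm:distanceboundD} with $B_1=B_2=B$. Your rate and distance arithmetic agrees with the paper's. You should say explicitly that the bound in Theorem~\ref{thm:distanceboundD} is increasing in $\delta(B)$ and decreasing in $\lambda$. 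That is what licenses using only $d(B)\ge 213$ and $\lambda\le 2\sqrt p$.

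The one genuine difference is where $B$ comes from. The paper quotes a binary $[2047,1024,\ge 213]$ code from \cite[Table 1]{SunLiDing}, which is the origin of the constant $213$, so its family consists of binary codes. Your Reed--Solomon code over $\mathrm{GF}(2^{11})$ is valid for the statement as written: no field is fixed there, and the bounds of Section~\ref{sec:stats} do not depend on the field. It is self-contained and gives a much better distance, since $\delta(B)\approx 1/2$ yields $\delta(C)$ of about $0.056$. The cost is a $2^{11}$-letter alphabet. You do not need to leave $\mathrm{GF}(2)$ or consult a table, however. The Varshamov bound already gives a binary $[2047,1024,\ge 213]$ code, because $\sum_{i\le 211}\binom{2046}{i}\le 2^{2046\,H_2(211/2046)}\approx 2^{980}<2^{1023}$.

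One caveat applies to both your argument and the paper's. The LPS facts being quoted need $q$ large compared with $p$, for instance $q>2\sqrt p$. These facts are that $S$ consists of $p+1$ distinct elements of $\PSL_2(q)$, none trivial or an involution. The prime $q=17$ satisfies the stated hypotheses, since $4093\equiv 8^2\pmod{17}$, yet $|\PSL_2(17)|=2448<p+1$. This excludes only finitely many $q$, so the infinite good family is unaffected.
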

Our first lemma simply guarantees the existence of infinitely many primes $q$ as in the statement of Theorem~\ref{thm:properdig}; it follows immediately from the Chinese Remainder Theorem and Dirichlet's Theorem on primes in arithmetic progressions.
\begin{lemma}\label{lem:properdig}
    Let $p=4093$. There are infinitely many primes $q\equiv1\pmod{4}$ such that $p$ is a quadratic residue modulo $q$.
\end{lemma}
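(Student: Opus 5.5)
We need infinitely many primes $q$ with two conditions: $q\equiv 1\pmod 4$, and $\left(\frac{p}{q}\right)=1$ for $p=4093$. The plan is to turn both conditions into a single congruence condition on $q$ modulo $4p$, and then apply Dirichlet's Theorem.

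First we check that $p=4093$ is prime, which is a finite check, and note that $p\equiv 1\pmod 4$ since $4093=4\cdot 1023+1$. For an odd prime $q\ne p$, quadratic reciprocity gives $\left(\frac{p}{q}\right)=\left(\frac{q}{p}\right)$, because $p\equiv 1\pmod 4$ and so the sign factor $(-1)^{\frac{p-1}{2}\cdot\frac{q-1}{2}}$ equals $1$. The condition ``$p$ is a quadratic residue modulo $q$'' is therefore equivalent to ``$q$ is congruent modulo $p$ to a nonzero square''. The cleanest choice is $q\equiv 1\pmod p$, since $1$ is a square.

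By the Chinese Remainder Theorem, since $\gcd(4,p)=1$, the two conditions $q\equiv 1\pmod 4$ and $q\equiv 1\pmod p$ together are equivalent to $q\equiv 1\pmod{4p}$. Because $\gcd(1,4p)=1$, Dirichlet's Theorem on primes in arithmetic progressions gives infinitely many primes $q\equiv 1\pmod{4p}$. Each such $q$ is distinct from $p$ and satisfies $q\equiv 1\pmod 4$. By the reciprocity step, $\left(\frac{p}{q}\right)=\left(\frac{1}{p}\right)=1$, so $p$ is a quadratic residue modulo $q$.

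None of these steps is really an obstacle. The only point needing care is the reciprocity sign, which relies on $p\equiv 1\pmod 4$. Even without that, the condition $q\equiv 1\pmod 4$ would already force the sign to be $+1$, so the argument is robust. If one preferred to avoid reciprocity, an alternative would be $q\equiv 1\pmod{4p}$ combined with $\mathbb{F}_q^\times$ being cyclic of order divisible by $p$. But that does not directly give that $p$ is a square mod $q$, so reciprocity is the right tool here.
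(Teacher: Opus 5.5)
Your proof is correct and takes the same approach as the paper. The paper just cites the Chinese Remainder Theorem and Dirichlet's Theorem; you also make explicit the quadratic reciprocity step that the paper leaves implicit (valid because $p\equiv 1\pmod 4$, or equally because $q\equiv 1\pmod 4$), and your choice $q\equiv 1\pmod{4p}$ meets both conditions and guarantees $q\neq p$.
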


The graphs we shall use are Cayley digraphs of the group $\PSL_2(q),$ and they are based on a wonderful construction of Lubotzky, Phillips, and Sarnak~\cite{LuPhiSar}. Fix the set-up of Theorem~\ref{thm:properdig}, and let $i$ be an integer with $i^2\equiv -1\pmod{q}$. In~\cite{LuPhiSar}, it is shown that $G=\PSL_2(q)=\SL_2(q)/Z$ has an inverse-closed set $S$ of generators (non-trivial cosets of $Z$) of the form $$Z\begin{bmatrix}
    a_0+ia_1&a_2+ia_3\\-a_2+ia_3&a_0-ia_1
\end{bmatrix}$$
 where $\mathrm{GF}(q)$ is identified with the integers modulo $q$ so that the $a_j$ are integers satisfying $a_0^2+a_1^2+a_2^2+a_3^2=p$ with $a_0>0$ and odd, and $a_j$ even for $j=1,2,3$. They show in~\cite{LuPhiSar} that there are exactly $p+1$ such quadruples $(a_0,a_2,a_2,a_3)$ and thus $\mathrm{Cay}(G,S)$ has valency $p+1$ and $q(q^2-1)/2$ vertices.
\begin{lemma}
    The set $S$ contains no involutions.
\end{lemma}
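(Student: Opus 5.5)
We show directly that no element of $S$ has order $2$ in $\PSL_2(q)$. Take a generator $Z\mathfrak{x}$ with
$$\mathfrak{x}=\begin{bmatrix} a_0+ia_1&a_2+ia_3\\-a_2+ia_3&a_0-ia_1\end{bmatrix},$$
where the $a_j$ are integers (read modulo $q$) with $a_0^2+a_1^2+a_2^2+a_3^2=p$, $a_0>0$ odd, and $a_1,a_2,a_3$ even. As in~\cite{LuPhiSar}, $\mathfrak{x}$ is only an element of $\SL_2(q)$ after rescaling. Its determinant is
$$(a_0+ia_1)(a_0-ia_1)-(a_2+ia_3)(-a_2+ia_3)=a_0^2+a_1^2+a_2^2+a_3^2=p,$$
which is a nonzero square modulo $q$ by hypothesis. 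So the $\SL_2(q)$ representative is $\mathfrak{y}=\mathfrak{x}/\sqrt{p}$ for a chosen square root $\sqrt{p}\in\mathrm{GF}(q)$.

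The key fact is the criterion for involutions. The coset $Z\mathfrak{y}$ is an involution in $\PSL_2(q)$ exactly when $\mathfrak{y}^2=\pm I$ and $\mathfrak{y}\neq\pm I$. For $\mathfrak{y}\in\SL_2(q)$ we have $\mathfrak{y}^2=\mathrm{tr}(\mathfrak{y})\mathfrak{y}-I$, and $q$ is odd. Hence:
\begin{itemize}
\item $\mathfrak{y}^2=-I$ holds if and only if $\mathrm{tr}(\mathfrak{y})=0$;
\item $\mathfrak{y}^2=I$ forces $\mathrm{tr}(\mathfrak{y})\mathfrak{y}=2I$, so $\mathfrak{y}$ is scalar, i.e.\ $\mathfrak{y}=\pm I$, which gives the trivial element.
\end{itemize}
Therefore $Z\mathfrak{y}$ is an involution if and only if $\mathrm{tr}(\mathfrak{y})=0$, that is, $\mathrm{tr}(\mathfrak{x})=2a_0\equiv 0\pmod q$. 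Since $q$ is odd, this is equivalent to $q\mid a_0$.

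It remains to rule out $q\mid a_0$. We have $a_0$ odd and positive, so $a_0\neq 0$, and $a_0^2\le p=4093$ gives $1\le a_0\le 63$. So $q\mid a_0$ is impossible whenever $q>63$. For small $q$ this bound alone does not suffice, and I expect this to be the main obstacle. First I would check whether the setup (or~\cite{LuPhiSar}) already assumes $q>\sqrt{p}$, or $q$ large, as needed for $\mathrm{Cay}(G,S)$ to be the simple Ramanujan graph. If so, the argument is complete.

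Otherwise, I would handle the finitely many primes $q\equiv1\pmod 4$ with $q\le 61$ and $p$ a square modulo $q$ by a direct check. For each such $q$, the question is whether some representation of $p$ as a sum of four squares has $q\mid a_0$. This is a finite computation, which I would carry out by computer. If the check fails for some small $q$, I would add the assumption $q>\sqrt{p}$ to the statement of Theorem~\ref{thm:properdig}. This is harmless there, because Lemma~\ref{lem:properdig} still supplies infinitely many admissible $q$.
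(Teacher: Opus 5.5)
Your trace criterion is correct, and it takes a cleaner route than the paper. The paper squares $x$ directly and notes that the off-diagonal entries $2a_0(\pm a_2+ia_3)$ must vanish. It then cancels $2a_0$, and later $4ia_0$, on the grounds that $a_0>0$, and concludes that $x$ is scalar. That cancellation is valid in $\mathrm{GF}(q)$ only when $q\nmid a_0$. This is exactly the case your Cayley--Hamilton identity $\mathfrak{y}^2=\mathrm{tr}(\mathfrak{y})\mathfrak{y}-I$ isolates as the \emph{only} source of involutions. Your approach therefore gives an exact criterion: $Z\mathfrak{y}$ is an involution if and only if $q\mid a_0$. It also exposes a hypothesis that the paper's computation uses silently, since the paper's setup states no lower bound on $q$.

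Your concern about small $q$ is justified, because the finite check you propose fails. The primes $q\equiv 1\pmod 4$ below $64$ are $5,13,17,29,37,41,53,61$. Of these, $p=4093$ is a quadratic residue only for $q=17$ and $q=29$, and in both cases $S$ contains an involution. For $q=17$, take $51^2+36^2+14^2+0^2=4093$ with $17\mid 51$. For $q=29$, take $29^2+56^2+10^2+4^2=4093$. For instance, with $q=17$ and $i=4$, the first quadruple gives
\[
x\equiv\begin{bmatrix}8&-3\\3&-8\end{bmatrix}\pmod{17},\qquad \mathrm{tr}(x)=0,\qquad x^2=4I,
\]
so $Zx$ has order $2$. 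Hence the lemma is false as stated for these two values of $q$. A correct version needs an extra hypothesis such as $q>63$, which is implied by $q>2\sqrt{p}$; the latter also guarantees that the $p+1$ generators are distinct. Under that hypothesis your argument is a complete proof.

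Adding the hypothesis costs nothing in Theorem~\ref{thm:properdig}, since Lemma~\ref{lem:properdig} still supplies infinitely many admissible $q$. Some such largeness condition is needed anyway for the paper's claim that $S$ consists of $p+1$ distinct elements. For example, $|\PSL_2(17)|=2448<p+1$.
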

\begin{proof}
    Fix $a_0,a_1,a_2,$ and $a_3$ as described and let $x$ be the corresponding matrix. Suppose that $(Zx)^2=Z$. we shall show that $x\in Z$. First $$\begin{bmatrix}
    a_0+ia_1&a_2+ia_3\\-a_2+ia_3&a_0-ia_1
\end{bmatrix}^2=\begin{bmatrix}
    a_0^2-a_1^2+2ia_0a_1-a_2^2-a_3^2&2a_0(a_2+ia_3)\\
    2a_0(-a_2+ia_3)&a_0^2-a_1^2-2ia_0a_i-a_2^2-a_3^2
\end{bmatrix}.$$
The right hand side is scalar only if $2a_0(a_2+ia_3)=0=2a_0(-a_2+ia_3)$, but $a_0>0$ whence $ia_3=-a_2=-ia_3$, so $a_2=a_3=0.$ Additionally, since $(Zx)^2=Z$ we deduce that $a_0^2-a_1^2+2ia_0a_1=a_0^2-a_1^2-2ia_0a_1$, and thus $4ia_0a_1=0$. Therefore, since $q$ is odd and $a_0\ne 0$, it follows that $a_1=0$. Therefore, $$Zx=Z\begin{bmatrix}
    a_0&0\\0&a_0
\end{bmatrix}=Z,$$ 
that is, this is the trivial coset and hence does not lie in $S$. Thus $S$ contains no involutions.
\end{proof}

\begin{corollary}\label{cor:properdig}
    Set $p=4093$ and let $q\equiv1\pmod{4}$ be prime such that $p\ne q$ and $p$ is a quadratic residue modulo $q$. The group $G=\PSL_2(q)$ has a set $\widetilde{S}\subseteq S$ of $(p+1)/2$ generators. Let $\Gamma$ be the Cayley digraph $\mathrm{Cay}(G,\widetilde{S})$ and let $M$ be the adjacency matrix of $\Gamma$. Then
    \begin{enumerate}
        \item[(1)] $\Gamma$ has out-valency $(p+1)/2$; and
        \item[(2)] with $\hat{M}:=M+M^\top$ and $\lambda(\hat{M})$ as defined in Section~\ref{sec:stats}, $\lambda(\hat{M})\leq 2\sqrt{p}$.
    \end{enumerate}
\end{corollary}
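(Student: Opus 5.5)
The plan is to choose $\widetilde{S}$ as a set of representatives of the inverse pairs in $S$. First, since $S$ is inverse-closed and, by the preceding lemma, contains no involutions (and $1\notin S$), the map $\s\mapsto\s^{-1}$ is a fixed-point-free involution on $S$. So $S$ splits into $(p+1)/2$ pairs $\{\s,\s^{-1}\}$. Let $\widetilde{S}$ contain exactly one element from each pair. Then $\widetilde{S}\cap\widetilde{S}^{-1}=\emptyset$ and $\widetilde{S}\cup\widetilde{S}^{-1}=S$.

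Next I check that $\widetilde{S}$ generates $G$. The subgroup $\langle\widetilde{S}\rangle$ contains every $\s^{-1}$ with $\s\in\widetilde{S}$, hence contains $S$. Since $S$ generates $G=\PSL_2(q)$ by~\cite{LuPhiSar}, we get $\langle\widetilde{S}\rangle=G$. Part (1) is then immediate: in $\mathrm{Cay}(G,\widetilde{S})$ the out-neighbours of $\g$ are the elements $\s\g$ with $\s\in\widetilde{S}$, and these are distinct. So the out-valency is $|\widetilde{S}|=(p+1)/2$, and the in-valency is the same.

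For (2), the key observation is that $M+M^\top$ is exactly the adjacency matrix of the undirected Cayley graph $\mathrm{Cay}(G,S)$. The $(\g,\h)$ entry of $M$ is $1$ iff $\h\g^{-1}\in\widetilde{S}$, and the entry of $M^\top$ is $1$ iff $\g\h^{-1}\in\widetilde{S}$, i.e.\ iff $\h\g^{-1}\in\widetilde{S}^{-1}$. Because $\widetilde{S}$ and $\widetilde{S}^{-1}$ are disjoint with union $S$, the sum $\hat{M}$ is a $0/1$ matrix whose $(\g,\h)$ entry is $1$ iff $\h\g^{-1}\in S$. This is precisely the adjacency matrix of $\mathrm{Cay}(G,S)$, which is $(p+1)$-regular.

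The final step, which is the main (imported) input, is to invoke the Ramanujan property from~\cite{LuPhiSar}. When $p$ is a quadratic residue mod $q$, the graph $\mathrm{Cay}(\PSL_2(q),S)$ is non-bipartite, and every eigenvalue other than $p+1$ has absolute value at most $2\sqrt{p}$. Since the graph is connected and non-bipartite, the eigenvalue $p+1$ is simple and $-(p+1)$ does not occur. Hence $\lambda(\hat{M})$, the second largest absolute value, is at most $2\sqrt{p}$. The only care needed is to confirm that the hypotheses of~\cite{LuPhiSar} ($p,q$ distinct odd primes, $q\equiv1\pmod 4$, $p$ a quadratic residue mod $q$) match ours. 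They do.
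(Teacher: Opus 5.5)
Your proposal is correct and follows the paper's proof essentially step for step. You use the no-involution lemma to split $S$ as $\widetilde{S}\sqcup\widetilde{S}^{-1}$, observe that $\hat{M}=M+M^\top$ is the adjacency matrix of $\mathrm{Cay}(G,S)$, and import the Ramanujan bound from \cite{LuPhiSar}, spelling out the generation, valency and matrix-entry checks that the paper leaves implicit. One small correction to your hypothesis check: \cite{LuPhiSar} also requires $p\equiv 1\pmod 4$, which is missing from your list, but it holds here since $4093=4\cdot 1023+1$.
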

\begin{proof}
    Since $S$ has no involutions, we have a partition $\widetilde{S}\sqcup\widetilde{S}^{-1}=S$. Since $S$ generates the finite group $G$, so too does $\widetilde{S}$ (and $\widetilde{S}^{-1})$. Claim (1) now follows as $$p+1=|S|=|\widetilde{S}|+|\widetilde{S}^{-1}|=2|\widetilde{S}|.$$
    For the second claim, observe that since $S$ has no involutions, $\hat{M}$ is the adjacency matrix of $\mathrm{Cay}(G,S)$. The result now follows from~\cite[Theorem 4.1]{LuPhiSar}, which bounds $\lambda(\hat{M}) \leq 2\sqrt p$.
\end{proof}
We now have enough information for the proof of Theorem~\ref{thm:properdig}.
\begin{proof}[Proof of Theorem~\ref{thm:properdig}]
    By~\cite[Table 1]{SunLiDing}, there exists a binary linear code $B$ of length $(p+1)/2=2047$ with dimension $1024$ and minimum distance at least 213. Let $q\equiv1\pmod{4}$ be any prime distinct from $p$ such that $p$ is a quadratic residue modulo $q$. Since $\Gamma=\mathrm{Cay}(G,\widetilde{S})$ (as defined in Corollary~\ref{cor:properdig}), a vertex-transitive digraph with out-valency $(p+1)/2$, we can define $C:=\D_\Gamma(G,G,1,B,B)$. 
    
    By Theorem~\ref{thm:ratebound}, $r(C)\geq 2r(B)-1=(2\cdot1024 -2047)/2047=1/2047=2/(p+1)$, as claimed. Similarly, by Theorem~\ref{thm:distanceboundD} and Corollary~\ref{cor:properdig}, $$\delta(C)\geq \frac{213}{p+1}\cdot\left(\frac{\frac{426}{p+1}-\frac{4\sqrt{p}}{p+1}}{2-\frac{4\sqrt{p}}{p+1}}\right)=\frac{213}{p+1}\cdot\left(\frac{213-2\sqrt{p}}{p+1-2\sqrt{p}}\right),$$ as desired.

    Finally, the fact that these codes form a good family follows from the fact that their parameters $r$ and $\delta$ are always bounded away from 0 (independent of $q$), and that this construction can be achieved by infinitely many $q$, by Lemma~\ref{lem:properdig}.
\end{proof}
%\section{Questions to explore/remaining tasks}
%\begin{itemize}
    %\item include example where nonzero $\overline{B}$ is properly contained in $B$.
    %\item notation:
    %$\mathfrak{g,h,t,s}$ for group elements, latin for codes, greek for vertices, $v$(?) for valency, mathcal for transversals latin for group name.
    %\item 
    %remove (weak) digraph code but comment on why we're not doing it.
 %   \item 
  %  define graph code first, show how generalises cayley codes, explain this is motivation.
   % \item flag final section which has examples with nice structure bipartite examples. maybe put bipartite thing in intro 
    %\item Petersen graph example (smallest non-cayley VT)
    %\item investigate what happens for SD when one of the input codes is well understood.
    %\item Can one find two non-isomorphic 3-valent graphs on same number of vertices where, given an input code not in $0,111,F^3$ get isomorphic nonzero graph codes $C$. Analogous question for digraphs but not doubled up arcs.{\color{red} Yes I think I have a pair on 20 vertices with the augmentation code. The groups are $D_{40}$ and $5\cn4$.}
    %\item do calculation checking if really subdirect product for the orbital decomposition. If not give example, if cant find then pose as open problem.
 %   \item worked example show ramanujan graphs have no involutions in gen set(matrix squares to scalar implies matrix is scalar ie is 1), use to show the digraph version is compatible with our construction. (p264 Lubotzky Phillips Sarnak) --- use code from the paper with m=11 (so $p=2^{12}-3$).
%\end{itemize}
\bibliographystyle{amsplain}
\bibliography{gcodebib}
\end{document}